\newtheorem{thm}{Theorem}
\newtheorem{lem}{Lemma}
\newtheorem{cor}{Corollary}
\newtheorem{conjecture}{Conjecture}
\newtheorem{rem}{Remark}
\newtheorem{defn}{Definition}
\newtheorem{example}{Example}
\DeclareMathOperator{\conv}{conv}
\title{The Arbitrarily Varying Wiretap Channel --\\
Secret Randomness, Stability and Super-Activation}
\author{J. N\"otzel$^{(1,3)}$, M. Wiese$^{(2)}$, H. Boche$^{(1)}$\\
\scriptsize{Electronic addresses: janis.noetzel@tum.de, moritzw@kth.se, boche@tum.de}
\vspace{0.2cm}\\
{\footnotesize $^{(1)}$ Lehrstuhl f\"ur Theoretische Informationstechnik, Technische Universit\"at M\"unchen,}\\
{\footnotesize 80290 M\"unchen, Germany.}
\vspace{0.2cm}\\
{\footnotesize $^{(2)}$ ACCESS Linnaeus Center, KTH Royal Institute of Technology, Stockholm, Sweden.}
\vspace{0.2cm}\\
\footnotesize{$^{(3)}$F\'{\i}sica Te\`{o}rica: Informaci\'{o} i Fen\`{o}mens Qu\`{a}ntics, Universitat Aut\`{o}noma de Barcelona,}\\
\footnotesize{ES-08193 Bellaterra (Barcelona), Spain.}
}
\begin{document}
\maketitle

\begin{abstract}
We define the common randomness assisted capacity of an arbitrarily varying channel (AVWC) when the Eavesdropper is kept ignorant about the common randomness. We prove a multi-letter capacity formula for this model. We prove that, if enough common randomness is used, the capacity formula can be given a single-shot form again.\\
We then consider the opposite extremal case, where no common randomness is available, and derive the capacity. It is known that the capacity of the system can be discontinuous under these circumstances. We prove here that it is still \emph{stable} in the sense that it is continuous around its positivity points. We further prove that discontinuities can only arise if the legal link is symmetrizable and characterize the points where it is positive. These results shed new light on the design principles of communication systems with embedded security features.\\
At last we investigate the effect of super-activation of the message transmission capacity of AVWCs under the average error criterion. We give a complete characterization of those AVWCs that may be super-activated. The effect is thereby also related to the (conjectured) super-activation of the common randomness assisted capacity of AVWCs with an eavesdropper that gets to know the common randomness.\\
Super-activation is based on the idea of ``wasting'' a few bits of non-secret messages in order to enable provably secret transmission of a large bulk of data, a concept that may prove to be of further importance in the design of communication systems. In this work we provide further insight into this phenomenon by providing a class of codes that is capacity-achieving and does not convey any information to the Eavesdropper.
\end{abstract}
\setcounter{tocdepth}{2}
\tableofcontents
\begin{section}{Introduction}
Just like in our previous work \cite{wiese-noetzel-boche-I}, we investigate a model on the intersection between the two areas of secrecy and robust communication in information theory: the arbitrarily varying wiretap channel (AVWC). The communication scenario is depicted in Figure \ref{figurethree}.\\
In this model, a sender (Alice) would like to send messages to a legitimate receiver (Bob) over a noisy channel. Involved into the scenario are
two other parties: a jammer (James) who can actively influence the channel and a second but illegitimate receiver (Eve). Alice's and Bob's goal
is to achieve reliable and secure communication:\\
First, Bob should be able to decode Alice's messages with high probability (with respect to the average error criterion) no matter what the
input of James is.\\
Second, the mutual information between the messages and Eve's output should be close to zero. Again, this has to be the case no matter what the
input of James is.\\
Like in our previous work, we add the option of Alice and Bob having access to perfect copies of the outcomes of a random experiment $\mathcal
G$ (a source of common randomness). While in our previous work \cite{wiese-noetzel-boche-I} we considered the case where Eve gets an exact copy
of the outcomes received by Alice and Bob, we now extend our study to the case where Eve remains completely ignorant.\\
The only party which has no access to $\mathcal G$ in all the scenarios we study is James. We call the capacities which we derive from the two scenarios the ``correlated random coding mean secrecy capacity'' if Eve has information about $\mathcal G$ and ``secret common randomness assisted secrecy capacity'' if Eve has no information about it. When no common randomness is present at all, we speak of the ``uncorrelated coding secrecy capacity''. For the sake of an extended discussion of secrecy criteria we also define a ``capacity with public side-information'' which is the data transmission benchmark for systems where Eve gets to know a part of the messages.\\
From now on, we use the label $C_{\mathrm{S}}$ for the uncorrelated coding secrecy capacity (when no shared randomness is available between Alice and Bob) and $C^{\mathrm{mean}}_{\mathrm{S,ran}}$ for the correlated random coding mean secrecy capacity (just as in our previous work \cite{wiese-noetzel-boche-I} we restrict attention to the case where common randomness is used. To the reader which is not familiar with that work we apologize, as some of our results rely on that previous work). The secret common randomness assisted secrecy capacity is labelled $C_{\mathrm{key}}$ and the capacity with public side information $C_{\mathrm{pp}}$.
\begin{figure}[hhh]
\begin{center}
\begin{tikzpicture}
\node (A) at (-0.2,1) {Alice};
\node (A-in) at (1,0) {$\blacksquare$};
\node (J-in) at (1,-0.3) {$\blacksquare$};
\draw[line width=1pt] (0.8,0.2) rectangle (1.2,-0.5);
\node (G) at (3.2,2.2) {$\mathcal G$};
\node (E) at (3,-1) {Eve};
\node (J) at (0,-1) {James};
\node (B) at (2.5,1) {Bob};
\draw[->, line width=2pt, shorten >=-8pt, color=gray] (A) to (A-in);
\draw[->] (G) to (A);
\draw[->] (G) to (B);
\draw[->] (G) to (E);
\draw[->, line width=2pt, shorten <=0.5pt, shorten >=-2pt, -triangle 45] (1.2,0) to[bend angle = -1, bend left] (B);

\draw[->, line width=2pt, shorten <=0.5pt, shorten >=-2pt, -triangle 45] (1.2,0) to[bend angle = -1, bend left] (E);
\fill[fill=black] (1.2,0.1) to[bend angle = 1, bend left] (E) to[bend angle = 1, bend right] (1.2,-0.4) to (1.2,0.1);
\fill[fill=black] (1.2,0.1) to[bend angle = 1, bend left] (B) to[bend angle = 1, bend right] (1.2,-0.4) to (1.2,0.1);

\path[->, line width=2pt, shorten >=-6pt, color=gray] (J) edge[bend left] (J-in);
\draw[->, line width=2pt, color=gray] (J) to[bend angle =5, bend right] (E);
\end{tikzpicture}
\begin{tikzpicture}
\node at (0,0) {};
\node at (1,0) {};
\end{tikzpicture}
\begin{tikzpicture}
\node (A) at (-0.2,1) {Alice};
\node (A-in) at (1,0) {$\blacksquare$};
\node (J-in) at (1,-0.3) {$\blacksquare$};
\draw[line width=1pt] (0.8,0.2) rectangle (1.2,-0.5);
\node (G) at (3.2,2.2) {$\mathcal G$};
\node (E) at (3,-1) {Eve};
\node (J) at (0,-1) {James};
\node (B) at (2.5,1) {Bob};
\draw[->, line width=2pt, shorten >=-8pt, color=gray] (A) to (A-in);
\draw[->] (G) to (A);
\draw[->] (G) to (B);
\draw[->, line width=2pt, shorten <=0.5pt, shorten >=-2pt, -triangle 45] (1.2,0) to[bend angle = -1, bend left] (B);

\draw[->, line width=2pt, shorten <=0.5pt, shorten >=-2pt, -triangle 45] (1.2,0) to[bend angle = -1, bend left] (E);
\fill[fill=black] (1.2,0.1) to[bend angle = 1, bend left] (E) to[bend angle = 1, bend right] (1.2,-0.4) to (1.2,0.1);
\fill[fill=black] (1.2,0.1) to[bend angle = 1, bend left] (B) to[bend angle = 1, bend right] (1.2,-0.4) to (1.2,0.1);

\path[->, line width=2pt, shorten >=-6pt, color=gray] (J) edge[bend left] (J-in);
\draw[->, line width=2pt, color=gray] (J) to[bend angle =5, bend right] (E);
\end{tikzpicture}
\caption{Secure coding schemes for correlated random coding (left) and secret common randomness assisted coding (right)} \label{figurethree}
\end{center}
\end{figure}
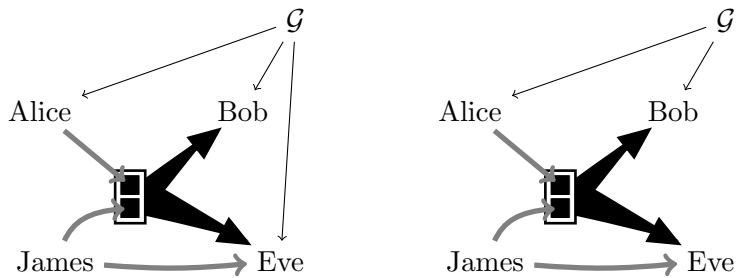
As is depcited in Figure \ref{figurethree}, it is of vital importance that Eve cannot communicate to James.\\
We give a unified treatment of the subject which allows us to observe the behaviour of the system while we change the amount of and the access to the common randomness: for common randomness set to zero one observes instabilities of the system (in the sense that the capacity is not a continuous function of the channel parameters anymore) and the effect of super-activation. Roughly speaking, two channels show super-activation when each of them cannot be used for a certain task (e.g. reliable communication under average error, maximal error or zero error criterion or, as in this work, secure communication) alone, but if a joint use is allowed the task becomes feasible. A more precise formulation is given in equations \eqref{eqn:super-activation-1} to \eqref{eqn:super-activation-3}, while the definition is part of Definition \ref{def:super-activation} which is followed by a short discussion of super-activation in the scenario treated here. If common randomness is used between Alice and Bob but Eve gets to know it as well, it is known from the results in \cite{wiese-noetzel-boche-I} that already small (a logarithmic number of bits, counted in block-length) amounts of common randomness resolve the instabilities (in the sense that the correlated random coding capacity is a continuous function of the channel parameters). It remains unknown whether super-activation is possible when common randomness is present, and this question is the content of Conjecture \ref{conjecture}.\\
The full advantage from common randomness can only be gained if Eve is kept ignorant of it. If common randomness is used at a nonzero rate, this rate adds linearly to the capacity of the system. All the capacity formulas which can be proven to hold in the various nontrivial scenarios are given by multi-letter formulae. Only if the common randomness exceeds the maximal amount of information which can be leaked to Eve do we recover a single-letter description. At that point, the linear increase in capacity stops:
\begin{figure}[hhh]
\begin{center}
\begin{tikzpicture}
\draw[->,line width=2pt,color=black] (0,-0.1) to (0,3);
\draw[->,line width=2pt,color=black] (-0.1,0) to (5,0);
\node (C) at (0,3.5) {$C_{\mathrm{key}}(\mathfrak W,\mathfrak V,G)$};
\node (G) at (5.5,0) {$G$};
\node (C_t) at (-1.3,2) {$C_{\mathrm{S,ran}}^{\mathrm{mean}}(\mathfrak W,\mathfrak T)$};
\node (C_m) at (-1.3,0) {$C_{\mathrm{S,ran}}^{\mathrm{mean}}(\mathfrak W,\mathfrak V)$};
\draw[-,line width=1pt,color=black] (0,0) to (2,2);
\draw[-,line width=1pt,color=black] (2,2) to (4.5,2);
\draw[-,line width=1pt,color=black] (-0.1,2) to (0.1,2);
\draw[-,line width=1pt,color=black] (2,0.1) to (2,-0.1);
\node (Null) at (0,-0.5) {$0$};
\node (X) at (2,-0.5) {$X$};
\end{tikzpicture}
\end{center}
\caption{Scaling of secrecy capacity with the rate $G$ of secret common randomness. It holds $X=C_{\mathrm{S,ran}}^{\mathrm{mean}}(\mathfrak W,\mathfrak T)-C_{\mathrm{S,ran}}^{\mathrm{mean}}(\mathfrak W,\mathfrak V)$, where $\mathfrak T$ is defined below after equation \eqref{eqn:first-equation}. \label{figrefour}}
\end{figure}
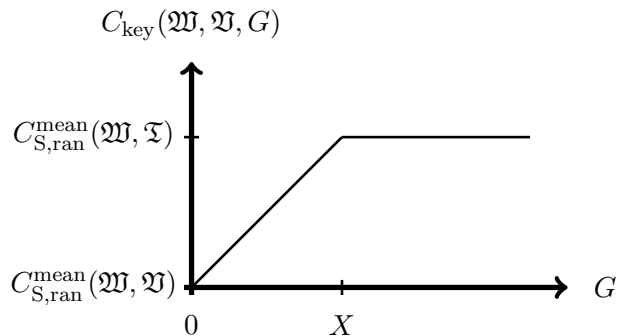
In order to carve out these principal features of secure data transmission in a both exact and elegant mathematical framework we let the number $n$ of channel uses go to infinity.\\
We will now sketch the connections of our work with some of the highlights and landmarks in the earlier literature. While we do not attempt to work in full rigour in the introduction, we will nonetheless gradually introduce some mathematical
notation.\\
The probabilistic law which governs the transmission of codewords sent by Alice and jamming signals sent by James to Eve and Bob is, for $n$
channel uses, given by
\begin{align}\label{eqn:first-equation}
w^{\otimes n}(y^n|x^n,s^n)v^{\otimes n}(z^n|x^n,s^n)=\prod_{i=1}^nw(y_i|x_i,s_i)v(z_i|x_i,s_i).
\end{align}
Here, $s^n=(s_1,\ldots,s_n)$ are the inputs of James, $x^n=(x_1,\ldots,x_n)$ those of Alice and $z^n=(z_1,\ldots,z_n)$ the outputs of Eve,
while $y^n=(y_1,\ldots,y_n)$ are received by Bob. All letters are assumed to be taken from finite alphabets. The action of the channel is, for
each natural number $n$ and therefore also as a whole, completely described by the pair $(W,V)$ of matrices of conditional probabilities and this could rightfully be called an
interference channel with non-cooperating senders and receivers. With respect to the historical development we will nonetheless prefer to use a description via the pair $(\mathfrak W,\mathfrak V)=((w(\cdot|\cdot,s))_{s\in\mathcal S},(v(\cdot|\cdot,s))_{s\in\mathcal S})$ and the label ``AVWC''.\\
This model has two important restrictions which are widely known: The case where $\mathfrak V$ does not convey any information about either one of its inputs is the arbitrarily varying channel (AVC). We will denote this special channel by $\mathfrak T=(T)$, where $t(z|x,s)=\frac{1}{|\mathcal Z|}$ for all $z$, $x$ and $s$. Before we give some credit to the historical developments in the area, we would like to emphasize that the notion introduced in (\ref{eqn:first-equation}) extends to products of arbitrary channels from $\mathcal X_1$ to $\mathcal Y_1$ and $\mathcal X_2$ to $\mathcal Y_2$, let them be denoted $W_1$ and $W_2$ with respective transition probability matrices $(w_1(y|x))_{x\in\mathcal X,y\in\mathcal Y}$ and $(w_2(y|x))_{y\in\mathcal Y,x\in\mathcal X}$ as follows: The transition probability matrix of $W_1\otimes W_2$ is defined by $w(y_1,y_2|x_1,x_2):=w_1(y_1|x_1)\cdot w_2(y_2|x_2)$ (for all $x_1\in\mathcal X_1$, $x_2\in\mathcal X_2$, $y_1\in\mathcal Y_1$ and $y_2\in\mathcal Y_2$).\\
The notation then carries over to arbitrarily varying channels, where we set
\begin{align}\label{eqn:def-of-tensor-power}
\mathfrak W\otimes\mathfrak W':=(W_s\otimes W_{s'}')_{s\in\mathcal S,s'\in\mathcal S'}.
\end{align}
The model of an arbitrarily varying channel has been introduced by Blackwell, Breiman and Thomasian \cite{bbt-avc} in 1960. They derived a formula for the capacity of an AVC with shared randomness-assisted codes under the average error criterion, and we will restrict our discussions to this criterion, although important nontrivial results concerning message transmission under the maximal error criterion have been obtained e.g. in \cite{kiefer-wolfowitz,ahlswede-note}. In \cite{ahlswede-note} it was shown that an explicit formula for the (weak) capacity of an AVC under maximal error criterion would imply a formula for the zero-error capacity of a discrete memoryless channel. The latter problem is open now for half a century.\\
In \cite{ahlswede-elimination}, Ahlswede developed an elegant and streamlined method of proof that, together with the random coding results of \cite{bbt-avc}, enabled him to prove the following: the capacity of an AVC (under the average error probability criterion) is either zero or equals its random coding capacity. This dichotomic behaviour is extended in the present work to the case where there is a (nontrivial) eavesdropper that has access to the shared randomness.\\
After the discoveries made in \cite{ahlswede-elimination}, an important open question was, when exactly the deterministic capacity with
vanishing average error is equal to zero, and in some sense the corresponding question for the AVWC is left open by us as well. In 1985, a first step towards a solution was made by Ericson \cite{ericson}, who came up with a sufficient condition that was proven to be necessary by Csiszar and Narayan \cite{csiszar-narayan}
in 1989.\\
The condition which was developed by Ericson, called \emph{symmetrizability}, reads as follows: An AVC $\mathfrak W$ is called symmetrizable if
there is a set $(u(\cdot|x))_{x\in\mathcal X}$ of probability distributions on $\mathcal S$ such that for every $x,x'\in\mathcal X$ and $y\in\mathcal Y$ we have
\begin{align}\label{eqn:definition-of-symmetrizability}
\sum_{s\in\mathcal S}u(s|x)w(y|x',s)=\sum_{s\in\mathcal S}u(s|x')w(y|x,s).
\end{align}
An arbitrarily varying channel $\mathfrak W$ that is symmetrizable cannot be used for reliable transmission of messages, as any input $x$ can, at least in an average sense, be made to look as if it had been another input $x'$. An example for a symmetrizable AVC that cannot be used for reliable transmission of messages just by using one encoder-decoder pair but still has a positive capacity for correlated random codes was given in \cite{bbt-avc} and later used again in \cite[Example 1]{ahlswede-elimination}. This exemplary AVC also serves as an important ingredient to the super-activation results in \cite{bs} and is, as an important example, also to be found in Remark \ref{rem:super-activation-explanation} of this document.\\
On the technical side, this work makes heavy use of the results that were obtained in the work \cite{csiszar-narayan} by extending one of their central results to the
situation where Eve gets some information via $V$. Namely, we are able to prove the following: If $\mathfrak W$ is non-symmetrizable, then
$C_{\mathrm{S}}(\mathfrak W,\mathfrak V)=C^{\mathrm{mean}}_{\mathrm{S,ran}}(\mathfrak W,\mathfrak V)$ for all possible $\mathfrak V$. We do not
attempt to give a necessary and sufficient condition for $C_{\mathrm{S}}$ to be positive, since a geometric characterization in the spirit of the symmetrizability condition \ref{eqn:definition-of-symmetrizability} is not even known for the usual wiretap channel. Rather, when speaking about the wiretap channel one usually refers to the concept of ``less noisy'' channels that was developed in \cite{csiszar-koerner-broadcast}.\\
The wiretap channel has been studied widely in the literature. The analysis started with the celebrated work \cite{wyner} of Wyner, an important follow-up work was \cite{csiszar-koerner-broadcast}, by Csiszar and K\"orner. While Wyner only treated the degraded case, Csiszar and K\"orner derived the capacity for the general discrete memoryless wiretap channel. The wiretap channel in the presence of common randomness which is kept secret from Eve (in this scenario, one could equally well speak of a secret key) was studied by Kang and Liu in \cite{ITW2010}.\\
In recent years there has been a growing interest in more elaborate models which combine insufficient channel state information with secrecy
requirements. Probably the earliest publications which came to our attention are the work \cite{liang-kramer-poor-shamai} by Liang, Kramer,
Poor and Shamai and the paper \cite{bloch-laneman} by Bloch and Laneman. Shortly after, the papers \cite{bbs-secrecy} and \cite{bbs-capacity} by Bjelakovi\'c, Boche and Sommerfeld got published. The work \cite{bbs-secrecy} provides a lower bound on the secrecy capacity of the compound wiretap channel with channel state information at the transmitter that matches an upper bound on the secrecy capacity
of general compound wiretap channels given provided in \cite{liang-kramer-poor-shamai}, establishing a full coding
theorem in this case. Important contributions of the work \cite{bbs-capacity} are a lower bound on what is called the ``random code secrecy capacity'' there, as well as a multi-letter expression for the secrecy capacity in the case of a best channel to the eavesdropper. The approach taken in this publication is closely related to the one taken in \cite{bbs-capacity}, but the use of different proof techniques enables us to provide much stronger results. An interesting parallel development is the work \cite{he-khisti-yener} by He, Khisti and Yener studies a two-transmitter Gaussian multiple access wiretap channel with multiple antennas at each  of the nodes. A characterization of the secrecy degrees of freedom region under a strong secrecy constraint is derived.\\
A surprising result that was discovered only recently by Boche and Wyrembelski in \cite{bs} is that of super-activation of AVWCs. We will explain this example in more detail in Remark \ref{rem:super-activation-explanation}. This effect was until then only known for information transmission capacities in quantum information theory, where it was proven by Smith and Yard in \cite{smith-yard-super-activation} that there exist channels which have the property that each of them alone has zero capacity but the two together have a positive capacity.\\
Before the work \cite{bs} this was assumed to be an effect which only shows up in quantum systems, where it was observed e.g. in \cite{smith-yard-super-activation}.\\
The work \cite{bs} gave an explicit example of super-activation which we repeat in Remark \ref{rem:super-activation-explanation}, but a deeper understanding of the effect was not achieved. Based on our finer analysis, we are now able to provide the following results: First, we give a much clearer characterization of super-activation of the \emph{uncorrelated}\footnote{Note that, due to the presence of an eavesdropper, it makes sense to allow the use of randomized encodings. Using, in such cases, the term ``random code'' is much too imprecise due to the potential presence of shared randomness between sender and receiver. Thus, we prefer to use the term uncorrelated codes. The random choice of codewords within an uncorrelated code represents lack of knowledge both for Eve and James. Analysing the case where James gains additional knowledge provides an interesting research opportunity, but care has to be taken when modelling the information flow from James to Eve.} coding secrecy capacity in Theorem \ref{theorem:symmetrizability-properties-of-C1det}. Second, and more for the sake of a clean discussion of coding and secrecy concepts, we define the capacity $C_{\mathrm{pp}}$ which explicitly keeps a part of the messages public (such that it may be that Eve is able to decode them). We do not attempt to give a further characterization of $C_{\mathrm{pp}}$ in this work, but we show that this capacity does as well show super-activation by use of the code concepts that were developed in \cite{bs}. Details are given in Subsection \ref{subsec:operational-definitions}, together with the exact definition of $C_{\mathrm{pp}}$.
\\\\
We will now give a broad sketch of our results concerning $C_{\mathrm{S}}$ and $C^{\mathrm{mean}}_{\mathrm{S,ran}}$, before we start concentrating
on $C_{\mathrm{key}}$. It was proven in \cite{wiese-noetzel-boche-I} that $C^{\mathrm{mean}}_{\mathrm{S,ran}}$ is a continuous quantity, and while the statement may
seem trivial at first sight, it becomes highly nontrivial when the following are taken into account:\\
There is at least no obvious way to deduce this statement directly just from the definition of the capacity, without first proving a coding
result, and the latter route was taken in \cite{wiese-noetzel-boche-I}, where an explicit formula for $C^{\mathrm{mean}}_{\mathrm{S,ran}}$ was found:
\begin{align}\label{eqn:capacity-formula}
C^{\mathrm{mean}}_{\mathrm{S,ran}}(\mathfrak W,\mathfrak V)=\lim_{n\to\infty}\frac{1}{n}\max_{p\in\mathcal P(\mathcal U_n)}\max_{U\in C(\mathcal
U_n,\mathcal X^n)}\left(\min_{q\in\mathcal P(\mathcal S)}I(p;W_q^{\otimes n}\circ U)-\max_{s^n\in\mathcal S^n}I(p;V_{s^n}\circ U)\right).
\end{align}
Explicit bounds on $|\mathcal U_n|$ were given as well. While one may argue that this is not an efficient description since one is forced to
compute the limit of a series of convex optimization problems, it turns out to be an incredibly useful characterization in the following sense:
First, it enables one to prove that $C^{\mathrm{mean}}_{\mathrm{S,ran}}$ is a continuous function in the pair $(\mathfrak W,\mathfrak V)$ and this result
was obtained in \cite{wiese-noetzel-boche-I}.\\
As has already been pointed out in \cite{bn-positivity}, the continuous dependence of the performance of a communication system on the relevant
system parameters is of central importance. To give just one example, consider recent efforts to build what is called ``smart grids''. Such
systems do certainly have high requirements both concerning reliability and stability of the communication in order to avoid potentially damaging consequences for its users.\\
While it is very interesting from a mathematical point of view, it certainly comes as an unpleasant surprise then that $C_{\mathrm{S}}$
does not grant us the favour of being a continuous function of the channel. On the other hand, this casts a flashlight on the importance of
distributed resources in communication networks - in this case the use of small amounts of common randomness. While one may now be tempted to
think that the transmission of messages over AVWCs without the use of common randomness is a rather adventurous task, we are also able to prove
that such a perception is wrong: Our analysis shows that $C_{\mathrm{S}}$ is continuous around its positivity points (this has been observed for classical-quantum arbitrarily varying channels in \cite{bn-positivity} already), and we are able to give an exact characterization of the discontinuity points as well. An example of a point of discontinuity has been given in \cite{boche-schaefer-poor}.\\
Moreover, our characterization of discontinuity relies purely on the computation of functions which are \emph{continuous} themselves, so that a
calculation of such points is at least within reach also from a computational point of view.\\
Further, the deep interconnection between continuity and symmetrizability which shows up in our work enables us to give a characterization of
pairs $(\mathfrak W_i,\mathfrak V_i)$ ($i=1,2$) for which super-activation is possible only in terms of $C^{\mathrm{mean}}_{\mathrm{S,ran}}$. In order to
be very explicit about super-activation, let us note the following:\\
The inequality
\begin{align}\label{eqn:super-activation-1}
C_{\mathrm{S}}(\mathfrak W_1\otimes\mathfrak W_2,\mathfrak V_1\otimes\mathfrak V_2)&\geq C_{\mathrm{S}}(\mathfrak W_1,\mathfrak
V_1)+C_{\mathrm{S}}(\mathfrak W_2,\mathfrak V_2)
\end{align}
follows trivially from the definition of $C$. It is common to all notions of capacity which are known to the authors. In contrast, if the
inequality
\begin{align}
C_{\mathrm{S}}(\mathfrak W_1\otimes\mathfrak W_2,\mathfrak V_1\otimes\mathfrak V_2)&> C_{\mathrm{S}}(\mathfrak W_1,\mathfrak
V_1)+C_{\mathrm{S}}(\mathfrak W_2,\mathfrak V_2)
\end{align}
holds, we speak of \emph{super-additivity} and only if we can even find AVWCs $(\mathfrak W_1,\mathfrak V_1)$ and $(\mathfrak W_2,\mathfrak
V_2)$ such that we have
\begin{align}\label{eqn:super-activation-3}
C_{\mathrm{S}}(\mathfrak W_1,\mathfrak V_1)=C_{\mathrm{S}}(\mathfrak W_2,\mathfrak V_2)=0,\qquad \mathrm{but}\qquad
C_{\mathrm{S}}(\mathfrak W_1\otimes\mathfrak W_2,\mathfrak V_1\otimes\mathfrak V_2)>0
\end{align}
we speak of \emph{super-activation}.\\
While it is clear from explicit examples in that super-activation of $C_{\mathrm{S}}$ is possible, it turns out in our work
via Theorem \ref{thm:full-characteriation-of-super-activation} that the effect is connected to the super-activation of
$C^{\mathrm{mean}}_{\mathrm{S,ran}}$, if the latter occurs. We would therefore like to take the opportunity of spurring future research by stating the
following conjecture:
\begin{conjecture}\label{conjecture}
There exist pairs $(\mathfrak W_1,\mathfrak V_1)$ and $(\mathfrak W_2,\mathfrak V_2)$ of (finite) AVWCs such that
\begin{align}
C^{\mathrm{mean}}_{\mathrm{S,ran}}(\mathfrak W_1,\mathfrak V_1)=C^{\mathrm{mean}}_{\mathrm{S,ran}}(\mathfrak W_1,\mathfrak V_1)=0,
\end{align}
but
\begin{align}
C^{\mathrm{mean}}_{\mathrm{S,ran}}(\mathfrak W_1\otimes\mathfrak W_2,\mathfrak V_1\otimes\mathfrak V_2)>0.
\end{align}
\end{conjecture}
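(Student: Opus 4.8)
Since Conjecture~\ref{conjecture} is stated as open, what follows is a line of attack rather than a complete argument.

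\textbf{Reduction to super-additivity of the secrecy-rate functional.} By the coding theorem \eqref{eqn:capacity-formula}, writing
\begin{align*}
S_n(\mathfrak W,\mathfrak V):=\max_{p\in\mathcal P(\mathcal U_n)}\ \max_{U\in C(\mathcal U_n,\mathcal X^n)}\left(\min_{q\in\mathcal P(\mathcal S)}I(p;W_q^{\otimes n}\circ U)-\max_{s^n\in\mathcal S^n}I(p;V_{s^n}\circ U)\right),
\end{align*}
one has $C^{\mathrm{mean}}_{\mathrm{S,ran}}=\lim_n\tfrac1n S_n=\sup_n\tfrac1n S_n$, the last equality because $S_n$ is super-additive: tensoring optimal prefix channels and using $\min_q[A(q)+B(q)]\ge\min_qA(q)+\min_qB(q)$ for Bob and additivity of the $\max$ over $s^n$ for Eve gives $S_{m+n}\ge S_m+S_n$. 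Thus it suffices to exhibit finite AVWCs with $S_n(\mathfrak W_i,\mathfrak V_i)\le 0$ for \emph{every} $n$ and $i=1,2$, while $S_{n_0}(\mathfrak W_1\otimes\mathfrak W_2,\mathfrak V_1\otimes\mathfrak V_2)>0$ for one single block length $n_0$. The essential structural difference from the (already established) super-activation of $C_{\mathrm S}$ is that here no symmetrizability shortcut is available: $C^{\mathrm{mean}}_{\mathrm{S,ran}}$ is insensitive to symmetrizability, so the ``on'' state of the product must come from genuine super-additivity of the map $(p,U)\mapsto\min_q I(p;W_q^{\otimes n}\circ U)-\max_{s^n}I(p;V_{s^n}\circ U)$.

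\textbf{A candidate reduction via Theorem~\ref{thm:full-characteriation-of-super-activation}.} A promising first route is to look for AVWCs $(\mathfrak W_i,\mathfrak V_i)$ that super-activate $C_{\mathrm S}$ and for which, in addition, $\mathfrak W_1$, $\mathfrak W_2$ \emph{and} $\mathfrak W_1\otimes\mathfrak W_2$ are all non-symmetrizable. For such a pair the dichotomy recalled in the introduction gives $C_{\mathrm S}(\mathfrak W_i,\mathfrak V_i)=C^{\mathrm{mean}}_{\mathrm{S,ran}}(\mathfrak W_i,\mathfrak V_i)=0$ and $C_{\mathrm S}(\mathfrak W_1\otimes\mathfrak W_2,\mathfrak V_1\otimes\mathfrak V_2)=C^{\mathrm{mean}}_{\mathrm{S,ran}}(\mathfrak W_1\otimes\mathfrak W_2,\mathfrak V_1\otimes\mathfrak V_2)$, so super-activation of $C_{\mathrm S}$ would immediately yield the conjecture. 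The obstruction is that the only explicit super-activation example for $C_{\mathrm S}$ currently at hand (from \cite{bs}) uses a \emph{symmetrizable} component, so a genuinely new construction is needed; I would search among AVCs whose single-letter random-coding secrecy rate vanishes only through a state-dependent ``who wins'' structure (in some jamming states $\mathfrak W_i$ is a useful channel to Bob but $\mathfrak V_i$ leaks everything, in other states $\mathfrak W_i$ is useless), so that $\min_q$ against Bob pays a mixing penalty while $\max_{s^n}$ lets Eve leak fully, yet in the product James can realise the bad regime in only one tensor factor at a time.

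\textbf{Direct route and the main obstacle.} Failing the reduction above, one attacks $S_n$ directly: for the product channel one designs a prefix channel $U$ that is reliability-robust for $\mathfrak W_1$ precisely on the jamming behaviours where $\mathfrak V_2$ leaks (and symmetrically for $\mathfrak W_2$ and $\mathfrak V_1$) while also performing privacy amplification \emph{across} the two factors, so that $\max_{s^n}I(p;(V_1\otimes V_2)_{s^n}\circ U)$ drops strictly below $\min_q I(p;(W_1\otimes W_2)_q^{\otimes n}\circ U)$; the positive side would be established with the soft-covering / channel-resolvability estimates already used in \cite{bbs-capacity} and \cite{wiese-noetzel-boche-I}, now applied to the product AVWC. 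The hard part — and the reason the statement remains a conjecture — is the \emph{negative} side: proving $S_n(\mathfrak W_i,\mathfrak V_i)\le 0$ for every $n$ and every stochastic encoder, i.e.\ ruling out that a single component already achieves a positive secrecy rate through its own internal super-additivity. Because the quantity to be bounded is exactly zero, generic continuity or data-processing bounds do not suffice; one needs a single-letterisation-from-above tailored to the chosen channels, exploiting their special structure, together with a check (using the explicit bounds on $|\mathcal U_n|$ from \cite{wiese-noetzel-boche-I}) that the residual finite search for the ``on'' block length $n_0$ is decidable.
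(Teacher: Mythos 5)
The paper does not prove Conjecture~\ref{conjecture}; it is stated as an open problem and touched only indirectly through Theorem~\ref{thm:full-characteriation-of-super-activation}, which ties it to super-activation of $C_{\mathrm{S}}$ without resolving it, so there is no argument in the paper against which your sketch can be checked for agreement. Your opening reduction is mathematically sound: with the product Ansatz $p_m\otimes p_n$, $U_m\otimes U_n$, the inequality $\min_q[A(q)+B(q)]\ge\min_q A(q)+\min_q B(q)$ on Bob's side and exact additivity of Eve's $\max_{s^{m+n}}$ give $S_{m+n}\ge S_m+S_n$ for your functional $S_n$, the alphabet bound $|\mathcal U_m|\cdot|\mathcal U_n|=|\mathcal X|^{m+n}=|\mathcal U_{m+n}|$ is respected, and together with $S_n\ge0$ (take $U$ constant) Fekete's lemma gives $C^{\mathrm{mean}}_{\mathrm{S,ran}}=\sup_n S_n/n$, so the capacity vanishes iff $S_n=0$ for every $n$.

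The ``candidate reduction via Theorem~\ref{thm:full-characteriation-of-super-activation}'' is, however, a reformulation rather than a reduction: by that theorem (together with Theorem~\ref{theorem:symmetrizability-properties-of-C1det}, which makes $C_{\mathrm{S}}$ and $C^{\mathrm{mean}}_{\mathrm{S,ran}}$ coincide on non-symmetrizable legal links), a pair that super-activates $C_{\mathrm{S}}$ with \emph{both} $\mathfrak W_1,\mathfrak W_2$ non-symmetrizable automatically has $C^{\mathrm{mean}}_{\mathrm{S,ran}}(\mathfrak W_i,\mathfrak V_i)=0$ and $C^{\mathrm{mean}}_{\mathrm{S,ran}}(\mathfrak W_1\otimes\mathfrak W_2,\mathfrak V_1\otimes\mathfrak V_2)>0$; so exhibiting such a pair \emph{is} exhibiting the pair the conjecture asks for, only described in the $C_{\mathrm{S}}$ language, and no work has been saved. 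Your final paragraph correctly isolates the genuine difficulty: the positive side of $S_{n_0}$ for the product can be attacked with the existing resolvability machinery, but the negative side requires showing $S_n(\mathfrak W_i,\mathfrak V_i)=0$ for all $n$ and all prefix channels $U$, a problem sitting exactly at the boundary where generic continuity or data-processing slack gives nothing. That diagnosis is accurate, and it is precisely why the statement remains a conjecture.
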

An initial definition of objects such as $\mathfrak W_1\otimes\mathfrak W_2$ has been given in equation (\ref{eqn:def-of-tensor-power}) and repeated again in Subsection \ref{subsec:operational-definitions}. As a last introductory statement concerning super-additivity, let us mention the connection of super-activation to information transmission in networks:
Consider two orthogonal channels in a mobile communication network. Not taking into account the issues on the physical layer, on may end up in
a description of these channels via $\mathfrak W_1,\mathfrak W_2$ from Alice to Bob and $\mathfrak V_1,\mathfrak V_2$ from Alice to Eve. The
surprising result then is that, while it may be completely impossible to send information securely over each one of them, there exist coding
schemes which enable Alice to send her information securely if both she and Bob have access to both $\mathfrak W_1$ and $\mathfrak W_2$!\\
We will argue later in Subsection \ref{subsec:operational-definitions} how this effect works for the capacity $C_\mathrm{pp}$. While
this capacity offers an insightful view on the topic, we nevertheless concentrate on the interplay between $C_{\mathrm{S}}$, $C_{\mathrm{S,ran}}^{\mathrm{mean}}$ and $C_{\mathrm{key}}$ in this work.\\
\\
Let us now switch our attention to further results presented in this work. As mentioned already, we also extend earlier research to the case
where \emph{lots} of common randomness can be used (exponentially many random bits, to be precise) during our investigation of $C_{\mathrm{key}}$. We do
not dive into the issues arising when sub-exponentially many random bits are available, although the repeated appearance of the activating
effect of common randomness in arbitrarily varying systems seems to deserve a closer study. Our method of proving the direct part does again
yield nothing more than the statement that any number of random bits which scales asymptotically as $\mathrm{const.}+(1+\epsilon)\log(n)$ (for
some $\epsilon>0$) is sufficient for evading all issues which may arise from symmetrizable $\mathfrak W$.\\
Our restriction to positive rates $G$ of common randomness allows us to give an elegant formula for $C_{\mathrm{key}}$ as follows: For every $G>0$, it holds
\begin{align}
C_{\mathrm{key}}(\mathfrak W,\mathfrak V,G)=\min\{C^{\mathrm{mean}}_{\mathrm{S,ran}}(\mathfrak W,\mathfrak V)+G,C^{\mathrm{mean}}_{\mathrm{S,ran}}(\mathfrak W,\mathfrak T)\}.
\end{align}
Here, $\mathfrak T$ denotes the AVC consisting only of the memoryless ``trash'' channel $T$ mapping every legal input $x$ and jamming input $s$ onto an arbitrary element of $\mathcal Z$ with equal probability ($t(z|s,x)=|\mathcal Z|^{-1}$). While the reader familiar with the topic would certainly have guessed the validity of a formula of this form it is worth noting that this
formula is generally ``hard to compute'' in the sense that it requires one to calculate the limit in the formula (\ref{eqn:capacity-formula}) -
as long as $G<C^{\mathrm{mean}}_{\mathrm{S,ran}}(\mathfrak W,\mathfrak T)-C^{\mathrm{mean}}_{\mathrm{S,ran}}(\mathfrak W,\mathfrak V)$. If this condition is not met, then
$C_{\mathrm{key}}(\mathfrak W,\mathfrak V)=C^{\mathrm{mean}}_{\mathrm{S,ran}}(\mathfrak W,\mathfrak T)$. Since the latter is the usual capacity of the AVC $\mathfrak W$,
we conclude the following: If enough common randomness is available, the capacity of the system can be much more efficiently described - by a
formula which does not require regularization anymore!\\
Again, a look into the area of quantum information theory shows a striking resemblance: The capacity formula for the usual memoryless quantum channel has been proven to be given by regularized quantities in the general cases, both for entanglement transmission and for message
transmission. Without going into too much detail about quantum systems we cite here the work \cite{De05} by Devetak as our main reference underlining this statement, although this work has been both preceded and followed by important results dealing with the topic.\\
Apart from specific classes of quantum channels which were shown to have non-regularized capacity formulae \cite{devetak-shor} by Devetak and Shor, it has also been proven that the entanglement assisted capacity for message transmission over quantum channels is given by a one-shot formulae \cite{bsst} by Bennet, Shor, Smolin and Thapliyal.\\
To the best of our knowledge, a quantification of the amount of entanglement assistance which is necessary in order to turn the capacity
formula into a one-shot formula has not been given yet.
\end{section}
\begin{section}{Notation and Definitions\label{sec:notation-and-definitions}}
This section contains notation, conventions, as well as operational definitions and technical definitions
\begin{subsection}{Notation and Conventions\label{subsec:notation-and-conventions}}
In the context presented in this work, every finite set will equivalently be called an alphabet. Such alphabets are denoted by script letters such as $\mathcal A,\ \mathcal B,\ \mathcal S,\ \mathcal X,\ \mathcal Y,\ \mathcal Z$. The cardinality of a set $\mathcal A$ is denoted by $|\mathcal A|$. Every natural number $N\in\mathbb N$ defines a set $[N]:=\{1,\ldots,N\}$. The set of all permutations on such $[N]$ is written $S_N$. The function
$\exp:\mathbb R\to\mathbb R_+$ is defined with respect to base $2$: $\exp(t):=2^t$. The logarithm $\log$ is defined with respect to the same
base. For any $c\in\mathbb R$ we define $|c|^+$ by setting $|c|^+:=c$ if $c>0$ and $|c|^+:=0$ otherwise. A function $f:\mathcal A\to\mathbb R$ is nonnegative ($f\geq0$) if $f(a)\geq0$ holds for all $a\in\mathcal A$. To each finite set
$\mathcal A$ we associate the corresponding set $\mathcal P(\mathcal A):=\{p:\mathcal A\to[0,1]:p\geq0,\sum_{a\in\mathcal A}p(a)=1\}$ of
probability distributions on $\mathcal A$. Each random variable $A$ with values in $\mathcal A$ is associated to the unique $p\in\mathcal
P(\mathcal A)$ satisfying $\mathbb P(A=a)=p(a)$ for all $a\in\mathcal A$. An important subset of $\mathcal P(\mathcal A)$ is the set of its
extreme points. Every such extreme point is a point measure $\delta_a(a'):=\delta(a,a')$ where $\delta(\cdot,\cdot)$ is the usual
Kronecker-delta. The one-norm distance between two probability distributions $p,p'\in\mathcal P(\mathcal A)$ is $\|p-p'\|_1=\sum_{a\in\mathcal
A}|p(a)-p'(a)|$.\\
The expectation of a function $f:\mathcal A\to\mathbb R$ with respect to a distribution $p\in\mathcal P(\mathcal A)$ is written $\mathbb E_p
f:=\sum_{s\in\mathcal A}p(a)f(a)$ or, if $p$ is clear from the context, simply $\mathbb Ef$.\\
For each alphabet $\mathcal A$ and natural number $n\in\mathbb N$ we can build the corresponding product alphabet $\mathcal A^n:=\mathcal
A\times\ldots\times\mathcal A$, where $\times$ is the usual Cartesian product and there are exactly $n$ copies of $\mathcal A$ involved in the
definition of $\mathcal A^n$. The elements of $\mathcal A^n$ are denoted $a^n=(a_1,\ldots,a_n)$. Each such element gives rise to the
corresponding empirical distribution or \emph{type} $\bar N(\cdot|a^n)\in\mathcal P(\mathcal A)$ defined via $N(a|a^n):=|\{i:a_i=a\}|$ and
$\bar N(\cdot|a^n):=\frac{1}{n}N(\cdot|a^n)$. Given $\mathcal A$ and $n\in\mathbb N$, the set of all empirical distributions arising from an
element $a^n\in\mathcal A^n$ is $\mathcal P_0^n(\mathcal A):=\{\bar N(\cdot|a^n):a^n\in\mathcal A^n\}$. Each type $p\in\mathcal P_0^n(\mathcal A)$ defines the \emph{typical set}
$T_p:=\{a^n:\bar N(\cdot|a^n)=p(\cdot)\}$.\\
Channels are given by affine maps $W:\mathcal P(\mathcal A)\to\mathcal P(\mathcal B)$. The set of channels is denoted $C(\mathcal A,\mathcal
B)$. Every channel is uniquely represented (and can therefore be identified with) its set $\{w(b|a)\}_{a\in\mathcal A,b\in\mathcal B}$ of
transition probabilities, which are defined via $w(b|a):=W(\delta_a)(b)$. It acts as
\begin{align}\label{eqn:def-of-W(p)}
W(p):=\sum_{a\in\mathcal A}\sum_{b\in\mathcal B}w(b|a)p(a)\delta_b,
\end{align}
where both $W(p)\in\mathcal P(\mathcal B)$ and $\{\delta_b\}_{b\in\mathcal B}\subset\mathcal P(\mathcal B)$ (another way of writing the above formula would be to set $W(p)(\cdot)=\sum_{a\in\mathcal A}\sum_{b\in\mathcal B}w(b|a)p(a)\delta_b(\cdot)$ or even $W(p)(y)=\sum_{a\in\mathcal A}w(b|a)p(a)$).
As a shorthand, we may occasionally also write $Wp$ to denote $W(p)$, in analogy to linear algebra (every channel is naturally associated to its
representing stochastic matrix $(w(a|b))_{a,b}$ and can therefore be extended to a linear map on the appropriate vector spaces).\\
When operating on product alphabets such as $\mathcal A\times\mathcal B$ we define $p\otimes q\in\mathcal P(\mathcal A\times\mathcal B)$ to be
the distribution defined by $(p\otimes q)(a,b):=p(a)q(b)$. Correspondingly, $p^{\otimes n}\in\mathcal P(\mathcal A^n)$ is defined via
$p^{\otimes n}(a^n):=\prod_{i=1}^np(a_i)$. The same conventions hold for channels: if $V:\mathcal P(\mathcal A)\to\mathcal P(\mathcal B)$ and
$W:\mathcal P(\mathcal A')\to\mathcal P(\mathcal B')$, then $V\otimes W:\mathcal P(\mathcal A\times\mathcal A')\to\mathcal P(\mathcal
B\times\mathcal B')$ is defined via its transition probabilities as $(v\otimes w)((b,b')|(a,a')):=v(b|a)w(b'|a')$ and the notation carries over
to $n$-fold products $W^{\otimes n}$ of $W:\mathcal P(\mathcal A)\to\mathcal P(\mathcal B)$ as before by setting $w^{\otimes
n}(b^n|a^n):=\prod_{i=1}^nw(b_i|a_i)$.\\
For channels $W\in C(\mathcal A\times\mathcal B,\mathcal C)$ it will become important to derive a short notation for cases where one input
remains fixed while the other is arbitrary. Such induced channels will be denoted, in case that this is unambiguously possible, by $W_p$ where
\begin{align}
W_p(\delta_a):=W(\delta_a\otimes p).
\end{align}
At times it will, in order to straighten out notation, also be necessary to write the transition probabilities as either $w_p(b|a)$ or even $w(b|a,p)$.\\
The Shannon entropy of $p\in\mathcal P(\mathcal A)$ is $H(p):=-\sum_{a\in\mathcal A}p(A)\log p(a)$, the relative entropy between two
probability distributions $p,q\in\mathcal P(\mathcal A)$ is $D(p\|q):=\sum_{a\in\mathcal A}p(a)\log(p(a)/q(a))$, if $q(a)=0\ \Rightarrow
p(a)=0$ for all $a\in\mathcal A$, and $D(p\|q):=+\infty$, else.\\
Every $p\in\mathcal P(\mathcal A)$ and channel $W:\mathcal P(\mathcal A)\to\mathcal P(\mathcal B)$ define a joint random variable which we call
$(A,B)$ for the moment and which is defined via $\mathbb P((A,B)=(a,b))=p(a)w(b|a)$ (for all $a\in\mathcal A,\ b\in\mathcal B$). This enables
us to use an equivalent formulation for the mutual information:
\begin{align}
I(p;W):=I(A;B).
\end{align}
A more operational definition of this quantity can be achieved by noting that the distribution of $(A,B)$ in this scenario arises from defining
$p^{(2)}\in\mathcal P(\mathcal A\times\mathcal A)$ by $p^{(2)}(a,a'):=p(a)\cdot\delta_a(a')$ for all $a,a'\in\mathcal A$ - it then holds
$\mathbb P((A,B)=(a,b))=((Id\otimes W)p^{(2)})(a,b)$ for all $a\in\mathcal A,b\in\mathcal B$. The operational interpretation of this
probability distribution is that Alice observes the outcomes $a$ of some random process. Given any such outcome, she makes one copy of it and
sends that copy over to Bob via the channel $W$, keeping the original data with herself.\\
We will go one step further and define mutual information on pairs of sequences $a^n\in\mathcal A^n$, $b^n\in\mathcal B^n$, this time by
defining a random variable $(A,B)$ with values in $\mathcal A\times\mathcal B$ via $\mathbb P((A,B)=(a,b)):=\bar N(a,b|a^n,b^n)$ and then
setting
\begin{align}
I(a^n;b^n):=I(A;B).
\end{align}
In addition, we will need a suitable measure of distance between AVWCs. Our object of choice is the Hausdorff distance which we define as follows: For two channels $W,\tilde W\in C(\mathcal A,\mathcal B)$, set
\begin{align}
\| W-\tilde W\|:=\max_{a\in\mathcal A}\| W(\delta_a)-\tilde W(\delta_a)\|.
\end{align}
Now we define for a given $\mathfrak W=(W_s)_{s\in\mathcal S}$, and $\mathfrak W'=(W'_{s'})_{s'\in \mathcal S'}$
\begin{align*}
g(\mathfrak W,\mathfrak W'):=\max_{s\in\mathcal S}\min_{s'\in\mathcal S'}\|W_s-W'_{s'}\|.
\end{align*}
Then we can ultimately define
\begin{align}
  d((\mathfrak W,\mathfrak V),(\mathfrak{ W}',\mathfrak{ V}')):=\max\{g(\mathfrak W\otimes\mathfrak V,\mathfrak{W}'\otimes\mathfrak V'),g(\mathfrak W'\otimes\mathfrak V',\mathfrak W\otimes\mathfrak{ V})\}.
\end{align}
This is a metric on the set of finite-state AVWCs with the corresponding alphabets $\mathcal{A,B,C}$. Another ingredient in the following is the notion of the convex hull of a set of channels, which can for e.g. AVCs $\mathfrak W=(W_s)_{s\in\mathcal S}$ be defined as
\begin{align}
\conv(\mathfrak W):=\left\{W=\sum_{s\in\mathcal S}q(s)W_s:q\in\mathcal P(\mathcal S)\right\}.
\end{align}
At last, we would like to mention that for any given $W\in C(\mathcal A,\mathcal B)$, $a\in\mathcal A$ and subset $\mathcal B'\subset\mathcal B$ we use the notation
\begin{align}
w(\mathcal B'|a):=\sum_{b\in\mathcal B'}w(b|a).
\end{align}
\end{subsection}
\begin{subsection}{Models and operational definitions\label{subsec:operational-definitions}}
At first, we give a formal definition of an arbitrarily varying channel. This extends our informal definition from the introduction, without any change in notation.
\begin{defn}[AVWC] Let $\mathcal X,\ \mathcal Y,\ \mathcal Z,\ \mathcal S$ be finite sets and for each $s\in\mathcal S$, let $W_s\in C(\mathcal X,\mathcal Y)$ and $V_s\in C(\mathcal X,\mathcal Z)$. Define $\mathfrak W:=(W_s)_{s\in\mathcal
S}$ and $\mathfrak V:=(V_s)_{s\in\mathcal S}$. The corresponding arbitrarily varying wiretap channel is denoted $(\mathfrak W,\mathfrak V)$.
Its action is completely specified by the sequence $(\{W_{s^n},V_{s^n}\}_{s^n\in\mathcal S^n})_{n\in\mathbb N}$, where
$W_{s^n}:=W_{s_1}\otimes\ldots\otimes W_{s_n}$ and $V_{s^n}:=V_{s_1}\otimes\ldots\otimes V_{s_n}$.\\
\end{defn}
\begin{rem}
The AVWC $(\mathfrak W,\mathfrak V)$ can equivalently be represented by defining $W\in C(\mathcal S\times\mathcal X,\mathcal Y)$ via $w(y|x,s):=w_s(y|x)$ and $V\in C(\mathcal S\times\mathcal X,\mathcal Z)$ via $v(z|x,s):=v_s(y|x)$. We
will use both representations interchangeably.\\
Whenever necessary, we will (for $n\in\mathbb N$ and $q\in\mathcal P(\mathcal S^n)$) also use the abbreviations
\begin{align}
W_q^{\otimes n}:=\sum_{s^n\in\mathcal S^n}q(s^n)W_{s^n},\qquad V_q^{\otimes n}:=\sum_{s^n\in\mathcal S^n}q(s^n)V_{s^n},
\end{align}
and the corresponding conditional probabilities are defined in the obvious way for all $x^n\in\mathcal X^n$, $y^n\in\mathcal Y^n$, $z^n\in\mathcal Z^n$ as
\begin{align}
w_q^{\otimes n}(y^n|x^n):=W_q^{\otimes n}(\delta_{x^n})(y^n),\qquad v_q^{\otimes n}(z^n|x^n):=V_q^{\otimes n}(\delta_{x^n})(z^n).
\end{align}
\end{rem}
Since a central part of our work is to study AVWCs under joint use, we have to carefully define what we mean here with ``joint use''. Let
$(\mathfrak W_1,\mathfrak V_1)$ and $(\mathfrak W_2,\mathfrak V_2)$ be two AVWCs. Since state alphabets are finite in all of our work, we will
without loss of generality assume that they have a joint state set $\mathcal S$. We then define
\begin{align}
(\mathfrak W_1\otimes\mathfrak W_2,\mathfrak V_1\otimes \mathfrak V_2):=((W_1(\cdot|\cdot,s)\otimes W_2(\cdot|\cdot,s'),V_1(\cdot|\cdot,s)\otimes V_2(\cdot|\cdot,s'))_{s,s'\in\mathcal S}.
\end{align}
We now come to a more ``classic'' topic: The definition of codes, rates and capacities. From the start, we will include the possibility of adding
some extra variables like shared randomness or common randomness between Alice and Bob, but also the possibility for Alice to divide her
message set into two parts: One which is to be kept secret from Eve and one which does not necessarily have to remain secret.\\
We introduce three different classes of codes, which are defined in the following and related to each other as follows: The class of shared randomness assisted codes contains those which use common randomness and these again contain the uncorrelated codes. Formal definitions are as follows:
\begin{defn}[Shared randomness assisted code] A shared randomness assisted code $\mathcal K_n$ for the AVWC $(\mathfrak W,\mathfrak V)$ consists of: a set
$[K]$ of messages, two finite alphabets $[\Gamma],[\Gamma']$ and a set of stochastic encoders $e^\gamma\in C([K],\mathcal X^n)$ (one for every value $\gamma\in[\Gamma]$) together with a collection $((D^{\gamma'}_k)_{k=1}^K)_{\gamma'=1}^{\Gamma'}$ of sets satisfying  $\bigcup_{k=1}^{K}D^{\gamma'}_k\subset\mathcal
Y^n$ and $D^{\gamma'}_k\cap D^{\gamma'}_{k'}=\emptyset$ for all $k\neq k'$ and for each $\gamma'$. In addition to that, there is a probability
distribution $\mu\in\mathcal P([\Gamma]\times[\Gamma'])$. Every such code defines the joint random variables $\mathfrak S_{s^n}:=(\mathfrak K_n,\mathfrak
K'_n,\mathfrak \Gamma_n,\mathfrak \Gamma'_n,\mathfrak Z_{s^n},\mathfrak X_n,\mathfrak Y_{s^n})$ ($s^n\in\mathcal S^n$) which are distributed
according to
\begin{align}
\mathbb
P(\mathfrak S_{s^n}=(k,k',\gamma,\gamma',z^n,x^n,y^n))&=\frac{1}{K}\mu(\gamma,\gamma')e^\gamma(x^n|k)\mathbbm1_{D^{\gamma'}_{k'}}(y^n)w_{s^n}(y^n|x^n)v_{s^n}(z^n|x^n)
\end{align}
The average error of $\mathcal K_n$ is
\begin{align}
\mathrm{err}(\mathcal K_n)=1-\max_{s^n\in\mathcal S^n}\sum_{k,\gamma,\gamma'=1}^{K,\Gamma,\Gamma'}\frac{\mu(\gamma,\gamma')}{K}\sum_{x^n\in\mathcal
X^n}e^\gamma(x^n|k) w_{s^n}(D^{\gamma'}_k|x^n).
\end{align}
\end{defn}
\begin{defn}[Common randomness assisted code]\label{def:CR-assisted-code} A common randomness assisted code $\mathcal K_n$ for the AVWC $(\mathfrak W,\mathfrak V)$ consists of: a set
$[K]$ of messages, a set $[\Gamma]$ of values for the common randomness and a set of stochastic encoders $e^\gamma\in C([K],\mathcal X^n)$ (one for each element $\gamma\in[\Gamma]$), together with a collection
$(D^\gamma_k)_{k,\gamma=1}^{K,\Gamma}$ of subsets $D^\gamma_k$ of $\mathcal Y^n$ satisfying $D^\gamma_k\cap D^\gamma_{k'}=\emptyset$ for all
$\gamma\in[\Gamma]$, whenever $k\neq k'$. Every such code defines the joint random variables $\mathfrak S_{s^n}:=(\mathfrak K_n,\mathfrak K'_n,\mathfrak
\Gamma_n,\mathfrak X_n,\mathfrak Y_{s^n},\mathfrak Z_{s^n})$ ($s^n\in\mathcal S^n$) which are distributed according to
\begin{align}
\mathbb P(\mathfrak S_{s^n}=(k,k',\gamma,x^n,y^n,z^n))&=\frac{1}{\Gamma\cdot
K}e^\gamma(x^n|k)\mathbbm1_{D^\gamma_{k'}}(y^n)w_{s^n}(y^n|x^n)v_{s^n}(z^n|x^n)
\end{align}
The average error of $\mathcal K_n$ is
\begin{align}
\mathrm{err}(\mathcal K_n)=1-\max_{s^n\in\mathcal S^n}\frac{1}{K\cdot\Gamma}\sum_{k,\gamma=1}^{K,\Gamma}\sum_{x^n\in\mathcal X^n}
e^\gamma(x^n|k)w_{s^n}(D^\gamma_k|x^n).
\end{align}
For technical reasons we also define, for all state sequences $s^n$, the corresponding average success probability of the code by
\begin{align}
d_{s^n}(\mathcal K_n)=\frac{1}{K\cdot\Gamma}\sum_{k,\gamma=1}^{K,\Gamma}\sum_{x^n\in\mathcal X^n}
e^\gamma(x^n|k)w_{s^n}(D^\gamma_k|x^n).
\end{align}
\end{defn}
One particularly interesting feature of AVCs is that it may be impossible to transmit any whatsoever small number of messages reliably from Alice to Bob without using shared randomness - but if one is willing to only spend a polynomial amount of common randomness, the capacity of the channel jumps to the maximally attainable value, an effect which was discovered in \cite{ahlswede-elimination}.\\
If a whole
communication network is being utilized it may be possible to use one part of the network to establish common randomness between the legal
parties (one could equally well speak of a secret key here) which is then used to send messages over another part of the system which may be
symmetrizable. This idea was first established in \cite{bs}. In this work, we will give a more careful analysis of the underlying structure, an undertaking which motivates the following definition:
\begin{defn}[Private/public code] A private/public code $\mathcal K_n$ for the AVWC $(\mathfrak W,\mathfrak V)$ consists of: two sets $[K],\ [L]$ of messages, an encoder $E\in C([K]\times[L],\mathcal X^n)$, and a collection $(D_{kl})_{k,l=1}^{K,L}$ of subsets of $\mathcal Y^n$ satisfying $D_{kl}\cap D_{k'l'}=\emptyset$ whenever $(k,l)\neq(k',l')$. Every such code defines the joint random variables
$\mathfrak S_{s^n}:=(\mathfrak K,\mathfrak L,\mathfrak K',\mathfrak L',\mathfrak X^n,\mathfrak Y_{s^n},\mathfrak Z_{s^n})$ ($s^n\in\mathcal S^n$) which are distributed according to
\begin{align}
\mathbb P(\mathfrak S_{s^n}=(k,l,k',l',x^n,y^n,z^n))&=\frac{1}{K\cdot L}e(x^n|k,l)\mathbbm1_{D_{k'l'}}(y^n)w_{s^n}(y^n|x^n)v_{s^n}(z^n|x^n).
\end{align}
The average error of $\mathcal K_n$ is
\begin{align}
\mathrm{err}(\mathcal K_n)=1-\max_{s^n\in\mathcal S^n}\sum_{k,l=1}^{K,L}\sum_{x^n\in\mathcal X^n}\frac{1}{K\cdot L}e(x^n|k,l)w_{s^n}(D_{k,l}|x^n).
\end{align}
\end{defn}
With this definition we can formalize the idea of ``wasting'' a few bits in order to guarantee secret communication. We would like to compare this approach to the case of a compound channel, where a sender that knows the channel parameters can send pilot sequences to the receiver in order to let him estimate the channel. The pilot sequences do not carry information from sender to receiver. With such a scheme, a sender with state information can transmit at strictly higher rates than one without. The higher capacity is reached by ``wasting'' some transmissions for the estimation. Since the number of channel uses that have to be used for estimation grows only sub-exponentially in the number of channel uses, there is no negative impact on the message transmission rate in asymptotic scenarios.\\
In the case treated here it turns out that sending a small amount of non-secret messages is the key to increase the secrecy capacity in specific situations. We would like to extend the formal background of this idea by allowing for a joint transmission of secret and non-secret messages:
\begin{defn}[Private/public coding scheme]\label{def:private/public_coding_scheme}
A private/public coding scheme operating at rates $(R_{\mathrm{pri}},R_\mathrm{pub})$ consists of a sequence $(\mathcal K_n)_{n\in\mathbb N}$ of private/public codes such that
\begin{align}
\lim_{n\to\infty}\mathrm{err}(\mathcal K_n)=0,\\
\liminf_{n\to\infty}\frac{1}{n}\log(K_n)=R_{\mathrm{pri}},\\
\liminf_{n\to\infty}\frac{1}{n}\log(L_n)=R_{\mathrm{pub}},\\
\limsup_{n\to\infty}\max_{s^n\in\mathcal S^n}I(\mathfrak K_n;\mathfrak Z_{s^n}|\mathfrak L_n)=0.
\end{align}
\end{defn}
A more restricted class of codes arises when there is only one type of messages, which ought to be kept secret, and in addition allows the use of common randomness.
\begin{defn}[Common randomness assisted coding scheme satisfying mean secrecy criterion]\label{def:mean-secrecy-coding-scheme}
A common randomness assisted coding scheme satisfying the mean secrecy criterion operating at rate $R$ consists of a sequence $(\mathcal K_n)_{n\in\mathbb
N}$ of common randomness assisted codes such that
\begin{align}
\lim_{n\to\infty}\mathrm{err}(\mathcal K_n)=0,\\
\liminf_{n\to\infty}\frac{1}{n}\log(K_n)=R,\\
\limsup_{n\to\infty}\max_{s^n\in\mathcal S^n}I(\mathfrak K_n;\mathfrak Z_{s^n}|\mathfrak\Gamma_n)=0.
\end{align}
\end{defn}
Note that both Definition \ref{def:private/public_coding_scheme} and Definition \ref{def:mean-secrecy-coding-scheme} require the mutual information between the secret messages and the output at Eve's site to be small on average, either over the public messages or over the common randomness. One may argue that this is a somewhat weak criterion. In our earlier paper \cite{wiese-noetzel-boche-I} we compared the capacity arising from the use of coding schemes under Definition \ref{def:mean-secrecy-coding-scheme} to a capacity derived under more severe requirements on the secrecy criterion. We were able to demonstrate that the respective capacities coincide. It is not known to us whether a more strict requirement in Definition \ref{def:private/public_coding_scheme} would lead to a different capacity.
\begin{defn}[Secure uncorrelated coding scheme]\label{def:uncorrelated-coding-scheme}
A secure uncorrelated coding scheme operating at rate $R$ consists of a sequence $(\mathcal K_n)_{n\in\mathbb N}$ of common
randomness assisted codes with $\Gamma_n=1$ for all $n\in\mathbb N$ such that
\begin{align}
\lim_{n\to\infty}\mathrm{err}(\mathcal K_n)=0,\\
\liminf_{n\to\infty}\frac{1}{n}\log(K_n)=R,\\
\limsup_{n\to\infty}\max_{s^n\in\mathcal S^n}I(\mathfrak K_n;\mathfrak Z_{s^n})=0.
\end{align}
\end{defn}
\begin{defn}[Secure coding scheme with secret common randomness\label{def:secure-coding-scheme-of-second-kind}]
A secure coding scheme with secret common randomness $\mathfrak K$ operating at rate $R$ and using an amount $G_{\mathfrak K}>0$ of common randomness
consists of a sequence $\mathfrak K:=(\mathcal K_n)_{n\in\mathbb N}$ of common randomness assisted codes with
$\lim_{n\to\infty}\frac{1}{n}\log\Gamma_n=G_{\mathfrak K}$ such that
\begin{align}
\lim_{n\to\infty}\mathrm{err}(\mathcal K_n)=0,\\
\liminf_{n\to\infty}\frac{1}{n}\log(J_n)=R,\\
\limsup_{n\to\infty}\max_{s^n\in\mathcal S^n}I(\mathfrak K_n;\mathfrak Z_{s^n})=0.
\end{align}
\end{defn}
\begin{rem}
The reader may wonder why the common randomness is only being quantified for secrecy schemes where the common randomness is kept secret. The reason for this becomes clear when reading \cite{bs-finite-coordination}, where it is proven that any shared randomness needed in order to achieve the correlated random coding mean secrecy capacity can always be assumed to not be larger than polynomially many bits of common randomness. These small amounts are not counted in the definition of the respective capacity. This result from \cite{bs-finite-coordination} got applied in our earlier paper \cite{wiese-noetzel-boche-I} as well.
\end{rem}
Since we completely restrict our analysis to the case where the system uses common randomness, we can spare a few indices to distinguish the different sources of external randomness:
\begin{defn}[Secrecy capacities]
Given $(\mathfrak W,\mathfrak V)$, we define for every $G>0$ the secret common randomness assisted secrecy capacity as
\begin{align}
C_{\mathrm{key}}(\mathfrak W,\mathfrak V,G):=\sup\left\{R:\begin{array}{l}\mathrm{There\ exists\ secret\ common\ randomness\ assisted}\\
\mathrm{coding\ scheme}\ \mathfrak K\ \mathrm{at\ rate\ }R\ \mathrm{with\ }G_{\mathfrak K}= G\end{array}\right\}.
\end{align}
The uncorrelated coding secrecy capacity and the correlated random coding mean secrecy capacity are defined just as in \cite{wiese-noetzel-boche-I}:
\begin{align}
C_{\mathrm{S}}(\mathfrak W,\mathfrak V)&:=\sup\left\{R:\begin{array}{l}\mathrm{There\ exists\ a\ secure\ uncorrelated}\\
\mathrm{coding\ scheme\ operating\ at\ rate\ }R \end{array}\right\}\\
C^{\mathrm{mean}}_{\mathrm{S,ran}}(\mathfrak W,\mathfrak V)&:=\sup\left\{R:\begin{array}{l}\mathrm{There\ exists\ a\ secure\ common\ randomness}\\
\mathrm{assisted\ coding\ scheme\ satisfying\ the\ mean}\\
\mathrm{secrecy\ criterion\ operating\ at\ rate\ }R\end{array}\right\}.
\end{align}
\end{defn}
We refrain from defining the rate region for private and public messages in this work, and restrict ourselves to consider only the boundary of that region that arises from letting $R_\mathrm{pub}$ be arbitrarily small. This does for example allow us to transmit any finite number of messages, or numbers of messages that scale sub-exponentially in the number of channel uses.
\begin{defn}[Private/public secrecy capacity] The private/public secrecy capacity is given by
\begin{align}
C_\mathrm{pp}(\mathfrak W,\mathfrak V):=\sup\left\{R:\begin{array}{l}\mathrm{There\ exists\ a\ private/public\ coding\ scheme\ at}\\
\mathrm{rates}\ (R_\mathrm{pub},R_\mathrm{pri})\ \mathrm{such\ that}\ R=R_\mathrm{pri}\end{array}\right\}.
\end{align}
\end{defn}
The above definition explicitly allows for the super-activation strategy of \cite{bs} to be used, and shall be explained using this example first. Before we do so, let us give the formal definition of super-activation:
\begin{defn}[Super-activation]\label{def:super-activation}
Let $(\mathfrak W_1,\mathfrak V_1)$ and $(\mathfrak W_2,\mathfrak V_2)$ be AVWCs. Then $(\mathfrak W_1,\mathfrak V_1)$, $(\mathfrak W_2,\mathfrak V_2)$ are said to show super-activation if $C_\mathrm{S}(\mathfrak W_1,\mathfrak V_1)=C_\mathrm{S}(\mathfrak W_2,\mathfrak V_2)=0$ but $C_\mathrm{S}(\mathfrak W_1\otimes\mathfrak W_2,\mathfrak V_1\otimes\mathfrak V_2)>0$.
\end{defn}
Now set $\mathfrak W:=\mathfrak W_1\otimes\mathfrak W_2$ and $\mathfrak V:=\mathfrak V_1\otimes\mathfrak V_2$. In order to simplify the discussion, one may additionally set $\mathfrak V_2=\mathfrak W_2=(Id)$, where $Id\in C([2],[2])$ and assume that $\mathfrak W_1$ is symmetrizable but that $C_\mathrm{S,ran}^\mathrm{mean}(\mathfrak W_1,\mathfrak V_1)=\alpha>0$. It follows that $C_\mathrm{pp}(\mathfrak W_1,\mathfrak V_1)=C_\mathrm{pp}(\mathfrak W_2,\mathfrak V_2)=0$, because of symmetrizability and since decoding of the messages that are sent via $(\mathfrak W_2,\mathfrak V_2)$ is possible without any error both for Bob \emph{and} for Eve. These messages may therefore be treated as common randomness that is known by Eve. We know that already with the choice $L_n=n^2$ we have enough common randomness to remove any effect arising from symmetrizability of $\mathfrak W_1$. Since the code arising from the combination of sending and decoding public messages via $(Id,Id)$ and private messages via $(\mathfrak W_1,\mathfrak V_1)$ is a coding scheme that fits under Definition \ref{def:private/public_coding_scheme}, we get $C_\mathrm{pp}(\mathfrak W_1\otimes\mathfrak W_2,\mathfrak V_1\otimes\mathfrak V_2)\geq\alpha>0$.\\
That such a scheme does work as well when $C_\mathrm{S}$ is considered instead of $C_\mathrm{pp}$ can be understood as follows:\\
Let two AVWCs $(\mathfrak W_1,\mathfrak V_1)$ and $(\mathfrak W_2,\mathfrak V_2)$ be given. Let $\mathfrak W_1$ be symmetrizable, but such that
$C^{\mathrm{mean}}_{\mathrm{S,ran}}(\mathfrak W_1,\mathfrak V_1)=\alpha>0$. Since $\mathfrak W_1$ is symmetrizable
we have $C_{\mathrm{S}}(\mathfrak W_1,\mathfrak V_1)=0$. If no additional resources are available, the surplus $\alpha$ in the common-randomness assisted secrecy capacity cannot be put to use. Let now $C_{\mathrm{S}}(\mathfrak W_2,\mathfrak V_2)=0$ but $C_{\mathrm{S}}(\mathfrak W_2,\mathfrak T)=\beta>0$ ($\mathfrak T$ denotes the trash channel, so this just means that it is possible to reliably transmit messages over $\mathfrak W_2$). Then
\begin{align}
C_{\mathrm{S}}(\mathfrak W_1\otimes\mathfrak W_2,\mathfrak V_1\otimes\mathfrak V_2)\geq\alpha>0
\end{align}
and the reason for this effect is that (as before when we considered $C_\mathrm{pp}$) a small amount of messages can be sent over $\mathfrak W_2$ and is then used as common randomness, therefore increasing the rate of messages that can be sent reliably over $\mathfrak W_1$ from zero to $\alpha$. Of course, the messages sent over $\mathfrak W_2$ can be read by Eve. That this causes no problems with the security requirements can be seen by defining a toy-model where only two parallel channels with respective adversarially controlled channel states are considered. This is done as follows:\\
Let us define random variables $\mathfrak R_{s,\hat s}=(\mathfrak M,\hat{\mathfrak M},\mathfrak Z_{1,s},\hat{\mathfrak Z}_{2,\hat s})$ where
\begin{align}
\mathbb P(\mathfrak R=(m,\hat m,z,\hat z))=\frac{1}{M}\frac{1}{\hat M}w_{1,s}(z|m,\hat m)\hat w_{2,\hat s}(\hat z|\hat m)
\end{align}
and the channels $\{W_{1,s}\}_{s\in\mathcal S}$ and $\{\hat W_{\hat s}\}_{\hat s\in\hat{\mathcal S}}$ can be controlled by James separately. It is understood that $m$ are the messages, whereas $\hat m$ are the values of the shared randomness that is distributed between Alice and Bob by using $\{\hat W_{2,\hat s}\}_{\hat s\in\hat{\mathcal S}}$. We assume that for some small $\epsilon\geq0$ we have
\begin{align}\label{eqn:secrecy-assumption}
\max_{\hat s\in\hat{\mathcal S}}I(\mathfrak M;\hat{\mathfrak Z}_{2,\hat s}|\hat{\mathfrak M})\leq\epsilon.
\end{align}
Observe that $\hat{\mathfrak Z}_{2,\hat s}$ depends solely on $\hat{\mathfrak M}$ via the channel $\hat W_{2,\hat s}$ (this is where the fact that the two arbitrarily varying channels are used in parallel), so that the data processing inequality yields for every $s,\hat s$ that
\begin{align}
I(\mathfrak M;\mathfrak Z_{1,s},\hat{\mathfrak Z}_{2,\hat s})&\leq I(\mathfrak M;\mathfrak Z_{1,s},\hat{\mathfrak M}).
\end{align}
It is a consequence of the independence between $\mathfrak M$ and $\hat{\mathfrak M}$ that we can (for every $s$ and $\hat s$) then continue this chain of estimates as follows:
\begin{align}
I(\mathfrak M;\mathfrak Z_{1,s},\hat{\mathfrak Z}_{2,\hat s})&\leq I(\mathfrak M;\mathfrak Z_{1,s},\hat{\mathfrak M})\\
&=H(\mathfrak M)+H(\mathfrak Z_{1,s},\hat{\mathfrak M})-H(\mathfrak M,\mathfrak Z_{1,s},\hat{\mathfrak M})\\
&=H(\mathfrak M,\hat{\mathfrak M})+H(\mathfrak Z_{1,s},\hat{\mathfrak M})-H(\mathfrak M,\mathfrak Z_{1,s},\hat{\mathfrak M})-H(\hat{\mathfrak M})\\
&=H(\mathfrak M|\hat{\mathfrak M})+H(\mathfrak Z_{1,s},\hat{\mathfrak M})-H(\mathfrak M,\mathfrak Z_{1,s},\hat{\mathfrak M})\\
&=H(\mathfrak M|\hat{\mathfrak M})+H(\mathfrak Z_{1,s}|\hat{\mathfrak M})-H(\mathfrak M,\mathfrak Z_{1,s}|\hat{\mathfrak M})\\
&=I(\mathfrak M;\mathfrak Z_{1,s}|\hat{\mathfrak M})\\
&\leq\epsilon.
\end{align}
Thus it is clear that, in addition,
\begin{align}
\max_{s\in\mathcal S,\hat s\in\hat{\mathcal S}}I(\mathfrak M;\mathfrak Z_{1,s},\hat{\mathfrak Z}_{2,\hat s})&\leq\epsilon.\label{eqn:consequence-of-secrecy-assumption}
\end{align}
It is also evident that this argument ceases to hold true when the channels that are used for transmission of $M$ and of $\hat M$ ar not orthogonal anymore. Our sketch indicates why the protocol developed in \cite{bs} is able to meet the secrecy requirement in Definition \ref{def:uncorrelated-coding-scheme}.
\\\\
After we indicated why the super-activation protocol works we do now want to switch the topic and highlight a few connections to related problems and technical difficulties:

It is evident from the existing literature on AVCs \cite{ahlswede-cai}, arbitrarily varying classical-quantum channels \cite{bn-correlation} and on the quantification of shared randomness \cite{beigi,gacs-koerner,kang-ulukus,wyner-common-information,witsenhausen} that the latter is not an easy task. A brief overview concerning the connections between quantification of shared randomness and arbitrarily varying channels has been given in \cite{bn-correlation}. Our focus here is on systems that use only common randomness in various different ways.

In our previous work \cite{wiese-noetzel-boche-I} we developed a formula for $C_{\mathrm{S,ran}}^{\mathrm{mean}}$. The proof, extending the results established in \cite{bs} and \cite{bs-finite-coordination}, displays clearly that already amounts of common randomness which scale polynomially in the blocklength $n$ are sufficient for achieving the full random capacity. Moreover, an exact quantification of the amount of shared randomness is not necessary when speaking about correlated random coding mean secrecy capacity. Either no shared randomness is allowed in the sense that $\Gamma_n=1$ for all $n\in\mathbb N$ or else one allows arbitrarily large amounts of it but then only uses the above mentioned polynomial amount.

With the functions $G\mapsto C_{\mathrm{key}}(\mathfrak W,\mathfrak V,G)$ the story is a different one, as the following interesting behaviour occurs: They are well-defined for all $G>0$. However, when $G=0$ they are not unambiguously defined anymore, as it is clearly possible to take e.g. a sequence $(\Gamma_n)_{n\in\mathbb N}$ such that $\Gamma_n=n^2$ for each $n\in\mathbb N$. In that
case, $G=\lim_{n\to\infty}\frac{1}{n}\log\Gamma_n=0$, but the amount of randomness is sufficient in the sense that for every $\epsilon>0$ there exists a sequence $(\mathcal K_n)_{n\in\mathbb N}$ of codes which use only the common randomness $\Gamma_n$, operate at a rate
$R_\epsilon=C^{\mathrm{mean}}_{\mathrm{S,ran}}(\mathfrak W,\mathfrak V)-\epsilon$ and are both asymptotically secure and satisfy
$\lim_{n\to\infty}\mathrm{err}(\mathcal K_n)=0$. Thus, purely from the mathematical definition of $C_{\mathrm{key}}(\mathfrak W,\mathfrak V,G)$, one would be tempted to set $C_{\mathrm{key}}(\mathfrak W,\mathfrak V,0)=C^{\mathrm{mean}}_{\mathrm{S,ran}}(\mathfrak W,\mathfrak V)$.\\
However, from the operational point of view this is unsatisfying: imagine taking the statement ``no common randomness'' literally, and therefore setting $\Gamma_n=1$ for all $n\in\mathbb N$. Let $\mathfrak W$ be a symmetrizable AVC. In that case there is no chance to reliably transmit \emph{any} whatsoever small amount of messages with $\Gamma_n=1$ for all $n\in\mathbb N$ \cite{ericson}.\\
It is thus clear that $C_{\mathrm{S,ran}}^{\mathrm{mean}}(\mathfrak W,\mathfrak V)=\lim_{G\to0}C_{\mathrm{key}}(\mathfrak W,\mathfrak V,G)$ holds, but that it at least seems to be a difficulty to give a both operationally meaningful and mathematically satisfying definition of $C_{\mathrm{key}}(\mathfrak
W,\mathfrak V,0)$ (see e.g. \cite{wiese-diss} for a possible approach to such type of problem).
\\\\
A quantity which will be proved to be of importance during our proofs and when quantifying how close an AVC is to being symmetrizable is defined as follows:
We let $M_{\mathrm{fin}}$ denote the set of all finite
sets of elements of $C(\mathcal X,\mathcal Y)$.
\begin{defn}\label{def:definition-of-F} The function $F:M_{\mathrm{fin}}\to\mathbb R_+$ is defined via setting, for each $\mathfrak
W'=(W'(\cdot|\cdot,s))_{s\in\mathcal S}\in M_{\mathrm{fin}}$,
\begin{align}\label{eqn:definition-of-F}
F(\mathfrak W'):=\max_{U\in C(\mathcal X,\mathcal S)}\min_{x\neq x'}\left\|\sum_{s\in\mathcal S}u(s|x)W'(\delta_{x'}\otimes\delta_s)-\sum_{s\in\mathcal S}u(s|x')W'(\delta_x\otimes\delta_s)\right\|_1.
\end{align}
\end{defn}
This function obviously has the property that for every AWVC $\mathfrak W'$, the statement $F(\mathfrak W')=0$ is equivalent to $\mathfrak W'$
being symmetrizable.
\end{subsection}
\end{section}
\begin{section}{Main Results\label{sec:main-results}}
In this section we list our main results. We start with a coding theorem concerning the secret common randomness assisted secrecy capacity whose direct part
is based on our Lemma \ref{lemma:central-lemma-II} that we state directly afterwards. We continue with a second and even more delicate lemma,
which is an extension of \cite[Lemma 3 ]{csiszar-narayan} to AVWCs. This lemma (Lemma \ref{lemma:central-lemma}) is important: it provides a direct (coding) part for Theorem \ref{theorem:symmetrizability-properties-of-C1det}, which addresses the influence of the symmetrizability condition (\ref{eqn:definition-of-symmetrizability}) on the capacity $C_S$ and thereby relates it to $C_{\mathrm{S,ran}}^{\mathrm{mean}}$.\\
Our last result connects to the work \cite{bs}, which showed a very surprising effect that has so far not been observed for classical
information-carrying systems: super-activation. We give a precise characterization of the conditions which lead to super-activation in Theorem
\ref{thm:full-characteriation-of-super-activation}.
\begin{thm}[Coding Theorem for secret common randomness assisted secrecy capacity]\label{thm:coding-theorem}
Let $(\mathfrak W,\mathfrak V)$ be an AVWC. For every $n\in\mathbb N$, set $\mathcal U_n:=[|\mathcal X|^n]$. Define
\begin{align}
C^*(\mathfrak W,\mathfrak V):=\lim_{n\to\infty}\frac{1}{n}\max_{p\in\mathcal P(\mathcal U_n)}\max_{U\in C(\mathcal U_n,\mathcal
X^n)}\left(\min_{q\in\mathcal P(\mathcal S)}I(p;W_q^{\otimes n}\circ U)-\max_{s^n\in\mathcal S^n}I(p;V_{s^n}\circ U)\right).
\end{align}
It holds (with $\mathfrak T=(T)$ denoting the AVC consisting only of the memoryless channel that assigns the uniform output distribution to every input symbol),
\begin{align}
C_{\mathrm{key}}(\mathfrak W,\mathfrak V,G)&=\min\{C^*(\mathfrak W,\mathfrak V)+G,C^*(\mathfrak W,\mathfrak T)\}
\end{align}
\end{thm}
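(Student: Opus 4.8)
The plan is to prove the two inequalities $C_{\mathrm{key}}(\mathfrak W,\mathfrak V,G)\le\min\{C^*(\mathfrak W,\mathfrak V)+G,\,C^*(\mathfrak W,\mathfrak T)\}$ and $C_{\mathrm{key}}(\mathfrak W,\mathfrak V,G)\ge\min\{C^*(\mathfrak W,\mathfrak V)+G,\,C^*(\mathfrak W,\mathfrak T)\}$ separately, after first recording the identification $C^*(\mathfrak W,\mathfrak V)=C^{\mathrm{mean}}_{\mathrm{S,ran}}(\mathfrak W,\mathfrak V)$ (this is exactly formula \eqref{eqn:capacity-formula} from \cite{wiese-noetzel-boche-I}), so that all the known structural and continuity facts about the right-hand side become available. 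The secret common randomness $\mathfrak K$ of rate $G$ is, by Definition \ref{def:secure-coding-scheme-of-second-kind}, a sequence of common-randomness assisted codes with $\tfrac1n\log\Gamma_n\to G$ achieving reliability over all $s^n$ and the \emph{strong} secrecy constraint $\max_{s^n}I(\mathfrak K_n;\mathfrak Z_{s^n})\to0$, where crucially the randomness $\Gamma_n$ is \emph{not} revealed to Eve.

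For the converse I would argue as follows. First, $C_{\mathrm{key}}(\mathfrak W,\mathfrak V,G)\le C^*(\mathfrak W,\mathfrak T)$: the message $\mathfrak K_n$ is reliably decodable by Bob from $\mathfrak Y_{s^n}$ for every $s^n$, so by Fano the rate $R$ cannot exceed the ordinary (deterministic-code, average-error) capacity of the AVC $\mathfrak W$ assisted by arbitrary common randomness, which by Ahlswede's elimination technique \cite{ahlswede-elimination} equals the random-coding capacity $C^{\mathrm{mean}}_{\mathrm{S,ran}}(\mathfrak W,\mathfrak T)=C^*(\mathfrak W,\mathfrak T)$ (the secrecy constraint against $\mathfrak T$ is vacuous). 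Second, $C_{\mathrm{key}}(\mathfrak W,\mathfrak V,G)\le C^*(\mathfrak W,\mathfrak V)+G$: here one feeds the pair $(\mathfrak K_n,\mathfrak\Gamma_n)$ to the \emph{mean}-secrecy machinery. Treating $\mathfrak\Gamma_n$ as part of an enlarged message set of size $K_n\Gamma_n$, reliability still holds, and the strong secrecy bound $I(\mathfrak K_n;\mathfrak Z_{s^n})\approx0$ together with $I(\mathfrak K_n\mathfrak\Gamma_n;\mathfrak Z_{s^n})\le H(\mathfrak\Gamma_n)+I(\mathfrak K_n;\mathfrak Z_{s^n}|\mathfrak\Gamma_n)$ and a standard leakage-splitting estimate shows that the enlarged code is an asymptotically mean-secure common-randomness-\emph{known}-to-Eve code at rate $R+G$, whence $R+G\le C^{\mathrm{mean}}_{\mathrm{S,ran}}(\mathfrak W,\mathfrak V)=C^*(\mathfrak W,\mathfrak V)$; one then invokes the single-letterization / regularized upper bound from \cite{wiese-noetzel-boche-I} to close it. (The bookkeeping to pass from "strong secrecy of $\mathfrak K_n$ alone, $\mathfrak\Gamma_n$ hidden" to "mean secrecy of $(\mathfrak K_n,\mathfrak\Gamma_n)$, $\mathfrak\Gamma_n$ public" is the point that needs care, but it is a soft information-theoretic argument.)

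For the achievability I would split into two regimes according to which term realizes the minimum. If $G\ge C^*(\mathfrak W,\mathfrak T)-C^*(\mathfrak W,\mathfrak V)$, it suffices to achieve $C^*(\mathfrak W,\mathfrak T)=C^{\mathrm{mean}}_{\mathrm{S,ran}}(\mathfrak W,\mathfrak T)$: take any mean-secrecy-achieving common-randomness code for $(\mathfrak W,\mathfrak T)$ — which, by \cite{bs-finite-coordination}, needs only polynomially many bits of randomness — and since against the trash channel $\mathfrak T$ secrecy is automatic, the same code read over the channel $\mathfrak V$ leaks at most whatever a reduced-rate construction allows; one spends the budget $G$ on genuine secret randomness to first beat symmetrizability à la \cite{ahlswede-elimination} (only $\mathrm{const.}+(1+\epsilon)\log n$ bits are needed) and uses the remainder as an extra secret key that XORs against Eve's residual knowledge, pushing the effective secret rate up to $C^*(\mathfrak W,\mathfrak T)$; capping the used randomness at the required amount gives exactly the minimum. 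In the complementary regime $G< C^*(\mathfrak W,\mathfrak T)-C^*(\mathfrak W,\mathfrak V)$, I would start from a near-optimal code for formula \eqref{eqn:capacity-formula} realizing $C^*(\mathfrak W,\mathfrak V)$ with common randomness that Eve \emph{does} see, then "upgrade" it: the $nG$ secret random bits are used as a one-time pad on an additional $nG$-bit message layer whose encoding reuses the same input distribution $p,U$, so that Bob still decodes (he knows the key) while Eve's output, being statistically independent of a uniform pad, yields $I(\text{new layer};\mathfrak Z_{s^n})\le$ the same exponentially small quantity — the key insight being exactly the orthogonality/independence computation carried out in the excerpt around \eqref{eqn:secrecy-assumption}–\eqref{eqn:consequence-of-secrecy-assumption}, here applied with the "second channel" replaced by the ideal noiseless key. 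This yields rate $C^*(\mathfrak W,\mathfrak V)+G$.

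The main obstacle will be the achievability in the second regime, specifically making the one-time-pad layer coexist with the wiretap code \emph{uniformly over all jamming sequences $s^n$} while keeping the block length, the randomness rate, and the two secrecy leakages simultaneously controlled — i.e. showing the combined code still fits Definition \ref{def:secure-coding-scheme-of-second-kind} with $G_{\mathfrak K}=G$ exactly, not merely $\le G$. One must also handle the symmetrizable case inside this construction: the wiretap code of \eqref{eqn:capacity-formula} is a common-randomness code, and eliminating that auxiliary randomness (so that the only randomness counted is the secret key of rate $G$) requires running Ahlswede's elimination argument \emph{on top of} the secrecy analysis, which is where the extra $(1+\epsilon)\log n$ secret bits get consumed; verifying that this does not disturb the secrecy estimates is the delicate bookkeeping step. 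I expect this to be where the bulk of the technical work — and the invocation of Lemma \ref{lemma:central-lemma-II} — lives.
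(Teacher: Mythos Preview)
Your converse argument for the bound $R\le C^*(\mathfrak W,\mathfrak V)+G$ has a genuine gap. You propose to treat $(\mathfrak K_n,\mathfrak\Gamma_n)$ as an enlarged message and claim the resulting code is ``asymptotically mean-secure'' at rate $R+G$, so that the converse for $C^{\mathrm{mean}}_{\mathrm{S,ran}}$ applies. But the hypothesis of Definition~\ref{def:secure-coding-scheme-of-second-kind} gives you only $I(\mathfrak K_n;\mathfrak Z_{s^n})\to 0$; it does \emph{not} give $I(\mathfrak K_n;\mathfrak Z_{s^n}|\mathfrak\Gamma_n)\to 0$, and your own inequality $I(\mathfrak K_n\mathfrak\Gamma_n;\mathfrak Z_{s^n})\le H(\mathfrak\Gamma_n)+I(\mathfrak K_n;\mathfrak Z_{s^n}|\mathfrak\Gamma_n)$ therefore bounds the leakage of the enlarged message only by something of order $nG$, not $o(n)$. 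So the enlarged code is not mean-secure and you cannot invoke $R+G\le C^{\mathrm{mean}}_{\mathrm{S,ran}}$. The paper does not try to reinterpret the code at all: it works directly with Fano's inequality to get $(1-\epsilon_n)\log K_n\le I(\mathfrak K_n;\mathfrak Y^n_q|\mathfrak\Gamma_n)-I(\mathfrak K_n;\mathfrak Z_{s^n})+1+\epsilon_n$, then applies the elementary bound $I(\mathfrak K_n;\mathfrak Y^n_q|\mathfrak\Gamma_n)\le I(\mathfrak K_n;\mathfrak Y^n_q)+H(\mathfrak\Gamma_n)$ (this is \cite[Lemma 3.4]{csiszar-koerner}, using that $\mathfrak K_n\perp\mathfrak\Gamma_n$). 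The quantity $I(\mathfrak K_n;\mathfrak Y^n_q)-I(\mathfrak K_n;\mathfrak Z_{s^n})$ is then identified with a term of the form $\min_q I(p;W_q^{\otimes n}\circ U)-\max_{s^n}I(p;V_{s^n}\circ U)$ by taking $U$ to be the averaged encoder, and $H(\mathfrak\Gamma_n)\le\log\Gamma_n$ produces the $+G$. No ``enlarged message set'' enters.

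For the direct part your instincts are sound but the mechanism is not quite what the paper does. You describe a two-layer scheme: build a mean-secrecy code for $C^*(\mathfrak W,\mathfrak V)$ and then one-time-pad an extra $nG$ bits on top. The paper instead builds the code in one shot via Lemma~\ref{lemma:central-lemma-II}: codewords are indexed by $(k,l,\gamma)$ with $k$ the message, $l$ a dummy randomization index, and $\gamma$ the secret common randomness, and the secrecy requirement is that $\tfrac{1}{n}\log(L\cdot\Gamma)\ge\max_q I(p;V_q)+\tau$, i.e.\ $L$ and $\Gamma$ are \emph{interchangeable} in confusing Eve because she sees neither. Thus one simply shifts $nG$ bits of obfuscation from $L$ into $\Gamma$, freeing those bits for $K$; this yields rate $B(p)-E(p)+\min\{G,E(p)\}$ directly, and the multi-letter form follows by the usual pre-coding/blocking argument. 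Your OTP picture can be made to work (it is essentially the Kang--Liu \cite{ITW2010} idea applied to the obfuscation index $l$, which Bob \emph{does} decode in these constructions), but you would still need a lemma guaranteeing that $l$ is reliably decodable uniformly in $s^n$ --- and that is precisely what Lemma~\ref{lemma:central-lemma-II} provides, so you end up in the same place.
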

Of course, $C^*(\mathfrak W,\mathfrak T)$ is the capacity of the AVC $\mathfrak W$ under average error. This capacity has a single-letter
description. Since the first argument in above minimum is not single letter, there is room for speculation whether there is room for
improvement in this characterization or, if not, for which value of $G$ the description in terms of a single-letter quantity is possible and for which not. Apart from the complicated multi-letter form, an important take-away from the above formula is that the following is true:
\begin{cor}
For every $G>0$, the function $(\mathfrak W,\mathfrak V)\mapsto C_{\mathrm{key}}(\mathfrak W,\mathfrak V,G)$ is continuous.
\end{cor}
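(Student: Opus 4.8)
The plan is to read the continuity off the closed form in Theorem \ref{thm:coding-theorem}. Fix $G>0$. By that theorem, $C_{\mathrm{key}}(\mathfrak W,\mathfrak V,G)=\min\{C^*(\mathfrak W,\mathfrak V)+G,\,C^*(\mathfrak W,\mathfrak T)\}$, and by the capacity formula \eqref{eqn:capacity-formula} the quantity $C^*$ coincides with $C^{\mathrm{mean}}_{\mathrm{S,ran}}$, which was proved to be continuous in $(\mathfrak W,\mathfrak V)$ (with respect to the metric $d$) in \cite{wiese-noetzel-boche-I}. Since the pointwise minimum of two continuous real-valued functions is continuous and adding the constant $G$ preserves continuity, it suffices to show that the two maps $(\mathfrak W,\mathfrak V)\mapsto C^{\mathrm{mean}}_{\mathrm{S,ran}}(\mathfrak W,\mathfrak V)$ and $(\mathfrak W,\mathfrak V)\mapsto C^{\mathrm{mean}}_{\mathrm{S,ran}}(\mathfrak W,\mathfrak T)$ are both continuous. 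The first is exactly the result of \cite{wiese-noetzel-boche-I}; the second involves the \emph{fixed} trash AVC $\mathfrak T$, so it is really a statement about the $\mathfrak W$-dependence alone.

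For the second map I would factor it as $(\mathfrak W,\mathfrak V)\mapsto\mathfrak W\mapsto C^{\mathrm{mean}}_{\mathrm{S,ran}}(\mathfrak W,\mathfrak T)$. The last arrow is continuous as the special case $\mathfrak V\equiv\mathfrak T$ of the continuity theorem of \cite{wiese-noetzel-boche-I} (and, in fact, even more elementarily, since with $\mathfrak V=\mathfrak T$ the subtracted term in $C^*$ vanishes and the expression single-letterizes to $\max_{p}\min_{q}I(p;W_q)$, a manifestly continuous function of $\mathfrak W$). So the only thing left to check is that $d$-convergence of AVWCs forces convergence of the first components in the relevant sense, i.e. that $d((\mathfrak W_k,\mathfrak V_k),(\mathfrak W,\mathfrak V))\to 0$ implies $d((\mathfrak W_k,\mathfrak T),(\mathfrak W,\mathfrak T))\to 0$.

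This last point is where the (minor) work sits. Since $\mathfrak T$ has a single state and $T$ maps every input to the fixed uniform distribution $\tau$ on $\mathcal Z$, the difference $W_s(\delta_x)\otimes\tau-W'_{s'}(\delta_x)\otimes\tau$ has total variation equal to that of $W_s(\delta_x)-W'_{s'}(\delta_x)$, whence $\|W_s\otimes T-W'_{s'}\otimes T\|=\|W_s-W'_{s'}\|$ for all states, and therefore $d((\mathfrak W_k,\mathfrak T),(\mathfrak W,\mathfrak T))=\max\{g(\mathfrak W_k,\mathfrak W),g(\mathfrak W,\mathfrak W_k)\}$. On the other hand, passing to a marginal cannot increase the one-norm, so $\|W_s\otimes V_s-W'_{s'}\otimes V'_{s'}\|\ge\|W_s-W'_{s'}\|$, which gives $g(\mathfrak W_k\otimes\mathfrak V_k,\mathfrak W\otimes\mathfrak V)\ge g(\mathfrak W_k,\mathfrak W)$ and, symmetrically, $g(\mathfrak W\otimes\mathfrak V,\mathfrak W_k\otimes\mathfrak V_k)\ge g(\mathfrak W,\mathfrak W_k)$; hence $d((\mathfrak W_k,\mathfrak T),(\mathfrak W,\mathfrak T))\le d((\mathfrak W_k,\mathfrak V_k),(\mathfrak W,\mathfrak V))$, as needed. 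The main obstacle is precisely this metric bookkeeping --- the distance $d$ is defined through the products $\mathfrak W\otimes\mathfrak V$ rather than through $\mathfrak W$ itself, so one must make sure it still dominates the distance between the AVCs $\mathfrak W$ and their embeddings into products with $\mathfrak T$, dealing in passing with the fact that the compared AVWCs may carry different finite state sets (which the $\max_s\min_{s'}$ structure of $g$ absorbs). Everything else is a direct appeal to Theorem \ref{thm:coding-theorem} and to the continuity of $C^{\mathrm{mean}}_{\mathrm{S,ran}}$ from \cite{wiese-noetzel-boche-I}.
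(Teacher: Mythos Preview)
Your proposal is correct and follows exactly the approach the paper intends: the corollary is stated immediately after Theorem \ref{thm:coding-theorem} without a separate proof, the understanding being that continuity follows from the closed form $C_{\mathrm{key}}=\min\{C^*(\mathfrak W,\mathfrak V)+G,\,C^*(\mathfrak W,\mathfrak T)\}$ together with the continuity of $C^*=C^{\mathrm{mean}}_{\mathrm{S,ran}}$ established in \cite{wiese-noetzel-boche-I}. You have in fact supplied more than the paper does, by carefully verifying that $d$-convergence of $(\mathfrak W_k,\mathfrak V_k)$ forces convergence of the first components via the marginal inequality $\|W_s-W'_{s'}\|\le\|W_s\otimes V_s-W'_{s'}\otimes V'_{s'}\|$, a point the paper leaves implicit.
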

\begin{rem}
If $G=0$ in the sense that $\Gamma_n=0$ for all $n\in\mathbb N$, then for all AVWCs $(\mathfrak W,\mathfrak V)$ we know that $C_{\mathrm{key}}(\mathfrak W,\mathfrak V,G)$ equals $C_{\mathrm{S}}(\mathfrak W,\mathfrak V)$.
\end{rem}
We are getting closer to the technical core of our work now. The next Lemma is essential to proving the direct part of Theorem \ref{thm:coding-theorem}. It quantifies how many messages $L$ and how many different values $\Gamma$ for the common randomness are needed in order to make the output distributions at Eve's site independent from the chosen message $k$.
\begin{lem}\label{lemma:central-lemma-II}For every $\tau>0$ there exists a value $\nu(\tau)>0$ and an $N_0(\tau)$ such that
for all $n\geq N_0(\tau)$ and natural numbers $K,L,\Gamma$ and type $p\in\mathcal P_0^n(\mathcal X)$ there exist
codewords $(\mathbf x_{kl\gamma})_{k,l,\gamma=1}^{K,L,\Gamma}$ in $T_p\subset\mathcal X^n$ and decoding sets $D_{kl}^\gamma\subset\mathcal Y^n$ obeying $D_{kl}^\gamma\cap D_{k'l'}^\gamma=\emptyset$ if $(k,l)\neq(k',l')$, such that we have:\\
If $\tfrac{1}{n}\log(K\cdot L)\leq \min_{q\in\mathcal P(\mathcal S)}I(p;W_q)-\nu(\tau)$ and $\Gamma\geq2^{n\cdot5\cdot\nu(\tau)}$ then
\begin{align}
\min_{s^n}\sum_{\gamma=1}^\Gamma\frac{1}{\Gamma}\sum_{k,l=1}^{K,L}\frac{1}{K\cdot L}w_{s^n}(D_{kl}^\gamma|\mathbf x_{kl\gamma})\geq1-2^{-n\cdot\nu(\tau)}.
\end{align}
If $\tfrac{1}{n}\log(L\cdot\Gamma)\geq \max_q I(p;V_q)+\tau$ then
\begin{align}\label{eqn:exponential-bound-on-variational-distance-in-lemma-II}
\max_{s^n,k}\left\|\frac{1}{L\cdot\Gamma}\sum_{l,\gamma=1}^{L,\Gamma}v_{s^n}(\cdot|\mathbf x_{kl\gamma})-\mathbb
Ev_{s^n}(\cdot|X^n)\right\|_1\leq&2^{-n\cdot\nu(\tau)},
\end{align}
where $X^n$ is distributed according to $\mathbb P(X^n=x^n):=\frac{1}{|T_p|}\mathbbm1_{T_p}(x^n)$ and the dependence of $\nu$ on $\tau$ is such that $\lim_{\tau\to0}\nu(\tau)=0$.
\end{lem}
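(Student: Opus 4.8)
The plan is to realise all of the required codewords at once by a single random experiment and then to derandomise. Fix a type $p\in\mathcal P_0^n(\mathcal X)$ and natural numbers $K,L,\Gamma$, and draw the family $(\mathbf X_{kl\gamma})_{k,l,\gamma}$ of codewords independently, each uniformly distributed on the type class $T_p\subset\mathcal X^n$. For every value $\gamma$ of the common randomness we equip the sub-codebook $(\mathbf X_{kl\gamma})_{k,l}$ with a universal (maximum mutual information) decoder; its decoding regions $D_{kl}^\gamma\subset\mathcal Y^n$ are then pairwise disjoint in $(k,l)$ by construction and, crucially, do not depend on the jammer sequence $s^n$. The collection $\{(\mathbf X_{kl\gamma}),(D_{kl}^\gamma)\}$ is a single random object, and it suffices to show that for $n\ge N_0(\tau)$ and a suitable $\nu(\tau)>0$ it violates the first displayed conclusion with probability below $\tfrac12$ whenever $\tfrac1n\log(KL)\le\min_q I(p;W_q)-\nu(\tau)$ and $\Gamma\ge 2^{5n\nu(\tau)}$, and violates the second displayed conclusion with probability below $\tfrac12$ whenever $\tfrac1n\log(L\Gamma)\ge\max_q I(p;V_q)+\tau$; a union bound then yields one realisation satisfying both statements simultaneously (each being vacuous when its hypothesis fails).

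For the reliability half, write $\bar e_\gamma(s^n):=\tfrac{1}{KL}\sum_{k,l}\bigl(1-w_{s^n}(D_{kl}^\gamma|\mathbf X_{kl\gamma})\bigr)$. For each fixed $s^n$, the arbitrarily varying channel random coding estimate in universal-decoder form (as in \cite{csiszar-narayan}, using the robustness of the maximum mutual information decoder against the unknown state) bounds $\mathbb E\,\bar e_\gamma(s^n)$ by $2^{-n\beta(\tau)}$ with an exponent $\beta(\tau)>0$ governed by the rate gap $\nu(\tau)$ to $\min_q I(p;W_q)$; since this estimate depends on $s^n$ only through joint types, it is uniform in $s^n$. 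The variables $\bigl(\bar e_\gamma(s^n)\bigr)_{\gamma=1}^{\Gamma}$ are independent, bounded by $1$, and have mean at most $2^{-n\beta(\tau)}$, so a Chernoff bound for averages of small-mean bounded variables gives a bound of the form $\mathbb P\bigl[\tfrac1\Gamma\sum_\gamma\bar e_\gamma(s^n)>2^{-n\nu(\tau)}\bigr]\le\exp(-c\,\Gamma\,2^{-n\nu(\tau)})$ once $\nu(\tau)$ is chosen compatibly with $\beta(\tau)$; with $\Gamma\ge 2^{5n\nu(\tau)}$ this exponent is doubly exponentially large in $n$, and summing over the at most $|\mathcal S|^{n}$ jammer sequences still leaves a failure probability tending to $0$.

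For the secrecy half, fix $k$ and $s^n$ and put $P_{k,s^n}:=\tfrac{1}{L\Gamma}\sum_{l,\gamma}v_{s^n}(\cdot|\mathbf X_{kl\gamma})$, an empirical average of $L\Gamma$ independent channel outputs with common mean $\mathbb E\,v_{s^n}(\cdot|X^n)$. Since $q\mapsto I(p;V_q)$ is affine it is maximised at a vertex of $\mathcal P(\mathcal S)$, hence $\tfrac1n I(X^n;Z^n_{s^n})\le\max_q I(p;V_q)+O\bigl(\tfrac{\log n}{n}\bigr)$ for every product state sequence $s^n$, the correction accounting for the passage from the uniform distribution on $T_p$ to $p^{\otimes n}$. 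Thus the hypothesis $\tfrac1n\log(L\Gamma)\ge\max_q I(p;V_q)+\tau$ puts the number of superposed codewords above the channel resolvability threshold of $V_{s^n}$ by at least $\tfrac{\tau}{2}$ for large $n$, and the soft covering (resolvability) lemma, in the version yielding a deviation probability doubly exponentially small in $n$, gives $\mathbb P\bigl[\|P_{k,s^n}-\mathbb E\,v_{s^n}(\cdot|X^n)\|_1>2^{-n\nu(\tau)}\bigr]\le 2^{-2^{n\delta(\tau)}}$ for a suitable $\delta(\tau)>0$ and $\nu(\tau)$ small enough; a union bound over the messages $k$ and the at most $|\mathcal S|^{n}$ sequences $s^n$ keeps this below $\tfrac12$. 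Taking $\nu(\tau)>0$ below all the exponents produced above and $N_0(\tau)$ large, one realisation satisfies both displayed statements; every exponent invoked shrinks with $\tau$, so one may arrange $\nu(\tau)\to 0$ as $\tau\to 0$.

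The step I expect to be the real obstacle is the reliability estimate: one must make the codebook-averaged decoding error not merely exponentially small in expectation but concentrated strongly enough, uniformly over the exponentially many jammer sequences, to survive the union bound over $s^n$ — and it is exactly this that forces the amount of common randomness to be exponentially large, $\Gamma\ge 2^{5n\nu(\tau)}$. Making the arbitrarily varying channel error exponent, the concentration gain from the $\Gamma$ independent sub-codes, and the target bound $2^{-n\nu(\tau)}$ mutually consistent under one choice of $\nu(\tau)$ with $\nu(\tau)\to 0$ is the bookkeeping that has to be carried out with care; by comparison the secrecy half is a fairly routine application of channel resolvability once the estimate $\max_q I(p;V_q)\ge\tfrac1n I(X^n;Z^n_{s^n})-o(1)$ has been recorded.
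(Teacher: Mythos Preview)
Your overall architecture---draw codewords uniformly from $T_p$, bound the reliability and secrecy failure probabilities separately by doubly-exponential Chernoff tails, then derandomise via a union bound---matches the paper's. The secrecy half is handled in essentially the same spirit: the paper makes the resolvability argument explicit by introducing truncated functions $\Theta_{s^n,z^n}(x^n):=v^{\otimes n}(z^n|s^n,x^n)\mathbbm1_{M(s^n,z^n)}(x^n)$ and applying the Chernoff bound to them directly, rather than invoking a black-box soft-covering lemma, but this is a difference of presentation. One small correction: $q\mapsto I(p;V_q)$ is \emph{convex} in $q$ (mutual information is convex in the channel for fixed input), not affine; convexity still forces the maximum over the simplex to a vertex, so your conclusion $\max_q I(p;V_q)=\max_s I(p;V_s)$ survives.

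The reliability half, however, takes a genuinely different route from the paper, and your citation does not support what you need. Csisz\'ar--Narayan \cite{csiszar-narayan} does \emph{not} supply a random-coding expected-error bound uniform over $s^n$; their results concern \emph{deterministic} codes for \emph{non-symmetrizable} AVCs, and their MMI decoder is analysed only for specific codeword arrays satisfying the structural properties of their Lemma~3, not in expectation over random codewords. The paper instead defines a simple union-of-conditionally-typical-sets decoder $\hat D_{x^n}=\bigcup_{\xi\in\Xi_n} T_{W_\xi,\delta}(x^n)$ (disambiguated to make the $D_{kl}^\gamma$ disjoint), and to bound $\mathbb E\,\bar e_\gamma(s^n)$ uniformly in $s^n$ it exploits the permutation invariance of $T_p$ together with Ahlswede's robustification technique: averaging over permutations of a codeword in $T_p$ amounts to averaging over permutations of $s^n$, which replaces the arbitrary state sequence by an i.i.d.\ one with distribution equal to the type $q$ of $s^n$ (up to a polynomial factor), reducing the problem to a standard packing bound against the averaged memoryless channel $W_q$. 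This is exactly the step you correctly flag as the ``real obstacle,'' and robustification---not the Csisz\'ar--Narayan MMI analysis---is the paper's tool for it. Your MMI-based route can in principle be made to work, but it needs a different ingredient (an AVC or compound-channel random-coding exponent for the universal decoder, showing the expected error depends on $s^n$ only through its type) rather than the reference you cite.
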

While Lemma \ref{lemma:central-lemma-II} delivers the correct interplay between and scaling of the size of the numbers of secret messages $K$, the number of additional messages $L$ that are just being sent in order to obfuscate Eve and the number of values for the (secret) common randomness $\Gamma$ that are being used up in the process, it is insufficient for dealing with the case when $\Gamma$ is set to one or is kept very small. For those cases where the secret or partially secret common randomness $\Gamma$ is set to one for every number of channel uses, we have to deal with the symmetrizability properties of the legal link $\mathfrak W$ from Alice to Bob. Initial statements in that case are as follows:
\begin{thm}[Symmetrizability properties of $C_{\mathrm{S}}$\label{theorem:symmetrizability-properties-of-C1det}]
Let $(\mathfrak W,\mathfrak V)$ be an AVWC.
\begin{enumerate}
\item If $\mathfrak W$ is symmetrizable, then $C_{\mathrm{S}}(\mathfrak W,\mathfrak V)=0$.
\item If $\mathfrak W$ is non-symmetrizable, then $C_{\mathrm{S}}(\mathfrak W,\mathfrak V)=C^{\mathrm{mean}}_{\mathrm{S,ran}}(\mathfrak
    W,\mathfrak V)$.
\end{enumerate}
\end{thm}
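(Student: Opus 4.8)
\textbf{Part 1 (symmetrizable $\mathfrak W\Rightarrow C_{\mathrm S}=0$).} I would argue directly by symmetrization, carrying the stochastic encoder through so that no reduction to deterministic encoders is needed. A secure uncorrelated coding scheme uses $\Gamma_n=1$, i.e.\ one stochastic encoder $e\in C([K_n],\mathcal X^n)$ and disjoint decoding sets $(D_k)_k$, and its error involves only $\mathfrak W$. Fix such a code with $K_n\ge 2$ and a symmetrizing kernel $u(\cdot|\cdot)$ as in \eqref{eqn:definition-of-symmetrizability}; let James draw a decoy message $\hat k$ uniformly, then $\hat x^n\sim e(\cdot|\hat k)$, then $s^n\sim\prod_i u(\cdot|\hat x_i)$. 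Coordinatewise symmetrizability makes the induced averaged channel $\tilde W(y^n|x^n,\hat x^n):=\sum_{s^n}\bigl(\prod_i u(s_i|\hat x_i)\bigr)w_{s^n}(y^n|x^n)$ symmetric under $x^n\leftrightarrow\hat x^n$, so the whole experiment is invariant under exchanging the genuine message/codeword with the decoy; hence Bob's decoder returns the true and the decoy message with equal probability, and since these two events overlap only on $\{k=\hat k\}$ (probability $1/K_n$) the averaged success probability is at most $\tfrac12+\tfrac1{2K_n}\le\tfrac34$. A random jamming strategy cannot beat the worst deterministic one, so some $s^n$ forces $\mathrm{err}(\mathcal K_n)\ge\tfrac14$, killing every positive rate. (This is just the stochastic-encoder form of Ericson's converse \cite{ericson,csiszar-narayan}, and linearity makes the extension routine.)

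\textbf{Part 2, non-symmetrizable $\mathfrak W$.} The inequality $C_{\mathrm S}(\mathfrak W,\mathfrak V)\le C^{\mathrm{mean}}_{\mathrm{S,ran}}(\mathfrak W,\mathfrak V)$ is immediate: an uncorrelated scheme is a common-randomness-assisted one with $\Gamma_n=1$, and then $I(\mathfrak K_n;\mathfrak Z_{s^n})\to0$ is identical to $I(\mathfrak K_n;\mathfrak Z_{s^n}\mid\mathfrak\Gamma_n)\to0$. For ``$\ge$'', fix $\epsilon>0$. By the coding theorem of \cite{wiese-noetzel-boche-I} (formula \eqref{eqn:capacity-formula}) and the finite-coordination result of \cite{bs-finite-coordination} there is a common-randomness-assisted mean-secrecy coding scheme $(\mathcal K_n)_n$ of rate $\ge C^{\mathrm{mean}}_{\mathrm{S,ran}}(\mathfrak W,\mathfrak V)-\epsilon/2$ whose common randomness takes only polynomially many values, $\Gamma_n\le n^{c}$. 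The plan is to \emph{synthesize} this common randomness from $\mathfrak W$ itself by prepending a header block of sublinear length $m_n$ (say $m_n=\lceil\sqrt n\rceil$) that carries $\gamma\in[\Gamma_n]$ via a deterministic, jamming-robust code for $\mathfrak W^{\otimes m_n}$; such a code with vanishing error exists precisely because $\mathfrak W$ is non-symmetrizable, and supplying exactly this robust decoding is the content of the AVWC extension of \cite[Lemma 3]{csiszar-narayan} recorded in Lemma \ref{lemma:central-lemma} --- note that Lemma \ref{lemma:central-lemma-II} already robustly decodes the \emph{main} block, but only at the price of exponentially large $\Gamma$, and Lemma \ref{lemma:central-lemma} is what repairs decoding for $\Gamma$ small. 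Bob decodes $\hat\gamma$ from the header and then applies $D^{\hat\gamma}_{\bullet}$ to the remaining $n$ symbols on which Alice used $e^{\gamma}$; a union bound gives $\mathrm{err}\to0$, and $m_n/n\to0$ keeps the rate at $\ge C^{\mathrm{mean}}_{\mathrm{S,ran}}(\mathfrak W,\mathfrak V)-\epsilon$. (One can also merge header and payload into a single block, which is closer to how Lemma \ref{lemma:central-lemma} is actually applied.)

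\textbf{Secrecy of the concatenation and the main obstacle.} The delicate issue is that Eve observes the header and thereby learns $\gamma$; one must show this leaks nothing, which is exactly the mechanism of the toy computation \eqref{eqn:secrecy-assumption}--\eqref{eqn:consequence-of-secrecy-assumption}. The header output $\mathfrak Z^{\mathrm{hdr}}$ is produced from $\gamma$ alone --- the header encoder ignores the secret message --- through a channel James may disturb but that never sees the message, so $\mathfrak Z^{\mathrm{hdr}}$ is conditionally independent of $\mathfrak K_n$ given $\mathfrak\Gamma_n$, while $\mathfrak\Gamma_n$ is independent of $\mathfrak K_n$. Hence for every pair of jamming sequences on the two sub-blocks,
\begin{align*}
I(\mathfrak K_n;\mathfrak Z^{\mathrm{hdr}},\mathfrak Z^{\mathrm{main}}_{s^n})
&\le I(\mathfrak K_n;\mathfrak\Gamma_n,\mathfrak Z^{\mathrm{main}}_{s^n})\\
&= I(\mathfrak K_n;\mathfrak\Gamma_n)+I(\mathfrak K_n;\mathfrak Z^{\mathrm{main}}_{s^n}\mid\mathfrak\Gamma_n)
= I(\mathfrak K_n;\mathfrak Z^{\mathrm{main}}_{s^n}\mid\mathfrak\Gamma_n),
\end{align*}
and the last term $\to0$ uniformly in $s^n$ by the mean-secrecy property of $(\mathcal K_n)$; maximizing over all jamming on the concatenated block (James' actions on the orthogonal sub-blocks enter independently, as in the toy model) still gives $\to0$. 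So the concatenation is a secure uncorrelated coding scheme of rate $\ge C^{\mathrm{mean}}_{\mathrm{S,ran}}(\mathfrak W,\mathfrak V)-\epsilon$, and $\epsilon\downarrow0$ together with the easy inequality gives equality. I expect the real technical weight to sit in Lemma \ref{lemma:central-lemma} --- the Csisz\'ar--Narayan Lemma 3 extension that must furnish decoding uniform over all jamming sequences for non-symmetrizable $\mathfrak W$ --- while the only other point needing care is to check that the header/payload concatenation gives James no fresh way to correlate his two attacks so as to leak the message, which the orthogonality of the sub-blocks (and the independence structure in the displayed chain) forecloses.
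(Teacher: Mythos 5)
Your proposal is correct and follows essentially the paper's strategy. A few observations on the correspondence.

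\textbf{Part 1.} The paper simply cites Ericson \cite{ericson}. Your argument spells this out and carries the stochastic encoder through the symmetrization directly rather than first reducing to deterministic encoders; the joint distribution $\frac{1}{K_n^2}e(x^n|k)e(\hat x^n|\hat k)\tilde W(y^n|x^n,\hat x^n)$ is invariant under the swap, the success bound $\tfrac12+\tfrac1{2K_n}$ follows from disjointness outside $\{k=\hat k\}$, and the existence of a bad deterministic $s^n$ is just an averaging argument. This is a clean rendering and is fully compatible with the paper's citation.

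\textbf{Part 2.} The paper gives two proofs, and yours coincides with the first: a header that robustly transmits the common-randomness index $\gamma$ over $\mathfrak W$ (possible precisely because $\mathfrak W$ is non-symmetrizable), concatenated with the correlated-random coding scheme of \cite{wiese-noetzel-boche-I} on the payload, with the secrecy of the concatenation established exactly via the Markov chain computation \eqref{eqn:secrecy-assumption}--\eqref{eqn:consequence-of-secrecy-assumption} that you reproduce: $I(\mathfrak K_n;\mathfrak Z^{\mathrm{hdr}},\mathfrak Z^{\mathrm{main}}_{s^n})\le I(\mathfrak K_n;\mathfrak Z^{\mathrm{main}}_{s^n}\mid\mathfrak\Gamma_n)\to0$ uniformly in the jamming on both sub-blocks. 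Two minor differences: (i) you choose a sublinear header length $m_n=\lceil\sqrt n\rceil$, which incurs no rate loss; the paper splits $n=k+l$ with $l=\lfloor\lambda n\rfloor$, so the header is linear but at vanishing rate, and the rate loss is recovered by $\lambda\to1$. Your choice is slightly cleaner. (ii) For the header you only need Csisz\'ar--Narayan's original positivity result; Lemma \ref{lemma:central-lemma} (with $\Gamma=L=1$) subsumes it, so invoking it is correct but not strictly necessary. Your parenthetical remark about merging header and payload into a single block corresponds to the paper's second proof, which avoids the header entirely by restricting the pre-codings in \eqref{eqn:capacity-formula} to $\tilde U_r=U_{r-1}\otimes Id$ (which provably preserves non-symmetrizability of $\mathfrak W^{\otimes r}\circ\tilde U_r$, unlike arbitrary pre-coding --- see Example \ref{example:invertible-pre-coding-induces-symmetrizability}) and then applying Lemma \ref{lemma:central-lemma} in full to get reliability and secrecy from a single random codebook. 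You did not develop that route, but your first route is complete and sufficient.

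One small caveat worth being explicit about, which you implicitly get right: the obfuscating index you synthesize over the header is visible to Eve, so the payload scheme must satisfy the \emph{mean} secrecy criterion $I(\mathfrak K;\mathfrak Z_{s^n}\mid\mathfrak\Gamma)\to0$ rather than the averaged-over-$\gamma$ distributional bound of Lemma \ref{lemma:central-lemma-II} (which only controls the $\gamma$-averaged output). Using the scheme of \cite{wiese-noetzel-boche-I}, as you do, is therefore the correct choice.
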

We now start to take on a slightly different point of view, under which the AVWC becomes an object that has some parameters which can be subject to changes. When considering practical deployment aspects, such a point of view is necessary as all information we may have gathered about the channel during for example a training phase may not be accurate enough to model the real-world behaviour. Thus one needs to understand whether a slight error in the parameters may lead to catastrophic events, and this is the content of our next theorem.
\begin{thm}[Stability of $C_{\mathrm{S}}$\label{theorem:stability-properties-of-C1det}] Let $(\mathfrak W,\mathfrak V)$ be an AVWC. If
$(\mathfrak W,\mathfrak V)$ satisfies $C_{\mathrm{S}}(\mathfrak W,\mathfrak V)>0$ then there is an $\epsilon>0$ such that for all
$(\mathfrak W',\mathfrak V')$ satisfying $d((\mathfrak W,\mathfrak V),(\mathfrak W',\mathfrak V'))\leq\epsilon$ we have
$C_{\mathrm{S}}(\mathfrak W',\mathfrak V')>0$.
\end{thm}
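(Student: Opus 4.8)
The plan is to combine the two prior results that govern $C_{\mathrm{S}}$ — namely Theorem \ref{theorem:symmetrizability-properties-of-C1det}, which ties $C_{\mathrm{S}}$ to $C^{\mathrm{mean}}_{\mathrm{S,ran}}$ via symmetrizability, and the known fact (from \cite{wiese-noetzel-boche-I}) that $C^{\mathrm{mean}}_{\mathrm{S,ran}}$ is continuous in the metric $d$ — together with a robustness statement for non-symmetrizability expressed through the function $F$ of Definition \ref{def:definition-of-F}. The starting observation is that $C_{\mathrm{S}}(\mathfrak W,\mathfrak V)>0$ forces two things at once: by part (1) of Theorem \ref{theorem:symmetrizability-properties-of-C1det} (contrapositive) the channel $\mathfrak W$ must be \emph{non}-symmetrizable, hence $F(\mathfrak W)>0$; and by part (2), $C_{\mathrm{S}}(\mathfrak W,\mathfrak V)=C^{\mathrm{mean}}_{\mathrm{S,ran}}(\mathfrak W,\mathfrak V)=:\alpha>0$. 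So I actually only need to perturb while (a) keeping non-symmetrizability and (b) keeping $C^{\mathrm{mean}}_{\mathrm{S,ran}}$ positive; then part (2) of Theorem \ref{theorem:symmetrizability-properties-of-C1det} applied to the perturbed channel closes the argument.

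For step (b) I invoke continuity of $C^{\mathrm{mean}}_{\mathrm{S,ran}}$ in $d$: there is $\epsilon_1>0$ so that $d((\mathfrak W,\mathfrak V),(\mathfrak W',\mathfrak V'))\leq\epsilon_1$ implies $C^{\mathrm{mean}}_{\mathrm{S,ran}}(\mathfrak W',\mathfrak V')\geq\alpha/2>0$. For step (a) I need that non-symmetrizability is an open condition. Here the function $F$ is the right tool: I would argue that $\mathfrak W\mapsto F(\mathfrak W)$ is continuous with respect to the channel distance $\|\cdot\|$ (it is a max-min of a continuous function of $\mathfrak W$ over the compact set $C(\mathcal X,\mathcal S)\times\{(x,x'):x\neq x'\}$, and such max-mins are Lipschitz in the data), so that $F(\mathfrak W)>0$ yields an $\epsilon_2>0$ with $F(\mathfrak W')>0$, i.e. $\mathfrak W'$ non-symmetrizable, whenever $\|\mathfrak W-\mathfrak W'\|\leq\epsilon_2$ — where one also has to check that closeness in the AVWC metric $d$ implies the required closeness of the $\mathfrak W$-marginals (the state alphabets may be renamed, but the $g$-part of $d$ controls, for each state of one channel, a nearby state of the other, which is exactly what is needed to compare the values of $F$, since $F$ is itself a max over stochastic matrices $U$ and any relabelling/merging of states can be absorbed into $U$). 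Taking $\epsilon:=\min\{\epsilon_1,\epsilon_2\}$ and applying Theorem \ref{theorem:symmetrizability-properties-of-C1det}(2) to $(\mathfrak W',\mathfrak V')$ gives $C_{\mathrm{S}}(\mathfrak W',\mathfrak V')=C^{\mathrm{mean}}_{\mathrm{S,ran}}(\mathfrak W',\mathfrak V')\geq\alpha/2>0$.

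The main obstacle I anticipate is the bookkeeping in step (a): the metric $d$ is defined via $g(\mathfrak W\otimes\mathfrak V,\mathfrak W'\otimes\mathfrak V')$, so one must first extract from smallness of $d$ a genuine bound on the distance between the $\mathfrak W$-parts alone (marginalizing over $\mathcal Z$ can only decrease the one-norm, so $g(\mathfrak W,\mathfrak W')\leq g(\mathfrak W\otimes\mathfrak V,\mathfrak W'\otimes\mathfrak V')$, which handles this), and then show that a small perturbation of $\mathfrak W$ in the $g$/Hausdorff sense changes $F$ by only a small amount even though the state sets and their cardinalities differ. The clean way is to prove a uniform estimate $|F(\mathfrak W)-F(\mathfrak W')|\leq 2\,g(\mathfrak W,\mathfrak W')$ (or a similar constant) directly from the definition \eqref{eqn:definition-of-F}, exploiting that for each state $s$ of $\mathfrak W$ there is a state $s'$ of $\mathfrak W'$ with $\|W_s-W'_{s'}\|$ small, composing the optimal $U$ for one channel with the induced (possibly many-to-one) map on states to get a competitor $U'$ for the other; the inner minimum over $x\neq x'$ and the outer maximum over $U$ then transfer with the loss bounded by the triangle inequality. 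Once this Lipschitz-type bound is in hand, the theorem follows immediately by the two-$\epsilon$ argument above; no new coding construction is required.
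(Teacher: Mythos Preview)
Your proposal is correct and follows essentially the same route as the paper: deduce non-symmetrizability of $\mathfrak W$ from $C_{\mathrm{S}}(\mathfrak W,\mathfrak V)>0$ via Theorem~\ref{theorem:symmetrizability-properties-of-C1det}(1), use continuity of $F$ to keep non-symmetrizability under perturbation, use continuity of $C^{\mathrm{mean}}_{\mathrm{S,ran}}$ from \cite{wiese-noetzel-boche-I} to keep positivity of the random-coding capacity, and then re-apply Theorem~\ref{theorem:symmetrizability-properties-of-C1det}(2) to the perturbed AVWC. The paper simply cites \cite{bn-positivity} for the continuity of $F$ with respect to the Hausdorff distance, whereas you sketch a direct Lipschitz estimate; you are also more explicit than the paper about recovering closeness of the $\mathfrak W$-marginals from smallness of the full metric $d$ (via the contraction of $\|\cdot\|_1$ under marginalization), a point the paper leaves implicit.
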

However, despite the reassuring statement of Theorem \ref{theorem:stability-properties-of-C1det}, care has to be taken at some points, which are characterized below.
\begin{thm}[Discontinuity properties of $C_{\mathrm{S}}$\label{theorem:discontinuity-properties-of-C1det}] Let $(\mathfrak W,\mathfrak
V)$ be an AVWC.
\begin{enumerate}
\item The function $C_{\mathrm{S}}$ is discontinuous at the point $(\mathfrak W,\mathfrak V)$ if and only if the following hold:
    First, $C^{\mathrm{mean}}_{\mathrm{S,ran}}(\mathfrak W,\mathfrak V)>0$ and second $F(\mathfrak W)=0$ but for all $\epsilon>0$ there is $\mathfrak
    W_\epsilon$ such that $d(\mathfrak W,\mathfrak W_\epsilon)<\epsilon$ and $F(\mathfrak W_\epsilon)>0$.
\item If $C_{\mathrm{S}}$ is discontinuous in the point $(\mathfrak W,\mathfrak V)$ then it is discontinuous for all $\hat{\mathfrak
    V}$ for which $C^{\mathrm{mean}}_{\mathrm{S,ran}}(\mathfrak{W},\hat{\mathfrak V})>0$.
\end{enumerate}
Note that $F(\mathfrak W)=0$ is equivalent to $\mathfrak W$ being symmetrizable - a property which is defined in the introduction in equation (\ref{eqn:definition-of-symmetrizability}). The function $F$ itself is the content of Definition \ref{def:definition-of-F}.
\end{thm}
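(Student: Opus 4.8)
The plan is to reduce the statement to Theorem~\ref{theorem:symmetrizability-properties-of-C1det}, the continuity of $C^{\mathrm{mean}}_{\mathrm{S,ran}}$ established in \cite{wiese-noetzel-boche-I}, and two elementary topological facts about symmetrizability. The organizing remark is that, by Theorem~\ref{theorem:symmetrizability-properties-of-C1det}, on the set of AVWCs whose legal link is non-symmetrizable (an open set, by lemma (ii) below) $C_{\mathrm{S}}$ coincides with the continuous function $C^{\mathrm{mean}}_{\mathrm{S,ran}}$, while on AVWCs whose legal link is symmetrizable $C_{\mathrm{S}}$ is identically $0$; hence $C_{\mathrm{S}}$ can only be discontinuous at a boundary point between the two regimes, and only if the ``jump'' is not killed by $C^{\mathrm{mean}}_{\mathrm{S,ran}}$ itself vanishing there. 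I will also use that $C_{\mathrm{S}}\le C^{\mathrm{mean}}_{\mathrm{S,ran}}$ always (secure uncorrelated codes are the special case $\Gamma_n=1$ of common-randomness codes under the mean secrecy criterion), and that the metric $d$ dominates the Hausdorff distance of the $\mathfrak W$-components (marginalize the tensor over Eve's output: $\|W_s\otimes V_s-W'_{s'}\otimes V'_{s'}\|\ge\|W_s-W'_{s'}\|$), so that statements about symmetrizability near $\mathfrak W$ transfer to $d$-neighbourhoods in AVWC space.

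First I would record the two lemmas. (i) A sub-AVC of a non-symmetrizable AVC is non-symmetrizable (extend a symmetrizing kernel by zero on the removed states; identity \eqref{eqn:definition-of-symmetrizability} is unaffected). Consequently, if $F(\mathfrak W)=0$ but some non-symmetrizable $\mathfrak W_\epsilon$ has $d(\mathfrak W,\mathfrak W_\epsilon)<\epsilon$, then picking for each state $s$ of $\mathfrak W$ a nearest state of $\mathfrak W_\epsilon$ produces a non-symmetrizable AVC $\widehat{\mathfrak W}_\epsilon$ with the \emph{same} state set as $\mathfrak W$ and $\|\widehat W_{\epsilon,s}-W_s\|<\epsilon$ for all $s$, so the AVWC $(\widehat{\mathfrak W}_\epsilon,\mathfrak V)$ lies within $\epsilon$ of $(\mathfrak W,\mathfrak V)$. (ii) The set of symmetrizable AVCs is $d$-closed: given symmetrizable $\mathfrak W_k\to\mathfrak W$ with symmetrizers $u_k$, push $u_k$ forward along a nearest-state map $\mathcal S_k\to\mathcal S$, pass to a limit kernel in $C(\mathcal X,\mathcal S)$ by finiteness of $\mathcal S$ and compactness, and check it symmetrizes $\mathfrak W$ using continuity of \eqref{eqn:definition-of-symmetrizability} together with $g(\mathfrak W_k,\mathfrak W)\to0$; equivalently, non-symmetrizability is open.

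Statement~(1) then follows by a case analysis. ``If'': $F(\mathfrak W)=0$ gives $C_{\mathrm{S}}(\mathfrak W,\mathfrak V)=0$ by Theorem~\ref{theorem:symmetrizability-properties-of-C1det}(1); for each $\epsilon$, $(\widehat{\mathfrak W}_\epsilon,\mathfrak V)$ from (i) lies within $\epsilon$ of $(\mathfrak W,\mathfrak V)$ and has non-symmetrizable legal link, whence $C_{\mathrm{S}}(\widehat{\mathfrak W}_\epsilon,\mathfrak V)=C^{\mathrm{mean}}_{\mathrm{S,ran}}(\widehat{\mathfrak W}_\epsilon,\mathfrak V)\to C^{\mathrm{mean}}_{\mathrm{S,ran}}(\mathfrak W,\mathfrak V)>0$ by continuity, a genuine discontinuity. ``Only if'' (contrapositive): if $C^{\mathrm{mean}}_{\mathrm{S,ran}}(\mathfrak W,\mathfrak V)=0$ then $0\le C_{\mathrm{S}}\le C^{\mathrm{mean}}_{\mathrm{S,ran}}\to0$, so $C_{\mathrm{S}}$ is continuous at the point; if $F(\mathfrak W)>0$ then by (ii) a $d$-neighbourhood of $(\mathfrak W,\mathfrak V)$ has non-symmetrizable legal link, on which $C_{\mathrm{S}}=C^{\mathrm{mean}}_{\mathrm{S,ran}}$ is continuous; and if $F(\mathfrak W)=0$ but some $d$-neighbourhood of $\mathfrak W$ consists of symmetrizable AVCs, then $C_{\mathrm{S}}\equiv0$ near $(\mathfrak W,\mathfrak V)$. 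These three cases exhaust the negation of the stated condition, giving equivalence. Statement~(2) is immediate: by statement~(1) discontinuity at $(\mathfrak W,\mathfrak V)$ forces $F(\mathfrak W)=0$ and the approximation property, both involving only $\mathfrak W$, and $C^{\mathrm{mean}}_{\mathrm{S,ran}}(\mathfrak W,\hat{\mathfrak V})>0$ is assumed, so the criterion of statement~(1) holds for $(\mathfrak W,\hat{\mathfrak V})$ too.

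The main obstacle is lemma~(ii): closedness of the symmetrizable set — equivalently, openness of non-symmetrizability — inside $d$. Because $d$ allows the number of channel states to vary, this cannot be read off from continuity of $F$ on a fixed finite-dimensional parameter space; the nearest-state pushforward of symmetrizing kernels (and the dual nearest-state sub-AVC selection in (i)) is precisely the device that circumvents this. Everything else is bookkeeping layered on Theorem~\ref{theorem:symmetrizability-properties-of-C1det} and the continuity of $C^{\mathrm{mean}}_{\mathrm{S,ran}}$.
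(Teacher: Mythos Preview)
Your proposal is correct and follows essentially the same route as the paper's proof: both reduce the statement to Theorem~\ref{theorem:symmetrizability-properties-of-C1det}, the continuity of $C^{\mathrm{mean}}_{\mathrm{S,ran}}$ from \cite{wiese-noetzel-boche-I}, the trivial bound $C_{\mathrm{S}}\le C^{\mathrm{mean}}_{\mathrm{S,ran}}$, and the fact that non-symmetrizability is an open condition in the Hausdorff distance. The paper obtains this last fact from continuity of $F$ (asserted and referred to \cite{bn-positivity}) and channels part of the ``only if'' direction through Theorem~\ref{theorem:stability-properties-of-C1det}; you instead prove the closedness of the symmetrizable set directly via the pushforward-of-symmetrizers argument (your lemma~(ii)) and run the case analysis without invoking Theorem~\ref{theorem:stability-properties-of-C1det}. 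The content is the same, your packaging is more self-contained.

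One point where your version is actually more careful than the paper's: your lemma~(i), selecting for each state of $\mathfrak W$ a nearest channel from $\mathfrak W_\epsilon$ to build a non-symmetrizable $\widehat{\mathfrak W}_\epsilon$ with the \emph{same} state set as $\mathfrak W$ (hence as $\mathfrak V$), handles an issue the paper's proof glosses over when it writes $(\mathfrak W_{\epsilon'},\mathfrak V)$ without verifying that the two pieces share a common state set. Your observation that symmetrizability depends only on the set of channels (so a relabelled sub-AVC of a non-symmetrizable AVC remains non-symmetrizable) is exactly the right fix, and the accompanying remark that $d$ on AVWCs dominates the Hausdorff distance on the $\mathfrak W$-marginals (via $\|W_s\otimes V_s-W'_{s'}\otimes V'_{s'}\|\ge\|W_s-W'_{s'}\|$) is what makes the whole thing go through cleanly.
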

The take-away from above Theorem is two-fold: First, it delivers a criterion for the finding of a point of discontinuity that only requires the validation that $C^\mathrm{mean}_\mathrm{S,ran}$ (a continuous function) is nonzero in a specific point and the running of a convex optimization problem (calculation of $F$ in that point). Second, it becomes clear that any discontinuity of the capacity $C_\mathrm{S}$ arises solely from effects that stem from the ``legal'' link $\mathfrak W$ - changing $\mathfrak V$ has no effect on discontinuity.
\begin{cor}
For every $\mathfrak W$, the function $\mathfrak V\mapsto C_{\mathrm{S}}(\mathfrak W,\mathfrak V)$ is continuous.
\end{cor}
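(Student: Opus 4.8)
The plan is to read the corollary off directly from Theorem~\ref{theorem:symmetrizability-properties-of-C1det}, which already expresses $C_{\mathrm{S}}(\mathfrak W,\mathfrak V)$ in two regimes that are distinguished by a property of $\mathfrak W$ \emph{alone}. Fix $\mathfrak W$ once and for all and split into two cases. If $\mathfrak W$ is symmetrizable, equivalently $F(\mathfrak W)=0$, then part~1 of Theorem~\ref{theorem:symmetrizability-properties-of-C1det} gives $C_{\mathrm{S}}(\mathfrak W,\mathfrak V)=0$ for \emph{every} $\mathfrak V$, so the map $\mathfrak V\mapsto C_{\mathrm{S}}(\mathfrak W,\mathfrak V)$ is identically zero and hence trivially continuous.

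In the remaining case $\mathfrak W$ is non-symmetrizable. Symmetrizability is a condition on $\mathfrak W$ only, so it persists when $\mathfrak V$ is varied; thus part~2 of Theorem~\ref{theorem:symmetrizability-properties-of-C1det} yields $C_{\mathrm{S}}(\mathfrak W,\mathfrak V)=C^{\mathrm{mean}}_{\mathrm{S,ran}}(\mathfrak W,\mathfrak V)$ for every $\mathfrak V$. It then suffices to know that $\mathfrak V\mapsto C^{\mathrm{mean}}_{\mathrm{S,ran}}(\mathfrak W,\mathfrak V)$ is continuous, and this follows from the continuity of $C^{\mathrm{mean}}_{\mathrm{S,ran}}$ as a function of the pair $(\mathfrak W,\mathfrak V)$ with respect to the metric $d$, established in \cite{wiese-noetzel-boche-I}: restricting a jointly continuous function to the slice on which $\mathfrak W$ is held fixed preserves continuity, since $\mathfrak V_n\to\mathfrak V$ entails $d((\mathfrak W,\mathfrak V_n),(\mathfrak W,\mathfrak V))\to 0$ by continuity of the tensor product $\mathfrak V\mapsto\mathfrak W\otimes\mathfrak V$ and of $g$. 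The two cases together prove the corollary.

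Equivalently, one can route the argument through Theorem~\ref{theorem:discontinuity-properties-of-C1det}: its part~1 shows that a discontinuity of $C_{\mathrm{S}}$ at $(\mathfrak W,\mathfrak V)$ forces $F(\mathfrak W)=0$, i.e.\ $\mathfrak W$ symmetrizable, in which case $C_{\mathrm{S}}(\mathfrak W,\cdot)\equiv 0$ anyway; and if no such discontinuity occurs at any $(\mathfrak W,\mathfrak V)$ then $C_{\mathrm{S}}$ is in fact jointly continuous there, a fortiori continuous in $\mathfrak V$. Either route is essentially bookkeeping on top of the earlier theorems; the only points needing a line of care are the passage from joint continuity of $C^{\mathrm{mean}}_{\mathrm{S,ran}}$ to continuity of its restriction, and the observation that the symmetrizable/non-symmetrizable alternative is governed by $\mathfrak W$ alone --- which is precisely the structural message stated after Theorem~\ref{theorem:discontinuity-properties-of-C1det}. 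I do not expect any genuine obstacle here: the substantive content is entirely carried by Theorems~\ref{theorem:symmetrizability-properties-of-C1det} and~\ref{theorem:discontinuity-properties-of-C1det}.
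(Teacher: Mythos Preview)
Your proposal is correct and matches the paper's reasoning: the corollary is stated without explicit proof, as an immediate consequence of the preceding theorems, and your two-case argument via Theorem~\ref{theorem:symmetrizability-properties-of-C1det} (symmetrizable $\mathfrak W$ gives $C_{\mathrm{S}}\equiv 0$; non-symmetrizable $\mathfrak W$ gives $C_{\mathrm{S}}=C^{\mathrm{mean}}_{\mathrm{S,ran}}$, which is continuous by \cite{wiese-noetzel-boche-I}) is exactly the intended derivation. Your alternative route through Theorem~\ref{theorem:discontinuity-properties-of-C1det} is also in the spirit of the paper, since the corollary is placed immediately after that theorem with the remark that discontinuities stem solely from $\mathfrak W$.
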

Note that discontinuity is caused both by the legal link $\mathfrak W$ (see statement 1) and the link $\mathfrak V$ to Eve (statement 2), but depends on $\mathfrak V$ only insofar as the capacity $C^{\mathrm{mean}}_{\mathrm{S,ran}}(\mathfrak{W},\hat{\mathfrak V})$ has to stay above zero in order for a discontinuity to occur.\\
Theorem \ref{theorem:discontinuity-properties-of-C1det} also delivers an efficient way for calculating whether $C_\mathrm{S}$ is discontinuous in a specific point or not: One only needs to give a good-enough approximation of the continuous function $C^\mathrm{mean}_\mathrm{S,ran}$ and then run a convex optimization in order to calculate $F(\mathfrak W)$. Regarding future research, it may therefore be of interest to quantify the degree of continuity of the capacity of arbitrarily varying channels in those regions where it is continuous.
\begin{rem}
It is necessary to request the existence of the $\mathfrak W_\epsilon$ in the first statement of Theorem
\ref{theorem:discontinuity-properties-of-C1det}, and an easy example why this is so is the following:\\
Define $W_{i,\epsilon}\in C(\{1,2\},\{1,2,3\})$ for $i=1,2$ and $\epsilon\in[0,1/2]$ by
\begin{align}
W_{1,\epsilon}:=\left(\begin{array}{ll} 0&1-\epsilon\\ \epsilon&0\\ 1-\epsilon&\epsilon\end{array}\right),\qquad
W_{2,\epsilon}:=\left(\begin{array}{ll}1-\epsilon&0\\ \epsilon&1-\epsilon\\ 0&\epsilon\end{array}\right).
\end{align}
For every $\epsilon\in[0,1/2]$, these AVCs are symmetrizable with $u(1|1)=\epsilon/(1-\epsilon)$ and $u(1|2)=(1-2\cdot\epsilon)/(1-\epsilon)$.
The reason for this is that for every $\epsilon\in[0,1/2]$ the convex sets $\conv(\{W_{1,\epsilon}(\delta_1),W_{2,\epsilon}(\delta_1)\})$ and
$\conv(\{W_{1,\epsilon}(\delta_2),W_{2,\epsilon}(\delta_2)\})$ have non-empty intersections. It is also geometrically clear that for any
$\epsilon\in(0,1/2)$, there will be a small vicinity of AVCs which share this property. Thus, around such a $\mathfrak W_\epsilon$, all other AVCs
are symmetrizable as well and for every $\mathfrak V$ we therefore have both $C_{\mathrm{S}}(\mathfrak W_\epsilon,\mathfrak V)=0$ and
$C_{\mathrm{S}}(\mathfrak W',\mathfrak V)=0$ whenever $d(\mathfrak W_\epsilon,\mathfrak W')$ is small enough.\\
It is additionally clear from \cite{bbt-avc} that $C^{\mathrm{mean}}_{\mathrm{S,ran}}(\mathfrak W_0,\mathfrak T)>0$ and that it is therefore (since
$C^{\mathrm{mean}}_{\mathrm{S,ran}}$ is continuous by the results of \cite{wiese-noetzel-boche-I}) possible to choose $\mathfrak V$ and $\delta>0$ such
that $C^{\mathrm{mean}}_{\mathrm{S,ran}}(\mathfrak W_0,\mathfrak V)>0$, $C_{\mathrm{S}}(\mathfrak W_0,\mathfrak V)=0$, and
$C_{\mathrm{S}}(\mathfrak W',\mathfrak V)=0$ whenever $d(\mathfrak W_\delta,\mathfrak W')$ is small enough.\\
It is easy to see that the AVC $\mathfrak W_0$ does not share this property: Although $C^{\mathrm{mean}}_{\mathrm{S,ran}}(\mathfrak W_0,\mathfrak T)>0$ and
$C^{\mathrm{mean}}_{\mathrm{S,ran}}(\mathfrak W_0,\mathfrak T)>0$, it is easy to find explicit examples of AVCs $\mathfrak W'$ which are arbitrarily close to
$\mathfrak W_0$ but are non-symmetrizable.
\end{rem}
Of course, every whatsoever nice characterization of a set of interesting objects is pretty useless if the set turns out to be empty.
Fortunately, it has been proven in \cite{boche-schaefer-poor} that the function mapping an AVC $\mathfrak W$ to its capacity has discontinuity points by explicit example.\\ Such an example is also given by $(\mathfrak W_0,\mathfrak T)$ with $\mathfrak
W_0$ taken from above.
\begin{rem}
The capacity $C^{\mathrm{mean}}_{\mathrm{S,ran}}(\mathfrak W,\mathfrak V)$ was quantified in \cite{wiese-noetzel-boche-I}. It satisfies
\begin{align}
C^{\mathrm{mean}}_{\mathrm{S,ran}}(\mathfrak W,\mathfrak V)=\lim_{G\to0}C_{\mathrm{key}}(\mathfrak W,\mathfrak V,G)=C^*(\mathfrak W,\mathfrak V).
\end{align}
\end{rem}
The proofs of Theorems \ref{thm:coding-theorem} and Theorem \ref{theorem:symmetrizability-properties-of-C1det} are carried out by providing coding strategies.  The proof of the direct part of Theorem \ref{theorem:symmetrizability-properties-of-C1det} extends the techniques from \cite{csiszar-narayan} by adding constraints on the random code that lead to it having additional security features. These features are quantified in the following Lemma:
\begin{lem}\label{lemma:central-lemma}
For any $\tau>0$ and $\beta>0$, there exists a value $\nu(\tau)>0$ and an $N_0(\tau)$ such that for all $n\geq N_0(\tau)$, natural numbers
$K,L,\Gamma$ satisfying $K\cdot L\geq2^{n\cdot\tau}$ and type $p\in\mathcal P_0^n(\mathcal X)$ satisfying
$\min_{x:p(x)>0}p(x)\geq\beta$, there exist codewords $(\mathbf x_{kl\gamma})_{k,l,\gamma=1}^{K,L,\Gamma}$ in $T_p\subset\mathcal X^n$, and a $c'>0$ such that if $\Gamma^{-1}>\exp(-2^{n\cdot c'})$ and upon setting $R=\frac{1}{n}\log(K\cdot L)$ we have
\begin{align}
\max_{\gamma,x^n,s^n}|\{(k,l):(x^n,\mathbf x_{kl\gamma},s^n)\in T_{\bar N(\cdot|x^n,\mathbf x_{kl\gamma},s^n)}\}|\leq2^{n(|R-I(\mathbf
x_{kl\gamma};x^n,s^n)|^++\tau)}\label{eqn:property-1-of-central-lemma}\\
\max_{\gamma,s^n}|\{(k,l):I(\mathbf x_{kl\gamma};s^n)\geq\tau\}|\leq K\cdot L\cdot2^{-n\cdot\tau}\label{eqn:property-2-of-central-lemma}\\
\max_{\gamma,s^n} \left|\left\{(k,l,\gamma):\begin{array}{l}\mathrm{There\ is}\ (k',l',\gamma')\neq (k,l,\gamma)\ \mathrm{such\ that}\\
I(\mathbf x_{kl\gamma};\mathbf x_{k'l'\gamma'},s^n)-|R-I(\mathbf x_{kl\gamma};s^n)|^+>\tau
\end{array}\right\}\right|\leq K\cdot L\cdot 2^{-n\cdot\tau/2}\label{eqn:property-4-of-central-lemma}\\
\frac{\log L\cdot\Gamma}{n}\geq \max_{q\in\mathcal P(\mathcal S)}I(p;V_q)+\tau\ \Rightarrow\ \max_{s^n,k}\left\|\frac{1}{L\cdot\Gamma}\sum_{l,\gamma=1}^{L,\Gamma}V_{s^n}(\cdot|\mathbf x_{kl\gamma})-\mathbb
EV_{s^n}(\cdot|X^n)\right\|_1\leq2^{-n\cdot\nu(\tau)}\label{eqn:property-5-of-central-lemma}
\end{align}
where $X^n$ is distributed\ according\ to\ $\mathbb P(X^n=x^n):=\frac{1}{|T_p|}\mathbbm1_{T_p}(x^n)$ and the dependence of $\nu$ on $\tau$ is
such that $\lim_{\tau\to0}\nu(\tau)=0$.
\end{lem}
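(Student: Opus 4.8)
The plan is to produce the codewords by a single random experiment and to check that all of \eqref{eqn:property-1-of-central-lemma}, \eqref{eqn:property-2-of-central-lemma}, \eqref{eqn:property-4-of-central-lemma} and \eqref{eqn:property-5-of-central-lemma} hold simultaneously with probability tending to one; the point is that the proof of Theorem \ref{theorem:symmetrizability-properties-of-C1det} needs the combinatorial estimates \eqref{eqn:property-1-of-central-lemma}--\eqref{eqn:property-4-of-central-lemma} (to run a Csiszar--Narayan maximum-mutual-information decoder against the jammer) and the soft-covering estimate \eqref{eqn:property-5-of-central-lemma} (for secrecy) \emph{from the same code}. Concretely, draw the codewords $\mathbf x_{kl\gamma}$, $(k,l,\gamma)\in[K]\times[L]\times[\Gamma]$, independently, each uniformly on $T_p$, and write $M:=K\cdot L=2^{nR}$. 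The hypotheses enter as usual: $K\cdot L\geq2^{n\tau}$ keeps $R$ bounded away from $0$; $\min_{x:p(x)>0}p(x)\geq\beta$ lets the polynomial-in-$n$ type-counting factors be absorbed into the exponents uniformly in $p$; and the constraint on $\Gamma$ (with $c'$ fixed below) is what a union bound over the $\Gamma$ independent ``shared-randomness blocks'' $(\mathbf x_{kl\gamma})_{k,l}$ requires (and, implicitly, what keeps the total number of codewords small enough that coincidences are negligible). It then suffices to show that each of the four properties is violated by a random code only with (doubly) exponentially small probability.

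Properties \eqref{eqn:property-1-of-central-lemma}, \eqref{eqn:property-2-of-central-lemma} and \eqref{eqn:property-4-of-central-lemma} are the conditional-type estimates underlying the Csiszar--Narayan decoder, and I would obtain them by transferring the random-coding estimates of \cite[Lemma~3 and its surrounding estimates]{csiszar-narayan} to the present ensemble. The scheme is standard: fix $\gamma$, a pair $(x^n,s^n)$, and a joint type $V$ on $\mathcal X\times\mathcal X\times\mathcal S$ compatible with $p$, $\bar N(\cdot|x^n)$, $\bar N(\cdot|s^n)$; then $N_V:=|\{(k,l):\bar N(\cdot|\mathbf x_{kl\gamma},x^n,s^n)=V\}|$ is a sum of $M$ i.i.d.\ Bernoulli variables whose mean is $2^{n(R-I_V)}$ up to a polynomial factor, $I_V$ being the mutual information determined by $V$. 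If $R-I_V\geq\tau/2$, a Chernoff bound gives $N_V\leq2\,\mathbb E N_V\leq2^{n(R-I_V+\tau)}$ except with probability $\exp(-c\,2^{n\tau/2})$; if $R-I_V<\tau/2$, a truncated first-moment bound gives $\mathbb P[N_V\geq2^{n\tau}]\leq\exp(-c\,2^{n\tau/2})$ (with $c>0$ a constant). Summing over the polynomially many types $V$ and then over the exponentially many sequences $x^n,s^n$ still leaves, for each fixed $\gamma$, a doubly-exponentially small failure probability, and $c'$ is taken to be a suitable fraction of that exponent so that $\Gamma$ times the per-block failure probability stays below $1$ in the stated range of $\Gamma$. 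The pairwise estimate \eqref{eqn:property-4-of-central-lemma} is obtained by the same tools, bounding the expected number of indices having a ``bad partner'' through a union bound over the partner together with the fact that, for fixed $s^n$, the law of $I(\mathbf x_{kl\gamma};s^n)$ for $\mathbf x_{kl\gamma}$ uniform on $T_p$ concentrates near $0$; Markov's inequality (and the usual concentration of the resulting sum of indicators) then upgrades the small expectation to the claimed $2^{-n\tau/2}$-fraction bound with a failure probability small enough for the union over $s^n$.

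Property \eqref{eqn:property-5-of-central-lemma} is exactly the soft-covering estimate \eqref{eqn:exponential-bound-on-variational-distance-in-lemma-II} already proved for Lemma \ref{lemma:central-lemma-II}, and its proof applies verbatim: for each fixed $k$ the $L\cdot\Gamma$ codewords $(\mathbf x_{kl\gamma})_{l,\gamma}$ are i.i.d.\ uniform on $T_p$, so $\tfrac{1}{L\cdot\Gamma}\sum_{l,\gamma}V_{s^n}(\cdot|\mathbf x_{kl\gamma})$ is an empirical average with mean exactly $\mathbb E V_{s^n}(\cdot|X^n)$; the hypothesis $\tfrac{1}{n}\log(L\cdot\Gamma)\geq\max_qI(p;V_q)+\tau$ together with the single-letter bound $I(p^{\otimes n};V_{s^n})\leq n\max_qI(p;V_q)$, valid for every product channel $V_{s^n}$, gives the required rate margin uniformly in $s^n$, and a resolvability (operator-Chernoff) estimate yields the $\ell_1$-deviation $\leq2^{-n\nu(\tau)}$ except with doubly-exponentially small probability; the union over the exponentially many $s^n$ and over $k$ is then harmless, and this is where the function $\nu(\tau)$ with $\lim_{\tau\to0}\nu(\tau)=0$ is produced. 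Since each property then fails with probability $o(1)$ for $n\geq N_0(\tau)$ and $\Gamma$ in the stated range, the bad events do not exhaust the probability space and a realization satisfying all of \eqref{eqn:property-1-of-central-lemma}, \eqref{eqn:property-2-of-central-lemma}, \eqref{eqn:property-4-of-central-lemma}, \eqref{eqn:property-5-of-central-lemma} exists; this is the desired code. I expect the main effort to be in the second paragraph: splitting the analysis according to whether the expected conditional-type count is exponentially large or small, transferring the Csiszar--Narayan single-code estimates to the $\gamma$-indexed ensemble (in particular the cross-block property \eqref{eqn:property-4-of-central-lemma}), and tracking the exponents sharply enough to extract one $c'$ and one $\nu(\tau)$ that serve all four properties at once.
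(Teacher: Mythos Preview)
Your proposal is correct and follows essentially the same approach as the paper: draw the codewords i.i.d.\ uniformly from $T_p$, invoke \cite[Lemma~3]{csiszar-narayan} (extended by a union bound over $\gamma$, which is where the constraint $\Gamma^{-1}>\exp(-2^{nc'})$ enters) to obtain \eqref{eqn:property-1-of-central-lemma}--\eqref{eqn:property-4-of-central-lemma} with doubly-exponentially small failure probability, and reuse the soft-covering Chernoff argument from the proof of Lemma~\ref{lemma:central-lemma-II} (formalized in the paper as Lemma~\ref{lemma:intermediate-secrecy-result-I}) for \eqref{eqn:property-5-of-central-lemma}; a final union bound then yields a single realization with all four properties. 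The paper simply cites \cite{csiszar-narayan} rather than re-deriving the type-counting estimates you sketch in your second paragraph, but otherwise the structure is identical.
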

Our intention was be to apply this Lemma to AVWCs for which the link between Alice and Bob is not symmetrizable. While Lemma \ref{lemma:central-lemma} contains the possibility to use shared randomness $\Gamma$, this is not necessary in the
application intended by us in this work (we use it only with $\Gamma$ set to one). The main reason for keeping $\Gamma$ as a variable in our
proof this is that it allows us to deliver a unified treatment of the whole topic, increases the generality of the Lemma and does not require much additional work.
\begin{rem}
The properties $(\ref{eqn:property-1-of-central-lemma})$, $(\ref{eqn:property-2-of-central-lemma})$ and $(\ref{eqn:property-4-of-central-lemma})$ of the code are identical to those stated in \cite[Lemma 3]{csiszar-narayan}. This Lemma again is the main ingredient to the proof of \cite{csiszar-narayan} that non-symmetrizability (symmetrizability is defined in (\ref{eqn:definition-of-symmetrizability})) is sufficient for message transmission over AVCs if the average error criterion and non-randomized codes are used.
Our strategy thus is to use the properties $(\ref{eqn:property-1-of-central-lemma})$, $(\ref{eqn:property-2-of-central-lemma})$ and
$(\ref{eqn:property-4-of-central-lemma})$ in Lemma \ref{lemma:central-lemma} in order to ensure successful message transmission over the legal link, if $\mathfrak W$ is non-symmetrizable.\\
The main tool used by Csiszar and Narayan for proving properties $(\ref{eqn:property-1-of-central-lemma})$,
$(\ref{eqn:property-2-of-central-lemma})$ and $(\ref{eqn:property-4-of-central-lemma})$ of Lemma \ref{lemma:central-lemma} in their work
\cite{csiszar-narayan} was large deviation theory, and this is where we can make the connection to our work and prove the additional properties
via application of the Chernoff-bound.\\
Roughly speaking, this method of proof amounts to adding some additional requirements in a situation where any exponential number of additional
requirements can be satisfied simultaneously.
\end{rem}
When utilizing Lemma \ref{lemma:central-lemma} (with $\Gamma=1$) in the proof of Theorem \ref{theorem:symmetrizability-properties-of-C1det} one sees that while reliable transmission is achieved via fulfillment of conditions (\ref{eqn:property-1-of-central-lemma}), (\ref{eqn:property-2-of-central-lemma}) and (\ref{eqn:property-4-of-central-lemma}) in Lemma \ref{lemma:central-lemma} if and only if the legal link $\mathfrak W$ is non-symmetrizable, the security of the communication can always be achieved by making $L$ large enough. This implies that there are generic communication systems (AVWCs with a symmetrizable legal link) for which it is much easier to design codes that convey little information to Eve than codes which ensure robust communication.

In order to derive from Lemma \ref{lemma:central-lemma-II} the connection between symmetrizability and the capacity $C_S$ (which is the content of Theorem \ref{theorem:symmetrizability-properties-of-C1det}) it is necessary to prove not only achievability of quantities like e.g. $\min_qI(p;W_q)-\max_sI(p;V_s)$ but also of quantities like $\min_qI(p';W_{q}^n\circ U)-\max_{s^n}I(p';V_{s^n}\circ U)$ that involve multiple channel uses and pre-coding that is defined via the optimization problem \eqref{eqn:capacity-formula}. Such a process of adding pre-coding may unfortunately cause the AVWC arising from the concatenation of pre-coding and the original AVWC to be symmetrizable. This highly interesting interplay of pre-coding and symmetrizability is quantified in the next Lemma and the following example.
\begin{lem}\label{lem:extension-of-symmetrizability} Let $\mathfrak W$ be an arbitrarily varying channel with input alphabet $\mathcal A$, output alphabet $\mathcal B$ and state set $\mathcal S$. Let $T\in C(\mathcal A',\mathcal A)$ be a channel. Let $\mathfrak W'$ be the arbitrarily varying channel with input alphabet $\mathcal A'$, output alphabet $\mathcal B$ and state set $\mathcal R$ defined by $w'(b|a',s):=\sum_{a\in\mathcal A}w(b|a,s)t(a|a')$ (or, equivalently, via setting $W'_s:=W_s\circ T$ for all $s\in\mathcal S$).\\
If $\mathfrak W$ is symmetrizable then $\mathfrak W'$ is symmetrizable as well.
\end{lem}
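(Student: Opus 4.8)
The plan is to exhibit an explicit symmetrizing kernel for $\mathfrak W'$ by composing the symmetrizing kernel of $\mathfrak W$ with the pre-processing channel $T$ on the input side. By hypothesis $\mathfrak W$ is symmetrizable, so there is a channel $U\in C(\mathcal A,\mathcal S)$, i.e.\ a family $(u(\cdot|a))_{a\in\mathcal A}$ of probability distributions on $\mathcal S$, with
\[
\sum_{s\in\mathcal S}u(s|a_1)\,w(b|a_2,s)=\sum_{s\in\mathcal S}u(s|a_2)\,w(b|a_1,s)
\]
for all $a_1,a_2\in\mathcal A$ and $b\in\mathcal B$. The first step is to define the candidate kernel $U':=U\circ T\in C(\mathcal A',\mathcal S)$ via $u'(s|a'):=\sum_{a\in\mathcal A}u(s|a)\,t(a|a')$, and to check it is a genuine channel: $u'(s|a')\ge 0$ and $\sum_{s}u'(s|a')=\sum_{a}t(a|a')\sum_{s}u(s|a)=\sum_{a}t(a|a')=1$.

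The second step is the verification that $U'$ symmetrizes $\mathfrak W'$. I would fix $a_1',a_2'\in\mathcal A'$ and $b\in\mathcal B$, insert the definitions $w'(b|a',s)=\sum_{a}w(b|a,s)\,t(a|a')$ and $u'(s|a')=\sum_{a}u(s|a)\,t(a|a')$, and interchange the (finite) summations to obtain
\[
\sum_{s}u'(s|a_1')\,w'(b|a_2',s)=\sum_{a_1,a_2\in\mathcal A}t(a_1|a_1')\,t(a_2|a_2')\sum_{s}u(s|a_1)\,w(b|a_2,s).
\]
Then I apply the symmetrizability identity of $\mathfrak W$ \emph{termwise}, replacing $\sum_{s}u(s|a_1)w(b|a_2,s)$ by $\sum_{s}u(s|a_2)w(b|a_1,s)$ inside the double sum, and finally relabel the dummy indices $a_1\leftrightarrow a_2$; the result is exactly $\sum_{s}u'(s|a_2')\,w'(b|a_1',s)$. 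This is the defining equation \eqref{eqn:definition-of-symmetrizability} for $\mathfrak W'$ with symmetrizing kernel $U'$ and state set $\mathcal S$ (the set written as $\mathcal R$ in the statement), so $\mathfrak W'$ is symmetrizable.

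There is no substantial obstacle here; the lemma is a soft algebraic stability property of the symmetrizability condition under pre-composition with a channel, and in contrast with the companion Lemma~\ref{lemma:central-lemma} it requires no large-deviation or coding input. The only points needing a little care are (i) confirming that $U'$ is a stochastic kernel, handled above, and (ii) keeping the bilinear expansion and the dummy-variable bookkeeping straight so that the termwise application of the symmetrizability identity lands precisely on the mirrored expression. This is exactly the fact needed to control the pre-coding step flagged in the paragraph preceding the lemma, where concatenating a pre-coder with the original AVWC might otherwise destroy non-symmetrizability of the legal link.
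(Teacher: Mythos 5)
Your proposal is correct and follows essentially the same route as the paper: both take the symmetrizing kernel $U$ (the paper calls it $Q$) for $\mathfrak W$, pre-compose it with $T$ to form the candidate $U\circ T$, expand the double sum over $\mathcal A\times\mathcal A$, apply the symmetrizability identity of $\mathfrak W$ termwise, and relabel the two dummy indices to land on the mirrored expression.
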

That, even for channels $T$ whose associated matrix $(t(a|a')_{a'\in\mathcal A',a\in\mathcal A}$ has full range, the reverse implication ``$\mathfrak W'$ is symmetrizable $\Rightarrow$ $\mathfrak W$ is symmetrizable'' does not hold came as a surprise and is proven here by explicit example:
\begin{example}\label{example:invertible-pre-coding-induces-symmetrizability}
Define an AVC $\mathfrak W\subset C(\{x_1,x_2\},\{1,2,3\})$ by setting
\begin{align}
w(\cdot|s_1,x_1)&:=\delta_1,\\
w(\cdot|s_2,x_1)&:=\delta_2,\\
w(\cdot|s_1,x_2)&:=0.6\delta_1+0.2\delta_2+0.2\delta_3,\\
w(\cdot|s_2,x_2)&:=0.1\delta_1+0.3\delta_2+0.6\delta_3,
\end{align}
where $\delta_i(j)=1$ if and only if $i=j$ holds for $i,j\in[3]$. Then $W$ is non-symmetrizable: The equation
\begin{align}
\lambda w(\cdot|s_1,x_1)+(1-\lambda)w(\cdot|s_2,x_1)=\mu w(\cdot|s_1,x_2)+(1-\mu)w(\cdot|s_2,x_2)
\end{align}
cannot have a solution with $\lambda,\mu\in[0,1]$ because $\delta_3$ appears only on the right hand side and with strictly positive weights.

However, if we add pre-coding by a binary-symmetric channel $N_p$ with parameter $p\in[0,1]$ we obtain the new AVC $\mathfrak W'$ defined via $W_s':=W_s\circ N_p$ or, more concretely, by
\begin{align}
w'(\cdot|s_1,x_1)&=p\delta_1+p'(0.2\delta_1+0.6\delta_2+0.2\delta_3)\\
w'(\cdot|s_2,x_1)&=p\delta_2+p'(0.1\delta_1+0.3\delta_2+0.6\delta_3)\\
w'(\cdot|s_1,x_2)&=p'\delta_1+p(0.6\delta_1+0.2\delta_2+0.2\delta_3)\\
w'(\cdot|s_2,x_2)&=p'\delta_2+p(0.1\delta_1+0.3\delta_2+0.6\delta_3)
\end{align}
where $p':=1-p$. We set $p=0.4$. The equation
\begin{align}
\lambda w'(\cdot|s_1,x_1)+(1-\lambda)w'(\cdot|s_2,x_1)=\mu w'(\cdot|s_1,x_2)+(1-\mu)w'(\cdot|s_2,x_2)
\end{align}
can be written out explicitly into three equations for the two parameters $\mu,\lambda$. The solution is given by
\begin{align}
\lambda=31/37,\qquad \mu=75/148.
\end{align}
This shows that $\mathfrak W'$ is symmetrizable. The situation is depicted as follows:
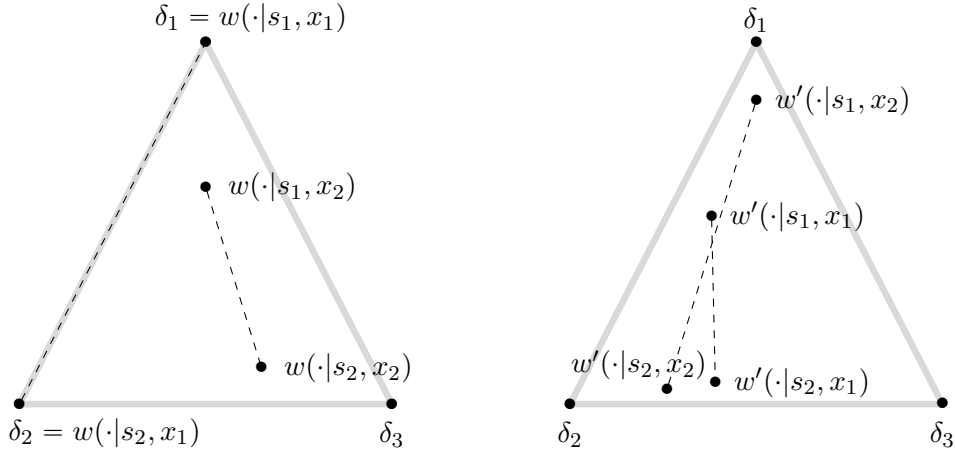
\begin{figure}[hhh]
\begin{center}
\begin{tikzpicture}[scale=3.5]
\definecolor{light-gray}{gray}{0.85}
\node (D1) at (-0.000160307, 0.966059) {$\bullet$};
\node (D2) at (-0.707336, -0.4082) {$\bullet$};
\node (D3) at (0.706949, -0.408155) {$\bullet$};
\node (W11) at (D1) {};
\node (W21) at (D2) {};
\node (W12) at (-0.000173516, 0.416364) {$\bullet$};
\node (W22) at (0.211953, -0.270747) {$\bullet$};
\draw[shorten <= -8pt, shorten >= -8pt,light-gray,line width=2.5pt] (D1) -- (D2);
\draw[shorten <= -8pt, shorten >= -8pt,light-gray,line width=2.5pt] (D2) -- (D3);
\draw[shorten <= -8pt, shorten >= -8pt,light-gray,line width=2.5pt] (D3) -- (D1);
\node (w12) at (W12) {\qquad\qquad\qquad$w(\cdot|s_1,x_2)$};
\node (w22) at (W22) {\qquad\qquad\qquad$w(\cdot|s_2,x_2)$};
\node (D1) at (-0.000160307, 0.966059) {$\bullet$};
\node (D2) at (-0.707336, -0.4082) {$\bullet$};
\node (D3) at (0.706949, -0.408155) {$\bullet$};
\node[above right = 0pt and -45pt] (d1) at (D1) {\qquad $\delta_1=w(\cdot|s_1,x_1)$};
\node[above right=-20pt and -8pt] (d2) at (D2) {$\delta_2=w(\cdot|s_2,x_1)$};
\node[above = -20pt] (d3) at (D3) {$\delta_3$};
\draw[dashed, shorten <=-5pt, shorten >=-5pt] (W12) -- (W22);
\draw[dashed, shorten <=-5pt, shorten >=-5pt] (W11) -- (W21);
\end{tikzpicture}
\qquad\qquad
\begin{tikzpicture}[scale=3.5]\definecolor{light-gray}{gray}{0.85}
\node (D1) at (-0.000160307, 0.966059) {$\bullet$};
\node (D2) at (-0.707336, -0.4082) {$\bullet$};
\node (D3) at (0.706949, -0.408155) {$\bullet$};
\node (W11) at (-0.16989, 0.30642) {$\bullet$};
\node (W21) at (-0.155763, -0.325728) {$\bullet$};
\node (W12) at (-0.000165591, 0.746181) {$\bullet$};
\node (W22) at (-0.339621, -0.353219) {$\bullet$};
\draw[shorten <= -8pt, shorten >= -8pt,light-gray,line width=2.5pt] (D1) -- (D2);
\draw[shorten <= -8pt, shorten >= -8pt,light-gray,line width=2.5pt] (D2) -- (D3);
\draw[shorten <= -8pt, shorten >= -8pt,light-gray,line width=2.5pt] (D3) -- (D1);
\node[above right = 0pt and -40pt] (w22) at (W22) {$w'(\cdot|s_2,x_2)$};
\node[right = 3pt] (w11) at (W11) {$w'(\cdot|s_1,x_1)$};
\node (w12) at (W12) {\qquad\qquad\qquad$w'(\cdot|s_1,x_2)$};
\node (w21) at (W21) {\qquad\qquad\qquad$w'(\cdot|s_2,x_1)$};
\node (D1) at (-0.000160307, 0.966059) {$\bullet$};
\node (D2) at (-0.707336, -0.4082) {$\bullet$};
\node (D3) at (0.706949, -0.408155) {$\bullet$};
\node[above=0pt] (d1) at (D1) {$\delta_1$};
\node[above=-20pt] (d2) at (D2) {$\delta_2$};
\node[above=-20pt] (d3) at (D3) {$\delta_3$};
\draw[dashed, shorten <=-5pt, shorten >=-5pt] (W12) -- (W22);
\draw[dashed, shorten <=-5pt, shorten >=-5pt] (W11) -- (W21);
\end{tikzpicture}
\caption{Light gray lines are the vertices of the probability simplex $\mathcal P(\{1,2,3\})$. The sets $\conv(\{w(\cdot|s_1,x_i),w(\cdot|s_2,x_i)\})$ where $i=1,2$ are displayed as dashed lines. The intersection of the dashed lines on the right shows that $\mathfrak W'$ is symmetrizable.}
\end{center}
\end{figure}
\end{example}
In order to derive the statement of Theorem \ref{theorem:symmetrizability-properties-of-C1det} from Lemma \ref{lemma:central-lemma} we can therefore not use a simple blocking strategy. Rather, we will present two methods of proof. The first employs a reasoning along the lines of equations \eqref{eqn:secrecy-assumption} until \eqref{eqn:consequence-of-secrecy-assumption}. This approach is based on the concept of using a few non-secret bits in order to guarantee secrecy for the actual data. While this is highly interesting from a practical point of view, it does not utilize the full strength of Lemma \ref{lemma:central-lemma}. This proof uses a set of public messages that can be read by Eve but not by James, secrecy is only obtained for the (exponentially larger) set of private messages.\\
Our second proof of Theorem \ref{theorem:symmetrizability-properties-of-C1det} is based on lifting the optimal pre-codings for $n$ channel uses to $n+1$ channel uses by using no pre-coding on the $(n+1)$th channel use. This type of pre-coding preserves non-symmetrizability. The second proof makes almost full use of the statements of Lemma \ref{lemma:central-lemma}, as we still set $\Gamma=1$. No public messages are used in the construction of the code.

It remains an interesting open question whether, for $n$ channel uses, the optimal channel $U$ arising from the $n$-th term of the optimization problem \eqref{eqn:capacity-formula} does in fact symmetrize $(W_{s^n})_{s^n\in\mathcal S^n}$ or not.

Our next result is potentially the most interesting in this work, since it sheds additional light on a rather new phenomenon: the
super-activation of ``the'' secrecy capacity of AVWCs.
\begin{thm}[Characterization of super-activation of $C_{\mathrm{S}}$ via properties of
$C^{\mathrm{mean}}_{\mathrm{S,ran}}$\label{thm:full-characteriation-of-super-activation}]
Let $(\mathfrak W_i,\mathfrak V_i)_{i=1,2}$ be AVWCs.
\begin{enumerate}
\item If $C_{\mathrm{S}}(\mathfrak W_1,\mathfrak V_1)=C_{\mathrm{S}}(\mathfrak W_2,\mathfrak V_2)=0$, then the estimate
\begin{align}
C_{\mathrm{S}}(\mathfrak W_1\otimes\mathfrak W_2,\mathfrak V_1\otimes\mathfrak V_2)>0
\end{align}
is true if and only if $\mathfrak W_1\otimes\mathfrak W_2$ is not symmetrizable and $C^{\mathrm{mean}}_{\mathrm{S,ran}}(\mathfrak W_1\otimes\mathfrak
W_2,\mathfrak V_1\otimes\mathfrak V_2)>0$.\\
If $(\mathfrak W_i,\mathfrak V_i)_{i=1,2}$ can be super-activated it holds
\begin{align}
C_{\mathrm{S}}(\mathfrak W_1\otimes\mathfrak W_2,\mathfrak V_1\otimes\mathfrak V_2)=C^{\mathrm{mean}}_{\mathrm{S,ran}}(\mathfrak
W_1\otimes\mathfrak W_2,\mathfrak V_1\otimes\mathfrak V_2).
\end{align}
\item There exist AVWCs which exhibit the above behaviour.
\item If $C^{\mathrm{mean}}_{\mathrm{S,ran}}$ shows super-activation for $(\mathfrak W_1,\mathfrak V_1)$ and $(\mathfrak W_2,\mathfrak V_2)$, then
    $C_{\mathrm{S}}$ shows super-activation for $(\mathfrak W_1,\mathfrak V_1)$ and $(\mathfrak W_2,\mathfrak V_2)$ if and only if at
    least one of $\mathfrak W_1$ or $\mathfrak W_2$ is non-symmetrizable.
\item If $C^{\mathrm{mean}}_{\mathrm{S,ran}}$ shows no super-activation for $(\mathfrak W_1,\mathfrak V_1)$ and $(\mathfrak W_2,\mathfrak V_2)$ then
    super-activation of $C_{\mathrm{S}}$ can only happen if $\mathfrak W_1$ is non-symmetrizable and $\mathfrak W_2$ is symmetrizable
    and $C^{\mathrm{mean}}_{\mathrm{S,ran}}(\mathfrak W_1,\mathfrak V_1)=0$ and $C^{\mathrm{mean}}_{\mathrm{S,ran}}(\mathfrak W_2,\mathfrak V_2)>0$. The statement is
    independent of the specific labelling.
\end{enumerate}
\end{thm}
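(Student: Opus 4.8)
The plan is to reduce the whole theorem to three ingredients: the dichotomy of Theorem~\ref{theorem:symmetrizability-properties-of-C1det}, the trivial monotonicity $C_{\mathrm S}(\mathfrak W,\mathfrak V)\le C^{\mathrm{mean}}_{\mathrm{S,ran}}(\mathfrak W,\mathfrak V)$ (every secure uncorrelated coding scheme is a common randomness assisted one with $\Gamma_n=1$, and for deterministic $\mathfrak\Gamma_n$ one has $I(\mathfrak K_n;\mathfrak Z_{s^n}\mid\mathfrak\Gamma_n)=I(\mathfrak K_n;\mathfrak Z_{s^n})$), and a combinatorial lemma describing how the symmetrizability condition \eqref{eqn:definition-of-symmetrizability} behaves under the tensor product $\otimes$ of Subsection~\ref{subsec:operational-definitions}. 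The lemma I would establish first is: \emph{$\mathfrak W_1\otimes\mathfrak W_2$ is symmetrizable if and only if both $\mathfrak W_1$ and $\mathfrak W_2$ are symmetrizable.} One direction is easy: if $u_i(\cdot\mid x_i)$ symmetrizes $\mathfrak W_i$ for $i=1,2$, then the product kernel $u((s,s')\mid(x_1,x_2)):=u_1(s\mid x_1)u_2(s'\mid x_2)$ symmetrizes $\mathfrak W_1\otimes\mathfrak W_2$, because the defining identity factorizes over the two tensor components. For the converse, suppose $u((\cdot,\cdot)\mid(\cdot,\cdot))$ symmetrizes $\mathfrak W_1\otimes\mathfrak W_2$; fix an arbitrary input letter $\bar x_2$ of the second channel, set $x_2=x_2'=\bar x_2$ in the symmetrizability identity, and sum over the second channel's output letters $y_2$. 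Since $\sum_{y_2}w_2(y_2\mid\bar x_2,s')=1$, the second component is integrated out, and the marginal $v(s_1\mid x_1):=\sum_{s_2}u((s_1,s_2)\mid(x_1,\bar x_2))\in\mathcal P(\mathcal S)$ satisfies $\sum_{s_1}v(s_1\mid x_1)w_1(y_1\mid x_1',s_1)=\sum_{s_1}v(s_1\mid x_1')w_1(y_1\mid x_1,s_1)$ for all $x_1,x_1',y_1$, i.e.\ it symmetrizes $\mathfrak W_1$; by symmetry $\mathfrak W_2$ is symmetrizable too. Equivalently, $\mathfrak W_1\otimes\mathfrak W_2$ is non-symmetrizable iff at least one of $\mathfrak W_1,\mathfrak W_2$ is non-symmetrizable.

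Part~1 then follows at once. For the ``$\Leftarrow$'' direction: if $\mathfrak W_1\otimes\mathfrak W_2$ is non-symmetrizable then Theorem~\ref{theorem:symmetrizability-properties-of-C1det}(2) gives $C_{\mathrm S}(\mathfrak W_1\otimes\mathfrak W_2,\mathfrak V_1\otimes\mathfrak V_2)=C^{\mathrm{mean}}_{\mathrm{S,ran}}(\mathfrak W_1\otimes\mathfrak W_2,\mathfrak V_1\otimes\mathfrak V_2)$, which is positive by assumption; this simultaneously delivers the ``moreover'' equality. For ``$\Rightarrow$'': positivity of $C_{\mathrm S}$ on the product forces $C^{\mathrm{mean}}_{\mathrm{S,ran}}$ on the product to be positive by the monotonicity bound, and it forces $\mathfrak W_1\otimes\mathfrak W_2$ to be non-symmetrizable, since otherwise Theorem~\ref{theorem:symmetrizability-properties-of-C1det}(1) would give $C_{\mathrm S}=0$.

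Parts~3 and~4 are pure bookkeeping built on Part~1 and the lemma. In Part~3, super-activation of $C^{\mathrm{mean}}_{\mathrm{S,ran}}$ in particular forces $C^{\mathrm{mean}}_{\mathrm{S,ran}}(\mathfrak W_i,\mathfrak V_i)=0$, hence (by monotonicity) $C_{\mathrm S}(\mathfrak W_i,\mathfrak V_i)=0$, while $C^{\mathrm{mean}}_{\mathrm{S,ran}}(\mathfrak W_1\otimes\mathfrak W_2,\mathfrak V_1\otimes\mathfrak V_2)>0$; Part~1 then says $C_{\mathrm S}$ super-activates iff $\mathfrak W_1\otimes\mathfrak W_2$ is non-symmetrizable, which by the lemma is iff at least one of $\mathfrak W_1,\mathfrak W_2$ is non-symmetrizable. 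In Part~4, assume $C_{\mathrm S}$ super-activates; Part~1 yields that $\mathfrak W_1\otimes\mathfrak W_2$ is non-symmetrizable and $C^{\mathrm{mean}}_{\mathrm{S,ran}}(\mathfrak W_1\otimes\mathfrak W_2,\mathfrak V_1\otimes\mathfrak V_2)>0$, so by the lemma at least one factor, say $\mathfrak W_1$, is non-symmetrizable; Theorem~\ref{theorem:symmetrizability-properties-of-C1det}(2) together with $C_{\mathrm S}(\mathfrak W_1,\mathfrak V_1)=0$ gives $C^{\mathrm{mean}}_{\mathrm{S,ran}}(\mathfrak W_1,\mathfrak V_1)=0$; since $C^{\mathrm{mean}}_{\mathrm{S,ran}}$ does \emph{not} super-activate, $C^{\mathrm{mean}}_{\mathrm{S,ran}}(\mathfrak W_2,\mathfrak V_2)$ cannot be $0$ (else it would super-activate together with the positivity on the product), hence it is $>0$; finally $C_{\mathrm S}(\mathfrak W_2,\mathfrak V_2)=0<C^{\mathrm{mean}}_{\mathrm{S,ran}}(\mathfrak W_2,\mathfrak V_2)$ forces $\mathfrak W_2$ symmetrizable, again by Theorem~\ref{theorem:symmetrizability-properties-of-C1det}(2). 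As only ``$\mathfrak W_1$ non-symmetrizable'' was chosen, this holds up to swapping the indices, which is the claimed independence of labelling.

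For Part~2 I would give an explicit witness: take the symmetrizable AVC $\mathfrak W_1$ of Blackwell--Breiman--Thomasian \cite{bbt-avc} (used again in \cite{ahlswede-elimination} and recalled in Remark~\ref{rem:super-activation-explanation}), which has positive shared-randomness AVC capacity, paired with the trash eavesdropper $\mathfrak V_1=\mathfrak T$, and set $(\mathfrak W_2,\mathfrak V_2)=(\mathrm{Id},\mathrm{Id})$ on a binary alphabet. Then $\mathrm{Id}$ is non-symmetrizable while $C^{\mathrm{mean}}_{\mathrm{S,ran}}(\mathrm{Id},\mathrm{Id})=0$ because Eve gets an exact copy of the input, so Theorem~\ref{theorem:symmetrizability-properties-of-C1det}(2) gives $C_{\mathrm S}(\mathfrak W_2,\mathfrak V_2)=0$; and $C_{\mathrm S}(\mathfrak W_1,\mathfrak V_1)=0$ since $\mathfrak W_1$ is symmetrizable. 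By the lemma $\mathfrak W_1\otimes\mathrm{Id}$ is non-symmetrizable, and super-additivity \eqref{eqn:super-activation-1} of $C^{\mathrm{mean}}_{\mathrm{S,ran}}$ with $C^{\mathrm{mean}}_{\mathrm{S,ran}}(\mathfrak W_1,\mathfrak T)>0$ gives $C^{\mathrm{mean}}_{\mathrm{S,ran}}(\mathfrak W_1\otimes\mathrm{Id},\mathfrak T\otimes\mathrm{Id})>0$; Part~1 then yields $C_{\mathrm S}(\mathfrak W_1\otimes\mathrm{Id},\mathfrak T\otimes\mathrm{Id})>0$, i.e.\ super-activation. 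The step I expect to require the most care is not any single implication but the clean formulation of the tensor-product symmetrizability lemma and, conceptually, reconciling it with Lemma~\ref{lem:extension-of-symmetrizability}: pairing with an identity channel looks like a pre-coding, yet \emph{destroys} symmetrizability, whereas genuine pre-coding (composition on the input side) can only \emph{create} it — this asymmetry is exactly what powers super-activation and should be highlighted rather than glossed over.
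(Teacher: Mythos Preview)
Your proof is correct and follows essentially the same route as the paper's: both rely on Theorem~\ref{theorem:symmetrizability-properties-of-C1det}, the monotonicity $C_{\mathrm S}\le C^{\mathrm{mean}}_{\mathrm{S,ran}}$, and the fact that $\mathfrak W_1\otimes\mathfrak W_2$ is symmetrizable iff both factors are. The paper invokes this tensor-symmetrizability fact only in passing (``by elementary calculus''), whereas you state and prove it cleanly; your marginalization argument for the non-trivial direction is exactly right. For Part~2 the paper simply cites the example of \cite{bs} (recapitulated in Remark~\ref{rem:super-activation-explanation}), while your $(\mathrm{Id},\mathrm{Id})$ variant is the same construction pushed to its extreme and is in fact already discussed informally in Subsection~\ref{subsec:operational-definitions}; either works.
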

\begin{rem}\label{rem:super-activation-explanation}
Of course for $\mathfrak W_1\otimes\mathfrak W_2$ to be non-symmetrizable, it has to be that at least one out of $\mathfrak W_1$, $\mathfrak
W_2$ is non-symmetrizable.\\
While Theorem \ref{thm:full-characteriation-of-super-activation} offers a complete characterization, it does not give any explicit examples - fortunately this has already been done in \cite{bs}, where two AVWCs were used as follows: The first legal link is modeled by an AVC $\mathfrak W_1=(W_{1,1},W_{1,2})$ with input system for Alice being $\{1,2\}$ and output at Bob's site being $\{1,2,3\}$. The transition probabilities were given by
\begin{align}W_{1,1}=\left(\begin{array}{lll}1&0&0\\0&0&1\end{array}\right)^\top,\qquad W_{1,2}=\left(\begin{array}{lll}0&0&1\\0&1&0\end{array}\right)^\top\end{align}
(note that assume that the columns of a matrix representing a channel sum up to one, not the rows!) and the first link to the eavesdropper by $\mathfrak V_1=(V_1)$ (no influence from the jammer on that link). For the purpose of this example, it would even be sufficient to let $\mathfrak V_1=\mathfrak T$. This channel has the property that $\mathfrak W_1$ is symmetrizable. The second link was chosen to consist of two binary symmetric channels $W_2,V_2$ where $W_2$ was a degraded version of $V_2$, but both had nonzero capacity. Thus, $C_{\mathrm{S}}(\mathfrak W_2,\mathfrak V_2)=0$ but nontheless it is possible to transmit (non-secret) messages via $\mathfrak W_2$. This example fits into the third class of pairs of AVWCs described in the above Theorem \ref{thm:full-characteriation-of-super-activation}.\\
While this explicit example is very interesting, our analysis provides a more systematic analysis.\\
Note that all our arguments only apply to the strong secrecy criterion. The weak secrecy criterion can be handled differently, and will be the scope of future work.
\end{rem}
As a last point in this section, we would like to discuss connections between $C_\mathrm{pp}$ and $C_\mathrm{S}$. At first, let us observe a similarity: The former shows super-activation if and only if the latter shows super-activation. To see this, we argue as follows: By definition, the class of codes which transmit public and private messages as defined in Definition \ref{def:private/public_coding_scheme} includes that according to Definition \ref{def:uncorrelated-coding-scheme} where no public information is transmitted. Therefore it holds that $C_\mathrm{pp}(\mathfrak W,\mathfrak V)\geq C_\mathrm{S}(\mathfrak W,\mathfrak V)$ for all AVWCs $(\mathfrak W,\mathfrak V)$. Further, the definition of private/public codes according to Definition \ref{def:private/public_coding_scheme} is more narrow than the one of a common randomness assisted code according to Definition \ref{def:CR-assisted-code}, so that every private/public code is at the same time also a common randomness assisted code. Especially, the public messages may be treated as if they were common randomness if $L=\Gamma$.\\
Therefore, $C_\mathrm{pp}(\mathfrak W,\mathfrak V)>0$ implies that $C^\mathrm{mean}_\mathrm{S,ran}(\mathfrak W,\mathfrak V)>0$ for all $(\mathfrak W,\mathfrak V)$. We conclude from Theorem \ref{theorem:symmetrizability-properties-of-C1det} that $C_\mathrm{pp}(\mathfrak W,\mathfrak V)>0$ implies $C_\mathrm{S}(\mathfrak W,\mathfrak V)>0$ for all $(\mathfrak W,\mathfrak V)$. This leads us to conclude that
\begin{align}\label{eqn:iff-for-Cpp-and-CS}
\forall\ (\mathfrak W,\mathfrak V):\qquad C_\mathrm{pp}(\mathfrak W,\mathfrak V)>0\qquad\Leftrightarrow\qquad C_\mathrm{S}(\mathfrak W,\mathfrak V)>0.
\end{align}
Let now $C_\mathrm{pp}$ show super-actication on $((\mathfrak W_1,\mathfrak V_1),(\mathfrak W_2,\mathfrak V_2))$. Then it follows from the statement in equation (\ref{eqn:iff-for-Cpp-and-CS}) that both $C_\mathrm{S}(\mathfrak W_1,\mathfrak V_1)=C_\mathrm{S}(\mathfrak W_2,\mathfrak V_2)=0$ and $C_\mathrm{S}(\mathfrak W,\mathfrak V)>0$. Therefore, super-activation of $C_\mathrm{pp}$ implies super-activation of $C_\mathrm{S}$.\\
In the reverse direction, let $C_\mathrm{S}$ show super-activation on the pair $((\mathfrak W_1,\mathfrak V_1),(\mathfrak W_2,\mathfrak V_2))$. From the statement in equation (\ref{eqn:definition-of-symmetrizability}) we immediately see that $C_\mathrm{pp}$ shows super-activation as well.\\
Concerning differences, we note that a question we have to leave open is whether there could exist AVWCs $\mathfrak W,\mathfrak V$ such that $C_\mathrm{pp}(\mathfrak W,\mathfrak V)>C^\mathrm{mean}_\mathrm{S,ran}(\mathfrak W,\mathfrak V)$ holds.\\
This question is of huge practical importance, as it allows the quantification of the interplay between private and public communication in interfering networks when i.i.d. assumptions are not met, as is often the case.
\end{section}
\begin{section}{Proofs\label{sec:proofs}}
\begin{subsection}{Technical definitions and facts\label{subsec:technical}}
An important part of our results builds on the mathematical structure that was developed in \cite{csiszar-narayan}. The structure of the codes
developed there builds on randomly sampling codewords which are all taken from one and the same set $T_p$. In our previous paper we used an
approach that was built on sampling codewords according to some pruned distribution $p'$ defined by $p'(x^n):=\frac{1}{p^{\otimes
n}(T_{p,\delta}^n)}\mathbbm1_{T_{p,\delta}^n}\cdot p^{\otimes n}(x^n)$ for some $p\in\mathcal P(\mathcal X)$. The small deviation of $p'$ from $p^{\otimes n}$ brings with it some benefits concerning asymptotic estimates. Since this work uses the outcomes of the earlier work \cite{wiese-noetzel-boche-I}, it would be desirable to use exactly the same technical approach.\\
However, due to the intended connection to \cite{csiszar-narayan}, we cannot use $p'$ in this work. Instead, we decided to use the same distribution as the one which was used in \cite{csiszar-narayan} which is in some sense further away from $p^{\otimes n}$. While this ensures seamless connectivity to \cite{csiszar-narayan}, it also made us deviate (compared to for example our previous work \cite{wiese-noetzel-boche-I}) from standard formulations in some other points, namely: We use a different notion of conditional typicality than before, and we define typical sets using the relative entropy rather than the one-norm.\\
This deviation is motivated by the fact that, for any finite alphabet $\mathcal A$ and $p\in\mathcal A$ as well as type $\bar N\in\mathcal P_0^n(\mathcal A)$, we have $p^{\otimes n}(T_{\bar N})=\mathrm{poly}(n)2^{-n\cdot D(p\|\bar N)}$ for some polynomial $\mathrm{poly}$ in $n$. Therefore, defining typicality with respect to relative entropy gives the best control on the asymptotic behaviour of typical sets. All methods that use other distance measures for the definition of typicality need to relate these other measures to the relative entropy.\\
That the use of relative entropy is also elegant as compared to other methods can be seen as follows: Looking at \cite[Definition 2.9]{csiszar-koerner} (which deals with typicality in the presence of channels and inputs to those channels) one sees an additional advantage of using relative entropy over using one-norm: defining typicality with respect to variational distance requires one to add additional assumptions which are not necessary when relative entropy is used, as the latter quantity can become infinite.\\
More precisely, let us assume we are given a channel $W\in C(\mathcal A,\mathcal B)$ such and $(a^n,b^n)\in\mathcal A^n\times\mathcal B^n$ such that for one specific choice of $a,b$ we have $N(a,b|a^n,b^n)>0$ but $w(b|a)=0$. Then $b^n$ is not a typical output of the channel $w^{\otimes n}$ given that its input was $a^n$, since the probability that it is received when $a^n$ as sent is zero:
\begin{align}
0&\leq w^{\otimes n}(b^n|a^n)\\
&=\prod_{i=1}^nw(b_i|a_i)\\
&\leq\prod_{i:a_i=a,b_i=b}w(b_i|a_i)\\
&=w(b|a)^{N(a,b|a^n,b^n)}\\
&=0^{N(a,b|a^n,b^n)}\\
&=0.
\end{align}
Excluding non-typical sequences is crucial for the derivation of lower bounds on cardinality of the conditionally typical set, for example. Thus, the above sequence $b^n$ is excluded from the $w$-typical set given $a^n$ explicitly in \cite[Definition 2.9]{csiszar-koerner}.\\
A notion using relative entropy captures this perfectly as well, but without necessitating the explicit exclusion: Let us assume that $b^n$ is said to be $w$- typical given $a^n$ iff $\Omega(a^n,b^n):=D(\bar N(\cdot|a^n,b^n)\|p_{AB})$ satisfies $\Omega(a^n,b^n)\leq\delta$ for some $\delta>0$, where $p_{AB}(a,b):=\bar N(a)w(b|a)$. Then let $a^n$ be given and $b^n$ be such that there exists $a,b$ such that $N(a,b|a^n,b^n)>0$ but $w(b|a)=0$. It follows $p_{AB}(a,b)=0$ and therefore $D(\bar N(\cdot|a^n,b^n)\|p_{AB})=\infty$, so that $\Theta(x^n,y^n)=\infty$ and hence $b^n$ is not $w$-typical given $a^n$.\\
A brief look at robust typicality as defined in \cite{orlitsky-roche} shows that this quantity is also only related to relative entropy via inequalities.\\
Therefore, our definition achieves two goals: It connects in the most direct way to the relevant probability estimates and can be written down with minimal effort.\\
Thus, the sets which we will be using frequently in the following are, for arbitrary finite sets $\mathcal A,\mathcal B,\mathcal C$, every
$p\in\mathcal P(\mathcal A)$, $\tilde V\in C(\mathcal A\times\mathcal B,\mathcal C)$ and $\delta>0$ defined as follows: for a given $(a^n,b^n)\in\mathcal A^n\times\mathcal B^n$ we define $p_{ABC}\in\mathcal P(\mathcal A\times\mathcal B\times\mathcal C)$ via $p_{ABC}(a,b,c):=\bar N(a^n,b^n)\tilde v(c|a,b)$ and
\begin{align}
T^n_{p,\delta}&:=\{a^n\in\mathcal A^n:D(\bar N(\cdot|a^n)\|p)\leq\delta\},\\
T_{\tilde V,\delta}(a^n,b^n)&:=\{c^n:D(\bar N(\cdot|a^n,b^n,c^n)\|p_{ABC})\leq\delta\}.
\end{align}
These definitions are only valid for $\delta>0$. Each $T_{V,\delta}(s^n,x^n)$ obeys the estimate
\begin{align}
\tilde v^{\otimes n}(T_{\tilde V,\delta}(a^n)|a^n)\geq1-2^{-n\cdot\delta/2},
\end{align}
for all $n\in\mathbb N$ such that $|\mathcal A\times\mathcal B|\tfrac{1}{n}\log(2n)\leq\delta$. We set, for every $p\in\mathcal P(\mathcal X)$,
\begin{align}
E(p):=\max_{q\in\mathcal P(\mathcal S)}I(p;V_q)\qquad\mathrm{and}\qquad B(p):=\min_{q\in\mathcal P(\mathcal S)} I(p;W_q).
\end{align}
For the technical part of our proofs, the most important tool will be the Chernoff-Hoeffding bound:
\begin{lem}\label{lem:Chernoff}
Let $b$ be a positive number. Let $Z_1,\ldots,Z_L$ be i.i.d.\ random variables with values in $[0,b]$ and expectation $\mathbb EZ_l=\nu$, and
let $0<\varepsilon<\frac{1}{2}$. Then
\begin{align}
  \mathbb P\left\{\frac{1}{L}\sum_{l=1}^LZ_l\notin[(1\pm\varepsilon)\nu]\right\}\leq 2\exp\left(-L\cdot\frac{\varepsilon^2\cdot\nu}{3\cdot
  b}\right),
\end{align}
where $[(1\pm\varepsilon)\nu]$ denotes the interval $[(1-\varepsilon)\nu,(1+\varepsilon)\nu]$.
\end{lem}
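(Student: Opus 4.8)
The plan is to prove Lemma~\ref{lem:Chernoff} by the classical exponential-moment (Chernoff) method: estimate the upper and the lower one-sided deviations separately and combine them by a union bound. First I would reduce to a normalised situation, noting that replacing each $Z_l$ by $Z_l/b$ produces $[0,1]$-valued variables with mean $\nu/b$ and transforms the target exponent into $L(\varepsilon^2/3)(\nu/b)$, so one may just as well carry $b$ explicitly, which is what the sketch below does. It is also convenient to establish the slightly stronger inequality in which the exponential on the right is taken to base $\mathrm e$ rather than $2$; since $\mathrm e>2$ we have $\mathrm e^{-x}\le 2^{-x}$ for $x\ge0$, so the bound as stated follows at once.

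For the upper deviation, fix $t>0$. Markov's inequality applied to $\mathrm e^{t\sum_l Z_l}$ together with independence of the $Z_l$ gives
\[
\mathbb P\bigl\{\tfrac1L\textstyle\sum_{l=1}^L Z_l\ge(1+\varepsilon)\nu\bigr\}\ \le\ \mathrm e^{-tL(1+\varepsilon)\nu}\,\bigl(\mathbb E\,\mathrm e^{tZ_1}\bigr)^{L}.
\]
The key step is to control the moment generating function of a single summand: by convexity of $u\mapsto\mathrm e^{tu}$ on $[0,b]$ one has $\mathrm e^{tu}\le 1+\tfrac ub(\mathrm e^{tb}-1)$ for $u\in[0,b]$, hence $\mathbb E\,\mathrm e^{tZ_1}\le 1+\tfrac\nu b(\mathrm e^{tb}-1)\le \mathrm e^{(\nu/b)(\mathrm e^{tb}-1)}$ using $1+x\le\mathrm e^x$. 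Substituting this and minimising the resulting exponent over $t>0$ — the minimiser being $t=\tfrac1b\ln(1+\varepsilon)$ — yields the bound $\mathrm e^{-(\nu L/b)[(1+\varepsilon)\ln(1+\varepsilon)-\varepsilon]}$ on the probability above. The lower deviation is handled symmetrically, applying Markov to $\mathrm e^{-t\sum_l Z_l}$ and using the companion bound $\mathrm e^{-tu}\le 1+\tfrac ub(\mathrm e^{-tb}-1)$ on $[0,b]$; the optimal choice $t=-\tfrac1b\ln(1-\varepsilon)>0$ gives $\mathbb P\{\tfrac1L\sum_l Z_l\le(1-\varepsilon)\nu\}\le \mathrm e^{-(\nu L/b)[(1-\varepsilon)\ln(1-\varepsilon)+\varepsilon]}$.

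The remaining ingredient — and the only point calling for a little care, the rest being immediate — is the elementary inequality that both bracketed quantities are at least $\varepsilon^2/3$ for every $\varepsilon\in(0,\tfrac12)$. I would deduce it from the power-series identity $(1+x)\ln(1+x)-x=\sum_{k\ge2}\frac{(-1)^{k}x^{k}}{k(k-1)}$, valid for $|x|\le1$. At $x=-\varepsilon$ every term is nonnegative, so the sum is at least its leading term $\varepsilon^2/2\ge\varepsilon^2/3$; at $x=\varepsilon$ the series alternates with strictly decreasing terms (as $\varepsilon<1$), so it is at least $\tfrac{\varepsilon^2}{2}-\tfrac{\varepsilon^3}{6}\ge\tfrac{5}{12}\varepsilon^2\ge\tfrac13\varepsilon^2$ once $\varepsilon<\tfrac12$. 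Feeding these two bounds into the tail estimates and applying the union bound over the two events yields $\mathbb P\{\tfrac1L\sum_l Z_l\notin[(1\pm\varepsilon)\nu]\}\le 2\,\mathrm e^{-L\varepsilon^2\nu/(3b)}$, which is stronger than and hence implies the stated bound. I do not expect a genuine obstacle here: this is Hoeffding's inequality in its multiplicative (Bernstein-type) form, and the parts that need attention are precisely the three just described — the convexity estimate of the moment generating function for variables supported on a general interval $[0,b]$ rather than on $\{0,b\}$, the two one-dimensional optimisations over $t$, and the uniform-in-$\varepsilon$ lower bounds on $(1\pm\varepsilon)\ln(1\pm\varepsilon)\mp\varepsilon$.
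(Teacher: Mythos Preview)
Your argument is correct and complete; the convexity bound on the moment generating function, the two optimisations in $t$, and the power-series estimates on $(1\pm\varepsilon)\ln(1\pm\varepsilon)\mp\varepsilon$ are all handled cleanly, and you rightly observe that the paper's convention $\exp(t)=2^t$ makes the base-$\mathrm e$ bound the stronger one. The paper itself does not give a proof of this lemma at all --- it simply cites \cite[Theorem 1.1]{DP} and \cite{AW} --- so your proposal is in fact more than what the paper supplies; the argument you sketch is essentially the standard one found in those references.
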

The proof can be found in \cite[Theorem 1.1]{DP} and in \cite{AW}.
\end{subsection}
\begin{subsection}{Proof of the converse part of Theorem \ref{thm:coding-theorem} (coding theorem for
$C_{\mathrm{key}}$)\label{subsec:proofs-of-the-converse-parts}}
Main ingredients to this proof are Fano's inequality, data processing and almost-convexity of the entropy.
\begin{proof}[{\bf Proof of converse for secret common randomness assisted secrecy capacity}]
Let a sequence $\mathcal K=(\mathcal K_n)_{n=1}^\infty$ of common randomness-assisted codes be given such that for all $n\in\mathbb N$ we have
\begin{align}
\min_{s^n\in\mathcal S^n}\frac{1}{\Gamma_n\cdot
K_n}\sum_{\gamma,k=1}^{\Gamma_n,K_n}e^\gamma(x^n|k)w_{s^n}(D^\gamma_{k}|x^n)\geq1-\epsilon_n,\\
\max_{s^n\in\mathcal S^n}I(\mathfrak K_n;\mathfrak Z_{s^n})\leq\epsilon_n,
\end{align}
and of course $\limsup_{n\to\infty}\epsilon_n=0$. Set $R:=\liminf_{n\to\infty}\frac{1}{n}\log K_n$, and $G:=\lim_{n\to\infty}\frac{1}{n}\log \Gamma_n$. In addition to the random variable defined in Definition \ref{def:CR-assisted-code}, consider $(\mathfrak K_n,\mathfrak K'_{q,n},\mathfrak\Gamma_n)$ distributed as
\begin{align}
\mathbb P((\mathfrak K_n,\mathfrak Y_{q}^n,\mathfrak K'_{q,n},\mathfrak\Gamma_n)=(k,k',\gamma))=\sum_{s^n\in\mathcal S^n}q^{\otimes n}(s^n)\mathbb P(\mathfrak K_n,\mathfrak Y_{q,n},\mathfrak K'_n,\mathfrak\Gamma_n).
\end{align}
Then for all $n\in\mathbb N$, $q\in\mathcal P(\mathcal S)$ and $s^n\in\mathcal S^n$ Fano's inequality implies
\begin{align}
(1-\epsilon_n)\log K_n&\leq I(\mathfrak K_n;\mathfrak K'_{q,n}|\mathfrak\Gamma_n)-I(\mathfrak K_n;\mathfrak Z_{s^n})+1+\epsilon_n.
\end{align}
We can apply the data processing inequality to get
\begin{align}
(1-\epsilon_n)\log K_n&\leq I(\mathfrak K_n;\mathfrak Y^n_q|\mathfrak \Gamma_n)-I(\mathfrak K_n;\mathfrak Z_{s^n})+1+\epsilon_n,
\end{align}
and from e.g. Lemma 3.4 in \cite{csiszar-koerner} and independence of the random variables $\mathfrak K_n$ and $\mathfrak G_n$ it follows that the asymptotic scaling of the rate $\liminf_{n\to\infty}\frac{1}{n}\log K_n$ can be upper bounded through the following inequality:
\begin{align}
(1-\epsilon_n)\log K_n&\leq I(\mathfrak K_n;\mathfrak Y_q^n)-I(\mathfrak K_n;\mathfrak
Z_{s^n})+H(\mathfrak\Gamma_n)+1+\epsilon_n.
\end{align}
Since this estimate is valid for all $q\in\mathcal P(\mathcal S)$ and $s^n\in\mathcal S^n$ we get
\begin{align}
\log K_n&\leq \frac{1}{1-\epsilon_n}\left(\min_{q\in\mathcal P(\mathcal S)}I(\mathfrak K_n;\mathfrak Y_q^{ n})-\max_{s^n\in\mathcal S^n}I(\mathfrak K_n;\mathfrak
Z_{s^n})\right)+\frac{1+\epsilon_n}{1-\epsilon_n}+\frac{\log\Gamma_n}{1-\epsilon_n}.
\end{align}
Define the distribution $p\in\mathcal P([K_n])$ and the channel $U\in C([K_n],\mathcal X^n)$ by
\begin{align}
p(k)=\frac{1}{K_n},\qquad U(x^n|k):=\sum_{\gamma=1}^{\Gamma_n}\frac{1}{\Gamma}e^\gamma(x^n|k)\qquad(k\in[K_n],\ x^n\in\mathcal X^n).
\end{align}
Then we arrive at
\begin{align}
\log K_n&\leq\frac{1}{1-\epsilon_n}\left(\min_{q\in\mathcal P(\mathcal S)}I(p;W_q^{\otimes n}\circ U)-\max_{s^n\in\mathcal S^n}I(p; V_{s^n}\circ
U)\right)+\frac{1+\epsilon_n}{1-\epsilon_n}+\frac{\log\Gamma_n}{1-\epsilon_n}.
\end{align}
Of course, we can obtain a more relaxed upper bound by optimizing over all $p\in\mathcal P([K_n])$ and $U\in C([K_n],\mathcal X^n)$. We then
obtain (since $K_n\leq|\mathcal X^n|$ for every reliably working code and, therefore, $\mathcal P([K_n])\subset\mathcal P([|\mathcal X^n|])$ under the standard embedding $[K_n]\subset[|\mathcal X|^n]$) by further increasing the size of the input alphabet from $K_n$ to $|\mathcal X|^n$ with $\mathcal U_n:=[|\mathcal X|^n]$ that
\begin{align}
R&\leq\lim_{n\to\infty}\frac{1}{n}\max_{p\in\mathcal U_n}\max_{U\in C(\mathcal U_n,\mathcal X^n)}\left(\min_{q\in\mathcal
P(\mathcal S^n)}I(p;W_q^{\otimes n}\circ U)-\max_{s^n\in\mathcal S^n}I(p;V_{s^n}\circ U)\right)+G.
\end{align}
As it has been proven in \cite{wiese-noetzel-boche-I} that the capacity $C^{\mathrm{mean}}_{\mathrm{S,ran}}$ equals the leftmost part in the above sum we have proven the desired result.\\
Another obvious bound on the capacity arises by ignoring all security issues: since $\mathcal K$ ensures an asymptotically perfect
transmission, we have
\begin{align}
\lim_{n\to\infty}\frac{1}{n}\log K_n\leq\max_{p\in\mathcal P(\mathcal X)}\min_{q\in\mathcal P(\mathcal S)}I(p;W_q).
\end{align}
This establishes the converse part of the coding theorem.
\end{proof}
\end{subsection}
\begin{subsection}{Proof of the direct part of Theorem \ref{thm:coding-theorem} (coding theorem for $C_{\mathrm{key}}$)}
Let $G>0$ be given. Define $p:=\arg\max_{p\in\mathcal P(\mathcal X)}(B(p)-E(p))$. Set $G':=\max\{E(p),G\}$. Intuitively speaking, this is the amount of common randomness which can be put to use in the obfuscation of Eve. Choose a $\tau>0$ such that $\nu(\tau)$ from Lemma \ref{lemma:central-lemma-II} satisfies $\nu(\tau)<G'$. Let $n\in\mathbb N$ be so that for all $n\geq N$ there is $p_n\in\mathcal P_0^n(\mathcal X)$ such that $|B(p_n)-B(p)|\leq\max\{\tau,\nu(\tau)\}$ and $|E(p_n)-E(p)|\leq\max\{\tau,\nu(\tau)\}$. This can be achieved by approximating $p$ through types $p_n$ via Lemma \ref{lemma:types-are-dense} and since both $B$ and $E$ are continuous functions. Take three sequences $(K_n)_{n=1}^\infty$, $(L_n)_{n=1}^\infty$, $(\Gamma_n)_{n=1}^\infty$ of natural numbers. Without loss of generality, we can ensure that $(\Gamma_n)_{n\in\mathbb N}$ satisfies both
$\Gamma_n\leq2^{n\cdot G'}$ for all $n\in\mathbb N$ and $\lim_{n\to\infty}\frac{1}{n}\log\Gamma_n=G'$. Let now $n\in\mathbb N$ satisfying $n\geq N$ be fixed but large enough such that in addition
\begin{align}
E(p)-G'+4\tau\geq\frac{1}{n}\log(L_n)\geq E(p)-G'+2\tau,\\
B(p)-E(p)+G'-4(\tau+\nu(\tau))\geq\frac{1}{n}\log(K_n)\geq B(p)-E(p)+G'-2(\tau+\nu(\tau))
\end{align}
be satisfied, for all large enough $n\in\mathbb N$. This implies both
\begin{align}
\frac{1}{n}\log(K_n\cdot L_n)&\leq B(p)-E(p)+G'-4(\tau+\nu(\tau))+E(p)-G'+4\tau\\
&=B(p)-4\nu(\tau)\\
&\leq B(p_n)-\nu(\tau)
\end{align}
and
\begin{align}
\frac{1}{n}\log(L_n\cdot\Gamma_n)\geq E(p)+2\tau\geq E(p_n)+\tau.
\end{align}
Asymptotically, we also have this yields
\begin{align}
\liminf_{n\to\infty}\frac{1}{n}\log(K_n)\geq B(p)-E(p)+G-4\cdot(\tau+\nu(\tau)).
\end{align}
At the same time, the prerequisites of Lemma \ref{lemma:central-lemma-II} are met such that a reliable sequence of codes exists which is also secure with respect to $\|\cdot\|_1$: For all large enough $n\in\mathbb N$ we have
\begin{align}
\min_{s^n}\sum_{\gamma=1}^\Gamma\frac{1}{\Gamma}\sum_{k,l=1}^{K,L}\frac{1}{K\cdot L}w_{s^n}(D_{kl}^\gamma|\mathbf x_{kl\gamma})\geq1-2^{-n\cdot
\nu(\tau)},\\
\max_{s^n,k}\left\|\frac{1}{L\cdot\Gamma}\sum_{l,\gamma=1}^{L,\Gamma}v_{s^n}(\cdot|\mathbf x_{kl\gamma})-\mathbb
Ev_{s^n}(\cdot|X^n)\right\|_1\leq&2^{-n\cdot\nu(\tau)}.
\end{align}
It can already be seen that this yields reliable communication at any rate which is strictly below $B(p)-E(p)+G$ - we proved the achievability of rates close enough to $B(p)-E(p)+G$, but it is clear that time sharing between a trivial strategy where only one codeword is being transmitted (which is then automatically perfectly secure) and the strategy which was proven to work in the above will show achievability of all other rates $R\in[0,|B(p)-E(p)+G|^+]$. That we also get secure communication can be seen as follows: From \cite[Lemma 2.7]{csiszar-koerner} we know that our exponential bound (\ref{eqn:exponential-bound-on-variational-distance-in-lemma-II}) asymptotically leads to fulfillment of the strong secrecy criterion.\\
We have thus proven that, for each $\tau'>0$, the number
\begin{align}
\max_{p\in\mathcal P(\mathcal X)}\left(\min_{q\in\mathcal P(\mathcal S)}I(p;W_q)-\max_{q\in\mathcal P(\mathcal S)}I(p;V_q)\right)+G-\tau'
\end{align}
is an achievable rate. We now proceed by adding channels $U$ at the sender and using blocks of the original channels together: Since we now know that, for every $r\in\mathbb N$, $G>0$ and $\delta>0$, $p\in\mathcal P(\mathcal U_r)$ where $U_r:=[|\mathcal X|^{r}]$ and $U\in C(\mathcal U_r,\mathcal X^r)$ there exist sequences $\mathcal K=(\mathcal K_m)_{m=1}^\infty$ such that for every $s^{r\cdot m}\in(\mathcal S^r)^m=\mathcal S^{r\cdot m}$ we have
\begin{align}
\frac{1}{K_m}\frac{1}{\Gamma_m}\sum_{k=1}^{K_m}\sum_{\gamma=1}^{\Gamma_m}\sum_{x^{r\cdot m}}e(x^{r\cdot m}|k,\gamma)w_{s^{m\cdot r}}(D_{k}^\gamma|x^{r\cdot m})\geq1-\epsilon_m,
\end{align}
where $\mathbf x_{k,\gamma}\in U_r^m$ are codewords (each $x_{k,\gamma,i}$ is an element of $\mathcal U_r$) for $(W_{s^r}\circ U)_{s^r\in\mathcal S^r}$, and the stochastic encoder is $e(x^{r\cdot m}|k,\gamma)=\prod_{i=1}^m u(x_{ij}|x_{k,\gamma,i})$ for $x^{r\cdot m}$ and it holds that
\begin{align}
\liminf_{m\to\infty}\frac{1}{m}\log K_m\geq \min_{q\in\mathcal P(\mathcal S^r)}I(p;W_q\circ U)-\max_{s^r\in\mathcal S^r}I(p; V_{s^r}\circ
U)+r\cdot G-\delta.
\end{align}
We can define values $t_n\in\{0,\ldots,r-1\}$ by requiring $n=m\cdot r+t_n$ for them to hold for some suitably chosen $m=m(n)\in\mathbb N$.
This quantity satisfies $-1+n/r\leq m(n)\leq n/r$. For every $n\in\mathbb N$ we then define new decoding sets by
\begin{align}
\hat D_{k}^\gamma:=D_{k}^\gamma\times\mathcal Y^{t_n}
\end{align}
and new codewords by setting for some arbitrary but fixed $x^{t_n}$
\begin{align}
\hat x_{k\gamma}:=(x_{k\gamma},x^{t_n}).
\end{align}
From the choice of codewords and the decoding rule it is clear that this code is asymptotically reliable. The asymptotic number of codewords
(mind that $\hat K_n=K_{m(n)}$) calculated and normalized with respect to $n$, is
\begin{align}
\liminf_{n\to\infty}\frac{1}{n}\log\hat K_n&=\liminf_{n\to\infty}\frac{1}{m(n)\cdot r+t_n}\log K_{m(n)}\\
&\geq\liminf_{n\to\infty}\frac{1}{r}\cdot\frac{1}{m(n)+1}\log K_{m(n)}\\
&=\liminf_{n\to\infty}\frac{1}{r}\cdot\frac{1}{m(n)}\cdot\frac{m(n)}{m(n)+1}\cdot \log K_{m(n)}\\
&=\frac{1}{r}\liminf_{n\to\infty}\frac{1}{m(n)}\cdot \log K_{m(n)}\\
&=\frac{1}{r}\left(\min_{q\in\mathcal P(\mathcal S^r)}I(p;W_q\circ U)-\max_{s^r\in\mathcal S^r}I(p;V_{s^r}\circ U)+r\cdot G-\delta\right).
\end{align}
To see that every number $C^*(\mathfrak W,\mathfrak V)-\epsilon$ is an achievable rate, take $r$, $U$ and $p$ such that
\begin{align}
C^*(\mathfrak W,\mathfrak V)-\epsilon/2\leq\frac{1}{r}\left(\min_{q\in\mathcal P(\mathcal S^r)}I(p;W_q\circ U)-\max_{s^r\in\mathcal
S^r}I(p;V_{s^r}\circ U)\right).
\end{align}
This is possible since in \cite{wiese-noetzel-boche-I} it was (in addition to the equality $C^\mathrm{mean}_\mathrm{S,ran}(\cdot,\cdot)=C^*(\cdot,\cdot)$) proven that
\begin{align}
C^*(\mathfrak W,\mathfrak V)=\lim_{r\to\infty}\frac{1}{r}\max_{p\in\mathcal P(\mathcal U_n)}\max_{U_n\in C(\mathcal U,\mathcal X^n)}\left(\min_{q\in\mathcal P(\mathcal S^r)}I(p;W_q\circ U)-\max_{s^r\in\mathcal
S^r}I(p;V_{s^r}\circ U)\right).
\end{align}
We set $\delta=r\cdot\epsilon/4$. Then from our preceding arguments it becomes clear that there is a sequence $\hat{\mathcal K}$ of
asymptotically reliable codes at an asymptotic rate
\begin{align}
\liminf_{n\to\infty}\frac{1}{n}\log\hat K_n&\geq\frac{1}{r}\left(\min_{q\in\mathcal P(\mathcal S^r)}I(p;W_q^m\circ U)-\max_{s^r\in\mathcal
S^r}I(p;V_{s^r}\circ U)+r\cdot G-r\cdot\epsilon/4\right)\\
&\geq\frac{1}{r}\left(\min_{q\in\mathcal P(\mathcal S^r)}I(p;W_q^m\circ U)-\max_{s^r\in\mathcal S^r}I(p;V_{s^r}\circ U)+r\cdot
G\right)-\epsilon/4\\
&\geq C^*(\mathfrak W,\mathfrak V)+G-\epsilon/2-\epsilon/4\\
&\geq C^*(\mathfrak W,\mathfrak V)+G-\epsilon.
\end{align}
This proves the direct part of the coding theorem.
\end{subsection}
\begin{subsection}{An intermediate result}
We now have to prove the core results from which all the other statements can be deduced. The idea of proof will be to make a random selection
of the codewords $\mathbf x_{kl\gamma}$ where $k$ are the messages, $l$ are non-secret messages which are only being sent in order to obfuscate the received signal at Eve, and $\gamma$ are the values of the common randomness. When applying the results to AVWCs, the decoder is the one defined in
\cite{csiszar-narayan} whenever we study $C_\mathrm{S}$ and is defined here according to our needs for the study of $C_{\mathrm{key}}$.\\
We define events $E_1,\ldots,E_5$ which describe certain desirable properties of our codewords, in dependence of $(\mathfrak W,\mathfrak V)$
and the numbers $K,L,\Gamma$ of available indices $k,l,\gamma$. We then use Chernoff bounds. This guarantees that the random selection of
codewords has each single property we would like them to have with probability lower bounded by $1-\exp(-2^{nc})$ for some positive constant
$c>0$ and all large enough $n$ under some conditions on $\Gamma,L$ and $K$ which of course depend on $(\mathfrak W,\mathfrak V)$ as well.
Application of a union bound then reveals the existence of one particular choice of codewords that has all the desired properties
simultaneously.\\
Using exactly this method of proof, Csiszar and Narayan \cite[Lemma 3]{csiszar-narayan} proved properties
$(\ref{eqn:property-1-of-central-lemma}), (\ref{eqn:property-2-of-central-lemma})$ and $(\ref{eqn:property-4-of-central-lemma})$ of Lemma
\ref{lemma:central-lemma}. Thus what remains for us is to provide proof that the remaining event $(\ref{eqn:property-5-of-central-lemma})$ has high probability.\\
In \cite{csiszar-narayan}, large deviation results for dependent random variables were employed, but the underlying probability employed in
codeword selection was the same as the one used by us, so that our findings connect seamlessly.
\\\\
We become a bit more concrete now. Let $p\in\mathcal P_0^n(\mathcal A),\ q\in\mathcal P_0^n(\mathcal S)$. Throughout, we will attempt to twist
and tweak asymptotic quantities such that they are calculated with respect to the random variables $(S,X,Z)$ defined via $\mathbb
P((S,X,Z)=(s,x,z)):=p(x)q(s)v(z|x,s)$. Since the distribution of $(S,X,Z)$ is so important, we label it by $p_{SXZ}$. The variable $p$ will remain fixed, and $q$ will always denote a type corresponding to one of the
choices of James.\\
The proof will require us to draw codewords at random. As stated already, we adapt this procedure to the one chosen in \cite{csiszar-narayan}.
This is done as follows: We define the random variables $X_{kl\gamma}$ ($1\leq k\leq K$, $1\leq l\leq L$, $1\leq\gamma\leq\Gamma$) by $\mathbb
P(X_{kl\gamma}=x^n):=\frac{1}{|T_p|}\mathbbm1_{T_p}(x^n)$ for all $k\in[K],\ l\in[L]\ \gamma\in[\Gamma]$ and $x^n\in\mathcal X^n$, where
$K,L,\Gamma$ are natural numbers. We write $\mathbf x_{kl\gamma}$ for the realizations of the variable $X_{kl\gamma}$, instead of
$x^n_{kl\gamma}$. The random variable $\mathbf X:=(\mathbf X_{kl\gamma})_{k,l,\gamma=1}^{K,L,\Gamma}$ is distributed such that each $\mathbf X_{kl\gamma}$ is
independent of $\mathbf X_{k'l'\gamma'}$ if $(k,l,\gamma)\neq(k',l',\gamma')$. The realizations of $(\mathbf X_{kl\gamma})_{k,l,\gamma=1}^{K,L,\Gamma}$ are
written $\mathbf x$. We use the projections $\pi_{kl\gamma}$ defined by $\pi_{kl\gamma}(\mathbf x):=\mathbf x_{kl\gamma}$. Further projections
as e.g. $\mathbf x_\gamma:=\pi_\gamma(\mathbf x):=(\mathbf x_{kl\gamma})_{k,l=1}^{K,L}$ are defined wherever there is a need.\\
In order to enhance readability, we will not only omit the superscript $n$ in our codewords, but from time to time we will also write
statements like $\forall s^n$, property $P$ holds. Then, it is understood that $P$ holds for all $s^n\in\mathcal S^n$.\\
When calculating expectations of any of the $X_{kl\gamma}$ we need no reference to $k,l,\gamma$ due to independence of our random variables. We
therefore add another random variable, $X^n$, distributed as $\mathbb P(X^n=x^n)=\frac{1}{|T_p|}\mathbbm1_{T_p}(x^n)$ as well.\\
A first and crucial step for all that is to come in the proofs of the technical Lemmas \ref{lemma:central-lemma-II} and
\ref{lemma:central-lemma} is to fix some $\delta>0$ and $p\in\mathcal P(\mathcal X)$ and define, for all $s^n\in\mathcal S^n$ and
$z^n\in\mathcal Z^n$, the functions $\Theta_{s^n,z^n}:\mathcal X^n\to[0,b]$ (where $b:=2^{-n(H(Z|X,S)-f_2(n,\delta))}$ for some function $f_2$, as we will see soon) by
\begin{align}\label{eqn:definition-of-theta}
M(s^n,z^n)&:=\{x^n\in T_p:D(\bar N(\cdot|s^n,x^n,z^n)\|p_{SXZ})\leq\delta\}\\
\Theta_{s^n,z^n}(x^n)&:=v^{\otimes n}(z^n|s^n,x^n)\mathbbm1_{M(s^n,z^n)}.
\end{align}
In order to enhance readability, the dependence of both $M$ and $\Theta$ on $\delta$ is suppressed here and in the following. All our proofs rely on a common strategy, which only deviates in one point: The codes which ensure reliable transmission. For non-symmetrizable
AVWCs we rely on the work \cite{csiszar-narayan} and use the codes which are defined therein. This will be sufficient to obtain all the results that we claimed for the uncorrelated coding secrecy capacity.\\
The coding theorem for secret common randomness assisted secrecy capacity needs an additional definition of codes. This definition is as follows:\\
For every $n\in\mathbb N$, set $\Xi_n:=\mathcal P_0^n(\mathcal S)$. For every $x^n$, define (not necessarily
disjoint) ``decoding'' sets by
\begin{align}\label{eqn:definition-of-very-optimistic-decoding-set}
\hat D_{x^n}:=\bigcup_{\xi\in\Xi_n}T_{W_\xi,\delta}(x^n)
\end{align}
and for a collection $\mathbf x_\gamma:=(x_{kl\gamma})_{k,l=1}^{K,L}$ of codewords with fixed value of $\gamma$ set
\begin{align}\label{eqn:definition-of-realistic-decoding-set}
D(\mathbf x_\gamma)_{kl}:=\hat D_{x_{kl\gamma}}\bigcap\left(\bigcup_{k'\neq k}\bigcup_{l'\neq l}\hat D_{x_{k'l'\gamma}}\right)^\complement.
\end{align}
This defines the code $\mathcal K_n$. This definition allows the decoder to decode the randomization index $l$ as well, an approach which works for AVWCs and compound (wiretap) channels with convex
state sets via the minimax theorem. Note that this code will only ensure reliable transmission if $\Gamma$ is sufficiently large.\\
In order to deliver a joint treatment of the subject it makes sense to define the following events, where we implicitly assume a functional
dependence $\delta=\delta(\tau)$ that will be specified more exactly later during our proofs. The sets $E_3,\ldots,E_5$ depend only on $\tau$, whereas $E_1$ depends also on $\mathfrak W$ and $E_2$ on $\mathfrak V$.
\begin{align}
E_1&:=\left\{\mathbf x|\forall s^n,z^n,k:\ \frac{1}{L\cdot\Gamma}\sum_{l,\gamma=1}^L\Theta_{s^n,z^n}(\mathbf x_{kl\gamma})\in[(1\pm2^{-n\cdot\tau/4})\mathbb E\Theta_{s^n,z^n}]\right\}\\
E_2&:=\left\{\mathbf x|\min_{s^n}\frac{1}{\Gamma}\sum_{\gamma=1}^\Gamma d_{s^n}(\mathcal K_\gamma)\geq1-2\cdot2^{-n\delta/4}\right\}.\\
E_3&:=\left\{\mathbf x|\max_{\gamma,x^n,s^n}|\{(k,l):(x^n,\mathbf x_{kl\gamma},s^n)\in T_{\bar N(\cdot|x^n,\mathbf
x_{kl\gamma},s^n)}\}|\leq2^{n(|R-I(\mathbf x_{kl\gamma};x^n,s^n)|^++\tau)}\right\}\\
E_4&:=\left\{\mathbf x|\max_{\gamma,s^n}|\{(k,l):I(\mathbf x_{kl\gamma};s^n)>\tau\}|\leq K\cdot L\cdot2^{-n\cdot\tau}\right\}\\
E_5&:=\left\{\mathbf x|\max_{\gamma,s^n} \left|\left\{(k,l,\gamma):\begin{array}{l}\mathrm{There\ is}\ (k',l',\gamma')\neq (k,l,\gamma)\
\mathrm{such\ that}\\
I(\mathbf x_{kl\gamma};\mathbf x_{k'l'\gamma'},s^n)-|R-I(\mathbf x_{kl\gamma};s^n)|^+>\tau
\end{array}\right\}\right|\leq K\cdot L\cdot 2^{-n\cdot\tau/2}\right\}
\end{align}
The average success probability $d_{s^n}(\mathcal K_\gamma)$ was defined in Definition \ref{def:CR-assisted-code}. The events $E_3,E_4,E_5$ are proven to have high probability in \cite{csiszar-narayan} (actually, their proof is valid for $|\Gamma|=1$ but can be extended to arbitrary $|\Gamma|$ by simple union bounds, which leads to the following statement:
\begin{lem}[Cf. \cite{csiszar-narayan}]\label{lemma:csiszar-narayan-result}
There is $c'>0$ such that, if $\mathfrak W$ is non-symmetrizable, we have that
\begin{align}
\mathbb P(E_3\cap E_4\cap E_5)\geq1-\Gamma\cdot\exp(-2^{n\cdot c'})
\end{align}
\end{lem}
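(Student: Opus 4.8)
The plan is to reduce the claim to \cite[Lemma 3]{csiszar-narayan} together with an elementary union bound over the common-randomness index $\gamma$. First I would dispose of the case $\Gamma=1$, where the $\gamma$-index is trivial and $(\mathbf x_{kl})_{k,l=1}^{K,L}$ is simply a family of $K\cdot L$ vectors drawn independently and uniformly from $T_p$. Identifying the message index with the pair $i:=(k,l)$ and writing $R=\frac1n\log(K\cdot L)$ (as in Lemma \ref{lemma:central-lemma}), the events $E_3$, $E_4$, $E_5$ are, verbatim, the random-coding properties established in \cite[Lemma 3]{csiszar-narayan}: a bound on the number of codewords whose joint type with a fixed pair $(x^n,s^n)$ is prescribed ($E_3$), a bound on the number of codewords carrying more than $\tau$ bits about a fixed $s^n$ ($E_4$), and a bound on the number of codewords that are, jointly with $s^n$, too dependent on some other codeword ($E_5$). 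Csiszar and Narayan prove these via a union bound over the polynomially many relevant joint types combined with large-deviation estimates for the associated counting variables; the standing hypotheses (in particular $K\cdot L\geq2^{n\tau}$, which is exactly what forces those counting variables to concentrate doubly-exponentially, together with $n\geq N_0(\tau)$, cf. Lemma \ref{lemma:central-lemma}) are what render the resulting exponent strictly positive, so that for a suitable $c'>0$ one has $\mathbb P(E_i^\complement)\leq\exp(-2^{n\cdot c'})$ for $i\in\{3,4,5\}$. I would note in passing that this probabilistic estimate does not itself use non-symmetrizability of $\mathfrak W$ --- that hypothesis enters only later, when these code properties are fed into the decoding analysis --- so it is harmless to carry it in the statement.

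For general $\Gamma$ I would run the same argument with the index set $[K]\times[L]\times[\Gamma]$ in place of the single message set --- equivalently, following the phrasing preceding the lemma, apply the $\Gamma=1$ statement blockwise to each $\mathbf x_\gamma:=(\mathbf x_{kl\gamma})_{k,l=1}^{K,L}$ (the blocks being mutually independent and each distributed as the $\Gamma=1$ random code) and unite over $\gamma$. Either way, the union bound over joint types acquires an extra factor that is at most polynomial in $\Gamma$, so the failure probability is bounded by $\mathrm{poly}(\Gamma)\cdot\exp(-2^{n\cdot c'})$ for each of $E_3$, $E_4$, $E_5$, and hence $\mathbb P(E_3^\complement\cup E_4^\complement\cup E_5^\complement)\leq\mathrm{poly}(\Gamma)\cdot\exp(-2^{n\cdot c'})$.

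It remains to absorb the $\mathrm{poly}(\Gamma)$ prefactor. Replacing $c'$ by any smaller positive constant $c''$, one checks that $\mathrm{poly}(\Gamma)\cdot\exp(-2^{n\cdot c'})\leq\Gamma\cdot\exp(-2^{n\cdot c''})$ for all $\Gamma$ and all sufficiently large $n$: when $\Gamma\geq\exp(2^{n\cdot c''})$ the right-hand side is $\geq1$ and there is nothing to prove, while for $\Gamma<\exp(2^{n\cdot c''})$ one has $\log\mathrm{poly}(\Gamma)=O(2^{n\cdot c''})$, which is dominated by $2^{n\cdot c'}$ since $c''<c'$. Renaming $c''$ as $c'$ and enlarging $N_0(\tau)$ if necessary completes the proof. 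I do not expect a genuine obstacle: all the substantive large-deviation work is imported from \cite{csiszar-narayan}, and the only points needing care are bookkeeping --- verifying that the events as written coincide, under the identification $i=(k,l)$, with the properties of \cite[Lemma 3]{csiszar-narayan}, keeping track of the cross-block witness allowed in $E_5$, and checking that the constant $c'$ can be taken uniform over the parameters in force.
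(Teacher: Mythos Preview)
Your proposal is correct and follows essentially the same approach as the paper: the paper's entire ``proof'' is the remark preceding the lemma that the $\Gamma=1$ case is exactly \cite[Lemma~3]{csiszar-narayan} and that the extension to arbitrary $\Gamma$ goes through ``by simple union bounds,'' which is precisely what you spell out (including the absorption of polynomial prefactors into the doubly-exponential bound). Your observation that non-symmetrizability of $\mathfrak W$ plays no role in the probabilistic estimate itself, and your flagging of the cross-block witness in $E_5$, are both accurate and worth noting even though the paper glosses over them.
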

The bound in Lemma \ref{lemma:csiszar-narayan-result} is trivial whenever $\Gamma>\exp(2^{n\cdot c'})$. In the applications intended here, the maximal scaling of $\Gamma$ with $n$ will be exponential, so that nontrivial bounds arise.\\
Our main effort in the following will be to show that a similar bound is true for $\mathbb P(E_1)$ and $\mathbb P(E_2)$ under
the right conditions on $K,L$ and $\Gamma$. With respect to these conditions, any of the intersections $E_i\cap\ldots\cap E_j$ will then have very high probability as well.\\
For the proofs of both Lemma \ref{lemma:central-lemma-II} and \ref{lemma:central-lemma} it will be of importance to control the amount of
information which leaks out to Eve. This will require us to prove that a careful random choice of codewords will be provably secure, and this
is the main content of the following Lemma (which contains statements concerning the message transmission capabilities of the common randomness assisted
codes defined in (\ref{eqn:definition-of-very-optimistic-decoding-set}) and (\ref{eqn:definition-of-realistic-decoding-set}) as well).
\begin{lem}\label{lemma:intermediate-secrecy-result-I} Let $K,L,\Gamma\in\mathbb N$. Let the random variable $\mathbf X$ be as described above.
Then for every $\tau>0$ and $\beta>0$ there is a $\delta>0$ and and $N\in\mathbb N$ such that for all $n\geq N$ and types $p\in\mathcal
P_0^n(\mathcal X)$, the following statements are true:
\begin{enumerate}
\item If $\tfrac{1}{n}\log(L\cdot\Gamma)\geq E(p)+\tau$ and $\min_{x:p(x)>0}p(x)\geq\beta$, then
    $\begin{aligned}\label{eqn:applicable-version-of-chernoff-bound}
\mathbb P(E_1)\geq1-2\cdot|\mathcal S\times\mathcal X\times\mathcal Z|^n\cdot\exp(-2^{n\cdot\tau/6})
\end{aligned}$.
\item If $\frac{1}{n}\log(K\cdot L)\leq B(p)-\delta-2\cdot f_1(\sqrt{2\cdot\delta})$ then $\begin{aligned}\mathbb
    P(E_2)\geq1-\exp(n\cdot\log(|S|)-\Gamma\cdot2^{-n\delta})\end{aligned}$.
\item For every $\beta>0$, $|\mathcal X|$, $|\mathcal S|$ and $|\mathcal Z|$, a functional dependence between $\delta$ and $\tau$ can be
    chosen such that $\lim_{\tau\to0}\delta(\tau)=0$.
\end{enumerate}
The number $N$ depends on $|\mathcal X|$, $|\mathcal S|$, $|\mathcal Z|$ as well as on $p$ (via the quantity $\beta:=\min_{x\in\mathcal
X:p(x)>0}p(x)$) and on $\delta$.
\end{lem}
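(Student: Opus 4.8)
The plan is to derive all three assertions from a single random-coding argument on the i.i.d.\ family $\mathbf X=(\mathbf X_{kl\gamma})$ together with the Chernoff--Hoeffding bound (Lemma \ref{lem:Chernoff}), in the same spirit in which $E_3,E_4,E_5$ were handled in \cite{csiszar-narayan}. Two independence facts drive everything. First, for fixed $(s^n,z^n,k)$ the variables $\{\Theta_{s^n,z^n}(\mathbf X_{kl\gamma})\}_{1\le l\le L,\,1\le\gamma\le\Gamma}$ are i.i.d., take values in $[0,b]$ with $b=2^{-n(H(Z|X,S)-f_2(n,\delta))}$, and have common mean $\nu_{s^n,z^n}:=\mathbb E\,\Theta_{s^n,z^n}(X^n)$. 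Second, for fixed $s^n$ the variables $\{d_{s^n}(\mathcal K_\gamma)\}_{1\le\gamma\le\Gamma}$ are i.i.d.\ (the blocks $\mathbf X_\gamma$ are independent across $\gamma$, and $d_{s^n}(\mathcal K_\gamma)$ depends only on $\mathbf X_\gamma$), take values in $[0,1]$, and have common mean $\mu_{s^n}:=\mathbb E\,d_{s^n}(\mathcal K_\gamma)$. Item 1 will come from concentration of the first family, item 2 from concentration of the second, and item 3 from reading off the constraint linking $\delta$ and $\tau$ in the proof of item 1.

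For item 1, I first record the shape of $\Theta_{s^n,z^n}$. If $x^n\in M(s^n,z^n)$ then the joint type $\pi:=\bar N(\cdot|s^n,x^n,z^n)$ satisfies $D(\pi\|p_{SXZ})\le\delta$, and since the $Z|XS$-part of $p_{SXZ}$ is exactly $v$, the relative-entropy chain rule gives $v^{\otimes n}(z^n|s^n,x^n)=2^{-n(H(Z|X,S)\pm\eta_1(\delta))}$ with $\eta_1(\delta)\to0$ as $\delta\to0$; hence $\Theta_{s^n,z^n}$ takes only the value $0$ or values in $[2^{-n(H(Z|X,S)+\eta_1(\delta))},b]$, so that $b$ is a legitimate upper bound for a suitable choice of $f_2$. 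The crucial step is the \emph{uniform lower bound}
\[
M(s^n,z^n)\neq\emptyset\quad\Longrightarrow\quad \mathbb P\big(X^n\in M(s^n,z^n)\big)\ \ge\ \mathrm{poly}(n)^{-1}\,2^{-n(E(p)+\eta_2(\delta))},\qquad \eta_2(\delta)\to0 .
\]
Indeed, if some $x^n$ realizes a joint type $\pi$ with $D(\pi\|p_{SXZ})\le\delta$, then $M(s^n,z^n)$ contains the whole conditional type class of that $x^n$ given $(s^n,z^n)$, which has at least $\mathrm{poly}(n)^{-1}2^{nH_\pi(X|S,Z)}$ elements, while $|T_p|\le 2^{nH_\pi(X)}$; so $\mathbb P(X^n\in M)\ge\mathrm{poly}(n)^{-1}2^{-nI_\pi(X;SZ)}$, and by continuity of mutual information, $X\perp S$ under $p_{SXZ}$ (with $q$ the type of $s^n$), and the identity $I_{p_{SXZ}}(X;SZ)=\sum_{s}q(s)I(p;V_s)\le\max_q I(p;V_q)=E(p)$ one gets $I_\pi(X;SZ)\le E(p)+\eta_2(\delta)$. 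Combining, $\nu_{s^n,z^n}/b\ge\mathrm{poly}(n)^{-1}2^{-n(E(p)+\eta(\delta))}$ whenever $\nu_{s^n,z^n}>0$ (and for $\nu_{s^n,z^n}=0$ the inclusion $\tfrac1{L\Gamma}\sum_{l,\gamma}\Theta_{s^n,z^n}(\mathbf X_{kl\gamma})\in[(1\pm\varepsilon)\nu_{s^n,z^n}]$ holds trivially). Applying Lemma \ref{lem:Chernoff} with $\varepsilon:=2^{-n\tau/4}$ and the hypothesis $\tfrac1n\log(L\Gamma)\ge E(p)+\tau$ gives $L\Gamma\varepsilon^2\nu_{s^n,z^n}/(3b)\ge 2^{n\tau/6}$ once $\eta(\delta)\le\tau/4$ and $n$ is large, hence a failure probability at most $2\exp(-2^{n\tau/6})$ for each $(s^n,z^n,k)$; a union bound over the at most $|\mathcal S|^n|\mathcal Z|^n K\le|\mathcal S\times\mathcal X\times\mathcal Z|^n$ triples (using $K\le|T_p|\le|\mathcal X|^n$) establishes item 1.

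For item 2, since the $d_{s^n}(\mathcal K_\gamma)$ are i.i.d.\ across $\gamma$ it suffices to prove $\mu_{s^n}\ge 1-2^{-n\delta/4}$ for every $s^n$, then apply Lemma \ref{lem:Chernoff} with $\varepsilon:=2^{-n\delta/4}$ (so that $(1-\varepsilon)\mu_{s^n}\ge 1-2\cdot2^{-n\delta/4}$) and a union bound over the $|\mathcal S|^n$ state sequences, which yields $1-\exp(n\log|\mathcal S|-\Gamma 2^{-n\delta})$ after absorbing constants and is useful precisely when $\Gamma$ is exponentially large. The estimate $\mu_{s^n}\ge 1-2^{-n\delta/4}$ is the standard averaged-channel random-coding bound: by exchangeability $\mu_{s^n}=\mathbb E[\,w_{s^n}(D(\mathbf X_\gamma)_{11}\mid\mathbf X_{11\gamma})\,]\ge\mathbb E[\,w_{s^n}(\hat D_{\mathbf X_{11\gamma}}\mid\mathbf X_{11\gamma})\,]-K\cdot L\cdot\mathbb E[\,w_{s^n}(\hat D_{\tilde X}\mid X)\,]$ with $\tilde X\perp X$ both uniform on $T_p$; because $s^n$ is independent of the codewords, the conditional type of $s^n$ on the positions carrying a given input symbol is $\approx\bar N(\cdot|s^n)$ with high probability over $\mathbf X$, so $T_{W_{\bar N(\cdot|s^n)},\delta}(x^n)\subseteq\hat D_{x^n}$ from \eqref{eqn:definition-of-very-optimistic-decoding-set} captures all but a $2^{-n\delta/c}$-fraction of $w_{s^n}(\cdot|x^n)$; and the collision term $\mathbb E[w_{s^n}(\hat D_{\tilde X}\mid X)]$ is at most $\mathrm{poly}(n)\,2^{-n(\min_q I(p;W_q)-\eta_3(\delta))}=\mathrm{poly}(n)\,2^{-n(B(p)-\eta_3(\delta))}$ by conditional type-class counting over the polynomially many state types. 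Multiplying by the at most $K\cdot L\le 2^{n(B(p)-\delta-2f_1(\sqrt{2\delta}))}$ competitors and taking the modulus $f_1$ large enough that $2f_1(\sqrt{2\delta})$ absorbs $\eta_3(\delta)$ makes this product at most $2^{-n\delta/4}$ for $n$ large, which is exactly the purpose of the subtractions in the rate hypothesis.

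Finally, for item 3 it suffices to note that $\delta$ is tied to $\tau$ only through the requirement $\eta(\delta)\le\tau/4$ in the proof of item 1, where $\eta$ is a fixed function (assembled from the entropy and mutual-information continuity moduli, which depend only on $|\mathcal X|,|\mathcal S|,|\mathcal Z|$, and through $N$ on $\beta=\min_{x:p(x)>0}p(x)$) with $\eta(\delta)\to0$; hence the choice $\delta(\tau):=\sup\{\delta>0:\eta(\delta)\le\tau/4\}$ (or any explicit $O(\tau)$ choice once $\eta$ is made explicit) satisfies $\lim_{\tau\to0}\delta(\tau)=0$, and the hypothesis of item 2 only asks $\delta$ to be small, so it stays compatible. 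The main obstacle is the uniform lower bound on $\mathbb P(X^n\in M(s^n,z^n))$ in item 1: it is what forces $E(p)=\max_q I(p;V_q)$ (rather than a single $q$) to appear on the right-hand side, and, unlike the reliability event $E_2$, it must hold with no non-symmetrizability hypothesis on $\mathfrak W$, since item 1 feeds into both the $C_{\mathrm{key}}$ and the $C_{\mathrm{S}}$ coding theorems.
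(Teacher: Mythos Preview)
Your overall strategy is the paper's strategy: for item~1 you apply Chernoff--Hoeffding to the i.i.d.\ family $\{\Theta_{s^n,z^n}(\mathbf X_{kl\gamma})\}_{l,\gamma}$, bound $\Theta$ above by $b$ and $\mathbb E\Theta$ below via the size of $M(s^n,z^n)$, and then union over $(s^n,z^n,k)$; for item~2 you apply Chernoff to the i.i.d.\ family $\{d_{s^n}(\mathcal K_\gamma)\}_\gamma$ after showing $\mathbb E\,d_{s^n}(\mathcal K_\gamma)\ge 1-2^{-n\delta/4}$. Item~3 is then read off from the constraint in item~1. This matches the paper both in structure and in the key inequality $I(X;Z|S)=\sum_s q(s)I(p;V_s)\le\max_s I(p;V_s)=E(p)$, which is precisely what the paper uses (phrased there as $E(p)-I(Z;X|S)\ge 0$).

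The one substantive difference is in how you establish $\mathbb E\,d_{s^n}(\mathcal K_\gamma)\ge 1-2^{-n\delta/4}$ in item~2. The paper does \emph{not} argue via ``the conditional type of $s^n$ on the positions carrying a given input symbol is $\approx\bar N(\cdot|s^n)$ with high probability over $\mathbf X$''. Instead it exploits the permutation invariance $\pi(T_p)=T_p$ to rewrite $\frac{1}{|T_p|}\sum_{x^n\in T_p}w_{s^n}(\hat D_{x^n}|x^n)=\frac{1}{n!}\sum_{\pi}w_{\pi(s^n)}(\hat D_{x^n}|x^n)$ for any fixed $x^n\in T_p$, and then applies Ahlswede's robustification lemma (Lemma~\ref{lemma:ahlswede-robustification}) to pass from the permutation average to $1-(n+1)^{|\mathcal S|}\max_{\xi\in\Xi_n}w_\xi^{\otimes n}(\hat D_{x^n}^\complement|x^n)$. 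For the collision term, the paper likewise replaces $\frac{1}{|T_p|}\mathbbm1_{T_p}$ by $\mathrm{poly}(n)\,p^{\otimes n}$ (and similarly for $s^n$) to reduce to $w_{p\otimes q}^{\otimes n}(T_{W_\xi,\delta}(x^n))$, which is then bounded by $2^{-n(I(p;W_\xi)-2f_1(\sqrt{2\delta}))}$ via type counting. Your direct-typicality route can be made to work, but as written it is a sketch: you would still need a large-deviations bound showing that for $X^n$ uniform on $T_p$ the joint type $\bar N(\cdot|s^n,X^n)$ is $\delta$-close to $q\otimes p$ with probability $1-2^{-n\cdot c}$, and then that $w_{s^n}(T_{W_q,\delta}(x^n)|x^n)\ge 1-2^{-n\delta/c}$ whenever that joint-type condition holds. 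The robustification argument sidesteps both steps in one line, which is why the paper prefers it.

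One small point: your union bound in item~1 uses $K\le|T_p|\le|\mathcal X|^n$. Nothing in the lemma statement forces $K\le|T_p|$; the paper simply absorbs the factor $K$ into $|\mathcal X|^n$ without comment, which is harmless in all applications (where $K\cdot L\le 2^{nB(p)}\le|\mathcal X|^n$) but is, strictly speaking, an additional standing assumption.
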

\begin{proof}
Some of the statements we wish to prove here are not about the full random variable $\mathbf X=(X_{kl\gamma})_{k,l,\gamma=1}^{K,L,\Gamma}$ but
only about exponentially many parts of it. We do therefore feel the need to write a few lines concerning our strategy of proof. We adopt the
usual point of view that $\mathbf X$ somehow generates matrices of codewords. In the special case treated here it will be convenient to think
of realizations of $\mathbf X$ as a list of $\Gamma$ matrices, all of which describe a code-book and each of these code-books uses the index
$l$ solely for making Eve obfuscated, while $k$ is used to transmit messages. The fact that $\gamma$ is known to both the sender and the
receiver lets the receiver adapt his decoder appropriately, while Eve only sees the average over all code-books. The effective randomness used
for obfuscation of Eve is therefore $L\cdot\Gamma$.\\
Before making this more precise, we need additional notation:\\
As stated already, the projections $\pi_{kl\gamma}:(\mathcal X^n)^{KL\Gamma}\to\mathcal X^n$ project onto the copy of $X^n$ corresponding to
$k,l,\gamma$, such that $\pi_{kl\gamma}(\mathbf X)=X_{kl\gamma}$. Accordingly, $\pi_k$ are the projections mapping $\mathbf X$ to $\mathbf
X_k:=(X_{kl\gamma})_{l,\gamma=1}^{L,\Gamma}$.\\
The trick will be to first understand how to embed statements concerning only certain projections of $\mathbf X$ into the whole random
selection process. The idea is to proceed as follows:\\
Take any set of functions $g_1,\ldots,g_M:\mathcal X^n\to[0,b']$. Then for all $k\in[K]$,
\begin{align}\label{eqn:lhs-and-rhs-probabilistic-statements}
\mathbb P(\frac{1}{\Gamma\cdot L}\sum_{l,\gamma=1}^{L,\Gamma}g_m(\pi_{kl\gamma}(\mathbf X))\notin[(1\pm\epsilon)\mathbb E g_m])&=\mathbb
P(\frac{1}{L\cdot\Gamma}\sum_{l,\gamma=1}^{L,\Gamma}g_m(\pi_{l\gamma}(\mathbf X_k))\notin[(1\pm\epsilon)\mathbb E g_m]),
\end{align}
where the left hand side is a probabilistic statement about $\mathbf X=(X_{kl\gamma})_{k,l,\gamma=1}^{K,L,\Gamma}$ and the right hand side is a
statement about the random variables $\mathbf X_k=(X_{kl\gamma})_{l,\gamma=1}^{L,\Gamma}$. Thus by the usual Chernoff bound Lemma \ref{lem:Chernoff} we have
\begin{align}\label{eqn:the-usual-union-bound-trick}
\mathbb P\left(\exists\ m,k:\ \frac{1}{L\cdot\Gamma}\sum_{l,\gamma=1}^{L,\Gamma}g_m(\pi_{l\gamma}(\mathbf X_k))\notin[(1\pm\epsilon)\mathbb E g_m]\right)\leq 2\cdot
M\cdot K\cdot \exp\left(-\frac{L\cdot\Gamma\cdot\epsilon^2\cdot\min_m\mathbb E g_m}{3\cdot b'}\right).
\end{align}
Another crucial connection in what is to follow is that for all $z^n,x^n$ and $s^n$ we have (using the abbreviation $N(\cdot):= N(\cdot|s^n,x^n,z^n)$ and $r(z|x,s):= N(s,x,z)/N(s,x|s^n,x^n)$):
\begin{align}\label{eqn:asymptotics-of-v-otimes-n(z^n|s^n,x^n)}
v^{\otimes n}(z^n|s^n,x^n)&=2^{n\cdot\sum_{s,x,z}\bar N(s,x,z)\log v(z|s,x)}\\
&=2^{n\cdot(\sum_{s,x,z}\bar N(s,x,z)(\log\frac{v(z|s,x)p(x)q(s)}{\bar N(s,x,z)}+\log\frac{\bar N(s,x,z)}{p(x)q(s)}))}\\
&=2^{n\cdot(-D(\bar N(\cdot|s^n,x^n,z^n)\|p_{XSZ})+\sum_{s,x,z}\bar N(s,x,z)\log(\frac{\bar N(s,x|s^n,x^n)\cdot r(z|x,s)}{p(x)\cdot q(s)})}\\
&=2^{n\cdot(-D(\bar N(\cdot|s^n,x^n,z^n)\|p_{XSZ})+D(\bar N(\cdot|s^n,x^n)\|p\otimes q)-H(\hat Z|\hat S,\hat X))},
\end{align}
where $\hat S\hat X\hat Z$ is distributed according to $\bar N$ (note that without loss of generality we may assume that $p,q>0$ here and in the following lines, since otherwise we could simply erase a symbol from the alphabet $\mathcal X$ or $\mathcal S$).
\\\\
{\bf Proof of property 1 of Lemma \ref{lemma:intermediate-secrecy-result-I}:} Let $n\in\mathbb N$. Replace $M$ with $\mathcal S^n\times\mathcal Z^n$ and the functions $g_m$ with the $\Theta_{s^n,z^n}$'s. We let $\delta>0$ be arbitrary for the moment. Using equation (\ref{eqn:asymptotics-of-v-otimes-n(z^n|s^n,x^n)}) and the fact that the relative entropy is never negative it can be seen that each $\Theta_{s^n,z^n}$ obeys
\begin{align}
\Theta_{s^n,z^n}(x^n)&=2^{n\cdot(-D(\bar N(\cdot|s^n,x^n,z^n)\|p_{XSZ})+D(\bar N(\cdot|s^n,x^n)\|p\otimes q)-H(\hat Z|XS))}\mathbbm1_{M(s^n,z^n)}(x^n)\\
&\leq2^{-n\cdot(H(\hat Z|\hat S\hat X)-D(\bar N(\cdot|s^n,x^n)\|p\otimes q))}.
\end{align}
This bound does obviously still depend on $x^n$. But if $x^n\in M(s^n,z^n)$ then the distribution of $\hat S\hat X\hat Z$ has the following important feature: by Pinsker's inequality, we have
\begin{align}\label{eqn:application-of-pinsker}
\|\bar N-p_{SXZ}\|_1\leq\sqrt{2\delta}.
\end{align}
Setting $f_2(\delta):=2\cdot f_1(\sqrt{2\delta})+\delta$, an application of Lemma \ref{lemma:continuity-of-conditional-entropy-with-respect-to-averaged-norm} from the appendix together with monotonicity of the relative entropy then yields
\begin{align}
\forall\ x^n\in\mathcal X^n:\qquad\Theta_{s^n,z^n}(x^n)&\leq2^{-n\cdot(H(Z|XS)-f_2(\delta))}.
\end{align}
Here, $f_1$ is defined setting $\mathcal A=\mathcal S\times\mathcal X\times\mathcal Z$. This justifies our choice of $b$. Note that the definition of $\Theta$ together with the monotonicity of $D(\cdot\|\cdot)$ ensures
that the empirical distribution $\bar N(\cdot|x^n,s^n)$ is almost product ($\bar N(\cdot,\cdot|x^n,s^n)\approx p(\cdot)\cdot\bar N(\cdot|s^n)$) and that this property was vital in the derivation of the results contained in \cite{csiszar-narayan}, whereas it may not be strictly necessary here (but does lead to a valid strategy of proof, nonetheless).\\
In order to apply the Chernoff bound we also need to calculate the expectation of each $\Theta_{s^n,z^n}$, and for that matter it will be
important to obtain a tight enough lower bound on $|M(s^n,z^n)|$: According to Lemma \ref{lemma:cardinality-bound} from the appendix (set $\mathcal A=\mathcal X$ and $\mathcal B=\mathcal S\times\mathcal Z$ there) we have
\begin{align}
|M(s^n,z^n)|\geq2^{n(H(\hat X|\hat S\hat Z)-f_C(n))}.
\end{align}
We are now almost ready to give a lower bound on the expectation of $\Theta_{s^n,z^n}$. Be aware that $s^n$ of type $q$ and $z^n$ remain fixed quantities for the moment. From monotonicity of the relative entropy and Pinsker's inequality applied together with Lemma \ref{lemma:continuity-of-conditional-entropy-with-respect-to-averaged-norm} it follows that we can estimate
\begin{align}
x^n\in M(s^n,z^n)\qquad\Rightarrow\qquad v^{\otimes n}(z^n|s^n,x^n)\geq2^{-n(H(Z|X,S)+2\delta+f_1(\sqrt{2\delta}))}.
\end{align}
It then follows that, if $M(s^n,z^n)\neq\emptyset$, we have the estimate
\begin{align}
\mathbb E\Theta_{s^n,z^n}&=\frac{1}{|T_p|}\sum_{x^n\in M(s^n,z^n)}v^{\otimes
n}(z^n|x^n,s^n)\\
&\geq2^{-n(H(Z|X,S)+2\delta+f_1(\sqrt{2\delta}))}\cdot2^{n(H(X)-f_C(n))}|M(s^n,z^n)|.
\end{align}
Estimate (\ref{eqn:application-of-pinsker}) together with the continuity of entropy yields (see \cite[Lemma 2.7]{csiszar-koerner})
\begin{align}
M(s^n,z^n)\neq\emptyset\qquad\Rightarrow\qquad|M(s^n,z^n)|\geq2^{n(H(X|S,Z)+f_C(n)+f_1(\sqrt{2\delta}))}.
\end{align}
We define $m:\mathcal S^n\times\mathcal Z^n\to\{0,1\}$ by $m(s^n,z^n)=1$ if $M(s^n,z^n)\neq\emptyset$ and $m(s^n,z^n)=0$ else. It then follows that for all large enough $n\in\mathbb N$
\begin{align}
\mathbb E\Theta_{s^n,z^n}&\geq m(s^n,z^n)\cdot2^{-n(H(Z|S)-f_3(n,\delta))},
\end{align}
where $f_3(\delta):=4(\delta+f_1(\sqrt{2\delta})$. For our random variable $\mathbf X$ this can be used as follows: via the Chernoff bound,
\begin{align}
\mathbb P(&\exists k,\ s^n,\ z^n:\ \frac{1}{L\cdot\Gamma}\sum_{l,\gamma=1}^{L,\Gamma}\Theta_{s^n,z^n}(\pi_{kl\gamma}(\mathbf
X))\notin[(1\pm\epsilon)\mathbb E\Theta_{s^n,z^n}])\label{eqn:lower-bound-on-prob-1}\\
&\leq2\cdot|\mathcal S\times\mathcal X\times\mathcal Z|^n\cdot \exp\left(-\epsilon^2\cdot L\cdot\Gamma\cdot\frac{\min_{s^n,z^n}\mathbb E\Theta_{s^n,z^n}}{3\cdot
b}\right)\\
&=c(n)\cdot \exp\left(-\epsilon^2\cdot \Gamma\cdot L\cdot\frac{\min_{s^n,z^n}\mathbb E\Theta_{s^n,z^n}}{3\cdot b}\right),
\end{align}
on account of the same argument that we used in equations (\ref{eqn:lhs-and-rhs-probabilistic-statements}) and
(\ref{eqn:the-usual-union-bound-trick}) and with the obvious definition of $c(n)$. Now we have to plug in the asymptotic behaviour of $L\cdot\Gamma$,
$\epsilon$ and $b$. If $m(s^n,z^n)=0$ then the statement is trivial. We set $f(\delta):=f_2(\delta)+f_3(\delta)$, $E(p):=\max_qI(p;V_q)$ and let $\frac{1}{n}\log L\cdot\Gamma\geq E(p)+\tau$ for some
$\tau>0$. Note that, no matter what the distribution of $S$ (which depends on the choice $s^n$ of James!), we have $E(p)-I(X;Z|S)\geq0$. Therefore,
\begin{align}
\frac{\epsilon^2}{3}\cdot L\cdot\Gamma\cdot\frac{\mathbb
E\Theta_{s^n,z^n}}{b}&\geq m(s^n,z^n)\frac{\epsilon^2}{3}\cdot2^{n(E(p)+\tau-H(Z|S)+H(Z|X,S)-f_2(\delta)-f_3(\delta))}\\
&=m(s^n,z^n)\frac{\epsilon^2}{3}\cdot2^{n(E(p)+\tau-I(Z;X|S)-f(\delta))}\\
&\geq m(s^n,z^n)\frac{\epsilon^2}{3}\cdot2^{n(E(p)+\tau-E(p)-f(\delta))}\\
&=m(s^n,z^n)\frac{\epsilon^2}{3}\cdot2^{n(\tau-f(\delta))}.
\end{align}
Upon choosing $\epsilon=2^{-n\cdot\alpha}$ we get a doubly exponential decay of the probability in equation (\ref{eqn:lower-bound-on-prob-1})
if $0>\tau-2\alpha-f(\delta)$, and since $\lim_{\delta\to0}f(\delta)=0$ there is a combination of $\delta>0$, $\tau>0$
such that for $\alpha=\tau/6$ and all large enough $n\in\mathbb N$ we have
\begin{align}\label{eqn:applicable-version-of-chernoff-bound-proven}
\mathbb P\left(\exists\ k,s^n,z^n:\ \frac{1}{L\cdot\Gamma}\sum_{l,\gamma=1}^{L,\Gamma}\Theta_{s^n,z^n}(\mathbf
x_{kl\gamma})\notin[(1\pm2^{-n\tau/6})\mathbb E\Theta_{s^n,z^n}]\right)\leq c(n)\cdot\exp(-2^{n\cdot\tau/6}).
\end{align}
It is clear that this defines a dependence $\delta=\delta(\tau)$ and that $\lim_{\tau\to0}\delta(\tau)=0$ and $\delta(\tau)>0$ for all (small enough) $\tau$. A specific choice that we will use here is $\delta(\tau)=\tau$.
\\\\
{\bf Proof of statement 2 of Lemma \ref{lemma:intermediate-secrecy-result-I}:} We will need Ahlswede's robustification technique.
\begin{lem}[\cite{ahlswede-coloring, ahlswede-gelfand-pinsker}]\label{lemma:ahlswede-robustification}
  If a function $f:\mathcal S^n\rightarrow [0,1]$ satisfies
\begin{equation}\label{eq:condRT}
  \sum_{s^n\in \mathcal S^n}f(s^n)q(s_1)\cdots q(s_n)\geq 1-\varepsilon
\end{equation}
for all $q\in\mathcal P_0^n(\mathcal S)$ and some $\varepsilon\in[0,1]$, then
\begin{equation}\label{eq:conclRT}
  \frac{1}{n!}\sum_{\pi\in\Pi_n}f(\pi(s^n))\geq 1-3\cdot(n+1)^{\lvert\mathcal S\rvert}\cdot\varepsilon.
\end{equation}
\end{lem}
We will in the following make use of the codes $\mathcal K_\gamma$ which defined the set $E_2$.\\
We would like to use the Chernoff bound for the variable $\Gamma$, so we have to control the expectation for each fixed $\gamma$. Note that the
construction of codes is such that it is independent from $\gamma$, so this will not turn into a hopeless case if we draw an independent number
$\Gamma$ of realizations of above codes. We go as follows: First associate to any given choice $\mathbf x_\gamma=(\mathbf
x_{kl\gamma})_{k,l=1}^{K,L}$ of codewords the corresponding code $\mathcal K(\mathbf x_\gamma)$ as defined in equations
(\ref{eqn:definition-of-very-optimistic-decoding-set}) and (\ref{eqn:definition-of-realistic-decoding-set}). Then, for every $s^n$ and
$\gamma\in[\Gamma]$, define the success probability of that code via
\begin{align}
d_{s^n}(\mathbf x_\gamma):=\sum_{k,l=1}^{K,L}\frac{1}{K\cdot L}w_{s^n}(D(\mathbf x_\gamma)_{kl}|\mathbf x_{kl\gamma}).
\end{align}
We then have for each fixed $\gamma$
\begin{align}
\mathbb E d_{s^n}(\mathbf X_\gamma)&=\mathbb E\frac{1}{K\cdot L}\sum_{k,l=1}^{K,L}w_{s^n}(D(\mathbf X_\gamma)_{kl}|\mathbf X_{kl\gamma})\\
&\geq\mathbb E\frac{1}{K\cdot L}\sum_{k,l=1}^{K,L}\left(w_{s^n}(\hat D_{\mathbf X_{kl\gamma}}|\mathbf X_{kl\gamma})-w_{s^n}(\bigcup_{k'\neq
k}\bigcup_{l'\neq l}\hat D_{\mathbf X_{k'l'\gamma}}|\mathbf X_{kl\gamma})\right)\\
&\geq\sum_{x^n\in T_p}\frac{1}{|T_p|}w_{s^n}(\hat D_{x^n}|x^n)-K\cdot L\cdot \sum_{x^n,\hat x^n\in T_p}\frac{1}{|T_p|^2}w_{s^n}(\hat
D_{x^n}|\hat x^n).
\end{align}
Now observe that $\pi(T_p)=T_p$ for every $\pi\in S_n$ and that, for all $\pi\in S_n$, $x^n,y^n$ and $s^n$ we have
$w_{s^n}(\pi(y^n)|\pi(x^n))=w_{\pi^{-1}(s^n)}(y^n|x^n)$. In addition to that, $\hat D_{\pi(x^n)}=\pi(\hat D_{x^n})$, so that we can write
\begin{align}
\mathbb E d_{s^n}(\mathbf X_\gamma)&\geq\frac{1}{n!}\sum_{\pi\in S_n}w_{s^n}(\hat D_{\pi(x^n)}|\pi(x^n))-K\cdot L\cdot \sum_{x^n,\hat x^n\in
T_p}\frac{1}{|T_p|^2}w_{s^n}(\hat D_{x^n}|\hat x^n).
\end{align}
By Lemma 2.3 and equation (2.1) in \cite{csiszar-koerner}, the density $\frac{1}{|T_p|}\mathbbm1_{T_p}$ satisfies
\begin{align}\label{eqn:upper-bound-on-type-distribution}
\frac{1}{|T_p|}\mathbbm1_{T_p}&\leq(n+1)^{|\mathcal X|}2^{-n\cdot H(p)}\mathbbm1_{T_p}\\
&=(n+1)^{|\mathcal X|}p^{\otimes n}\mathbbm1_{T_p}\\
&\leq(n+1)^{|\mathcal X|}p^{\otimes n}.
\end{align}
Setting $\mathrm{pl}(n):=(n+1)^{|\mathcal X|}$, we use this to further develop our bound as follows:
\begin{align}
\mathbb E d_{s^n}(\mathbf X_\gamma)&\geq\frac{1}{n!}\sum_{\pi\in S_n}w_{s^n}(\hat D_{\pi(x^n)}|\pi(x^n))-K\cdot L\cdot \mathrm{pl}(n)\cdot
\sum_{x^n\in T_p}\frac{1}{|T_p|}w_p^{\otimes n}(\hat D_{x^n}|s^n)\\
&=\frac{1}{n!}\sum_{\pi\in S_n}w_{s^n}(\hat D_{\pi(x^n)}|\pi(x^n))-K\cdot L\cdot \mathrm{pl}(n)\cdot \sum_{\pi\in S_n}\frac{1}{n!}w_p^{\otimes
n}(\hat D_{x^n}|\pi(s^n)),
\end{align}
where $x^n\in T_p$ is arbitrary and $w_p(y|s)=\sum_{x\in\mathcal X}p(x)w(y|s,x)$ according to our definition in equation \eqref{eqn:def-of-W(p)}. By carrying out the same estimate as in equation (\ref{eqn:upper-bound-on-type-distribution}) for the
distribution $\frac{1}{|T_q|}\mathbbm1_{T_q}$ induced by the type $q$ of $s^n$ and setting $\mathrm{pl_2}(n):=(n+1)^{2\cdot\max\{|\mathcal
X|,|\mathcal S|\}}$ we get (note here that $w_{p\otimes q}(y):=\sum_{s,x}q(s)\cdot p(x)\cdot w(y|s,x)$ defines, according to our convention, a probability distribution on $\mathcal P(\mathcal Y)$ which is identical to $W(p\otimes q)$)
\begin{align}
\mathbb E d_{s^n}(\mathbf X_\gamma)&\geq\frac{1}{n!}\sum_{\pi\in S_n}w_{s^n}(\hat D_{\pi(x^n)}|\pi(x^n))-K\cdot L\cdot \mathrm{pl}_2(n)\cdot
w_{p\otimes q}^{\otimes n}(\hat D_{x^n})\\
&=\frac{1}{n!}\sum_{\pi\in S_n}w_{\pi(s^n)}(\hat D_{x^n}|x^n)-K\cdot L\cdot \mathrm{pl}_2(n)\cdot w_{p\otimes q}^{\otimes n}(\hat D_{x^n})\\
&\geq\frac{1}{n!}\sum_{\pi\in S_n}w_{\pi(s^n)}(\hat D_{x^n}|x^n)-K\cdot L\cdot \mathrm{pl}_2(n)\cdot\max_{\xi\in\Xi_n} w_{p\otimes q}^{\otimes
n}(T_{W_\xi,\delta}(x^n)).
\end{align}
It is now the time to apply Ahlswede's robustification technique. For the fixed but arbitrary $x^n\in T_p$ define $f$ by fixing all its values $f(s^n)$ via $f(s^n):=w_{s^n}(\hat D_{x^n}|x^n)$. Then by Lemma \ref{lemma:ahlswede-robustification} we get
\begin{align}
\mathbb E d_{s^n}(\mathbf X_\gamma)&\geq1-(n+1)^{|\mathcal S|}\max_{\xi\in\Xi_n}w_\xi^{\otimes n}(\hat D_{x^n}^\complement|x^n)-K\cdot L\cdot
\mathrm{pl}_2(n)\cdot\max_{\xi\in\Xi_n} w_{p\otimes q}^{\otimes n}(T_{W_\xi,\delta}(x^n))\\
&\geq1-\mathrm{pl}_2(n)\left(\max_{\xi\in\Xi_n} W_\xi^{\otimes n}(T_{W_\xi,\delta}(x^n)^\complement|x^n)+\right.\\
&\left.\qquad\qquad\qquad\qquad\qquad\qquad +K\cdot L\cdot\max_{\xi\in\Xi_n} w_{p\otimes q}^{\otimes n}(T_{W_\xi,\delta}(x^n))\right)\\
&\geq1-\mathrm{pl}_2(n)\left(2^{-n\cdot\delta/2}+K\cdot L\cdot\max_{\xi\in\Xi_n} \prod_{x\in\mathcal X}w_{p\otimes q}^{\otimes n\cdot
p(x)}(T_{W_\xi(x),\delta})\right).
\end{align}
The last term in above estimate deserves special attention. Following the lines of proof of Lemma 3 in \cite{bbs-secrecy} (which was originally proven in
\cite{wyrembelski-bjelakovic-oechtering-boche}) we see that
\begin{align}
D(\bar N(\cdot|y^n)\|W_\xi(p))&=D(\sum_{x}p(x)\bar N_x(\cdot|y^n)\|W_\xi(p))\\
&\leq\sum_{x}p(x)D(N_x(\cdot|x^n,y^n)\|W_\xi(p))\\
&= D(\bar N(\cdot|x^n,y^n)\|W_\xi(p)\otimes p)\\
&\leq \delta.
\end{align}
It follows that for each $\xi\in\Xi_n$ we have by Lemma \ref{lemma:continuity-of-conditional-entropy-with-respect-to-averaged-norm} that
\begin{align}
W_{p\otimes q}^{\otimes n}(T_{W_\xi,\delta}(x^n))&\leq|T_{W_\xi,\delta}(x^n)|\max_{y^n\in T_{W_\xi,\delta}(x^n)}w_{p\otimes q}^{\otimes
n}(y^n)\\
&\leq|T_{W_\xi,\delta}(x^n)|\max_{y^n\in T_{W_\xi,\delta}(x^n)}2^{-n(D(\bar N(\cdot|y^n)\|W(p\otimes q))+H(\bar N(\cdot|y^n)))}\\
&\leq|T_{W_\xi,\delta}(x^n)|2^{-n(H(W_\xi(p))-f_1(\sqrt{2\cdot\delta}))}.
\end{align}
We further estimate that for the distribution $p_{XY,\xi}\in\mathcal P(\mathcal X\times\mathcal Y)$ defined via $p_{XY}(x,y):=p(x)w_\xi(y|x)$ we have
\begin{align}
|T_{W_\xi,\delta}(x^n)|&\leq\max_{y^n:D(\bar N(\cdot|x^n,y^n)\|p_{XY,\xi})\leq\delta}|\{\hat y^n:N(\cdot|\hat y^n,x^n)=N(\cdot|y^n,x^n)\}|\\
&\leq\max_{y^n:D(\bar N(\cdot|x^n,y^n)\|p_{XY,\xi})\leq\delta}2^{n\cdot H(\hat Y|\hat X)}\\
&\leq2^{n\cdot\sum_xp(x)H(W_\xi(\delta_x))+f_1(\sqrt{2\cdot\delta})},
\end{align}
by Lemma \ref{lemma:cardinality-bound} and Lemma \ref{lemma:continuity-of-conditional-entropy-with-respect-to-averaged-norm}.
We can now re-insert this estimate into our original problem and obtain
\begin{align}
\mathbb Ed_{s^n}(\mathbf X_\gamma)&\geq1-\mathrm{pl}_2(n)\left(2^{-n\cdot\delta}+K\cdot L\cdot2^{-n(\min_\xi
I(p;W_\xi)-2f_1(\sqrt{2\cdot\delta}))}\right)\\
&\geq1-\mathrm{pl}_2(n)\left(2^{-n\cdot\delta/2}+K\cdot L\cdot 2^{-n(\min_q I(p;W_q)-2\cdot f_1(\sqrt{2\cdot\delta}))}\right)\\
&\geq1-\mathrm{pl}_2(n)\left(2^{-n\cdot\delta/2}+ 2^{-n\cdot\delta/2}\right)\\
&\geq1-2^{-n\cdot\delta/4}
\end{align}
for all large enough $n\in\mathbb N$, since
\begin{align}
K\cdot L\leq  2^{n(\min_q I(p;W_q)-\delta-2\cdot f_1(\sqrt{2\cdot\delta}))}\leq 2^{n(\min_\xi I(p;W_\xi)-\delta-2\cdot f_1(\sqrt{2\cdot\delta}))}
\end{align}
by assumption and since $\Xi_n\subset\mathcal P(\mathcal S)$. Observe that this lower bound is entirely independent from the choice of
$s^n\in\mathcal S^n$. It now follows from the Chernoff bound Lemma \ref{lem:Chernoff} that
\begin{align}
\mathbb P(\forall s^n:\ \frac{1}{\Gamma}\sum_{\gamma=1}^\Gamma d_{s^n}(\mathcal K_\gamma)\leq(1-\epsilon)\mathbb E d_{s^n}(\mathcal
K))&\leq|\mathcal S|^n\cdot\exp(-\Gamma\cdot\epsilon^2\cdot\mathbb Ed_{s^n}(\mathcal K)/3)\\
&\leq \exp(n\cdot\log(|S|)-\Gamma\cdot\epsilon^2\cdot(1-2^{-n\cdot\delta/4})/3).\label{eqn:prob-existence-of-random-code}
\end{align}
Choose $\epsilon=2^{-n\cdot\delta/4}$ to obtain the statement.
\\\\
{\bf Proof of statement 3 in Lemma \ref{lemma:intermediate-secrecy-result-I}:} The proof of this statement follows from the proof of statement 1 where the functional dependence
$\tau\mapsto\delta(\tau)$ is specified.
\end{proof}
\end{subsection}
\begin{subsection}{Proof of Lemma \ref{lemma:central-lemma-II}\label{subsec:proof-of-central-lemma-II}}
\begin{proof}[Proof of Lemma \ref{lemma:central-lemma-II}]
We know from Lemma \ref{lemma:intermediate-secrecy-result-I} that (if $\frac{1}{n}\log(K\cdot L)\leq B(p)-\delta-2\cdot f_1(\sqrt{2\cdot\delta})$ for some $\delta>0$ and $n$ is large enough)
\begin{align}
\mathbb P(E_2)\geq1-\exp(n\cdot\log(|\mathcal S|)-\Gamma\cdot\epsilon^2\cdot(\frac{1}{3}-2^{-n\cdot\delta/2})).
\end{align}
Stepping away from the goal of proving Lemma \ref{lemma:central-lemma-II} we see that there are two possible routes which diverge from here.
One is to make $\Gamma$ as small as possible, the other will be to exploit large numbers $\Gamma$. We will soon go on with the second approach
and thereby prove Lemma \ref{lemma:central-lemma-II}, but first let us assume that we want $\Gamma$ to be as small as possible (in an
asymptotic sense of course). How can we achieve this? We take any sequence $(\epsilon_n)_{n\in\mathbb N}$ of numbers $\epsilon_n\in[0,1]$ which
converges to zero. Depending on such a choice, we set $\Gamma_n=3\cdot\log(|\mathcal S|^2)\frac{n}{\epsilon_n^2}(1-2^{-n\delta})$. It follows for the average success probability $d_{s^n}(\mathcal K_\gamma)$ as defined in Definition \ref{def:CR-assisted-code} that
\begin{align}
\mathbb P(\ \forall s^n:\ \frac{1}{\Gamma}\sum_{\gamma=1}^\Gamma d_{s^n}(\mathcal K_\gamma)\leq(1-\epsilon)\mathbb E d_{s^n}(\mathcal K)\ )<1,
\end{align}
proving the existence of a sequence of codes for which
\begin{align}
\min_{s^n\in\mathcal S^n}\frac{1}{\Gamma_n\cdot K\cdot L}\sum_{\gamma,k,l=1}^{\Gamma_n,K,L}w_{s^n}(D_{k\gamma}|x_{k\gamma})\geq1-\epsilon_n
\end{align}
(whenever $\Gamma_n$ scales asymptotically as $\Gamma_n\approx\frac{n}{\epsilon_n^2}$). If $\epsilon_n=n^{-\nu}$ for some small number $\nu>0$
for example we get $\Gamma_n\approx\frac{n}{n^{-2\nu}}=n^{1+2\nu}$. This type of asymptotic scaling of common randomness has been observed
several times now in the literature, and obviously raises the question whether $\Gamma_n=\mathrm{const}\cdot n$ would be sufficient to
guarantee asymptotically optimal performance, for some sufficiently large number $\mathrm{const}$ depending only on $|\mathcal S|$, for
example.\\
We can now proceed our proof of Lemma \ref{lemma:central-lemma-II} by using equation (\ref{eqn:prob-existence-of-random-code}) together with
Lemma \ref{lemma:intermediate-secrecy-result-I} and a union bound: Let $\beta>0$ and $\tau>0$. From now on until the end of this proof, let $\delta=\delta(\tau)$. Let
\begin{align}
\frac{1}{n}\log(\Gamma_n\cdot L_n)\geq E(p)+\tau,\qquad B(p)-\delta-2\cdot f_1(\sqrt{2\cdot\delta})\geq\frac{1}{n}\log(K_n\cdot L_n).
\end{align}
It then follows that for all large enough $n$ it holds that
\begin{align}
\mathbb P(E_1\cap E_2)>0.
\end{align}
Thus, there is a realization $\mathbf x$ of $\mathbf X$ such that for this particular realization we have
\begin{align}
\forall s^n,z^n,k:\ \frac{1}{L\cdot\Gamma}\sum_{l,\gamma=1}^{L,\Gamma}\Theta_{s^n,z^n}(\mathbf x_{kl\gamma})\in[(1\pm2^{-n\tau/4})\mathbb
E\Theta_{s^n,z^n}]\label{eqn:approximate-security-is-given!}\\
\min_{s^n\in\mathcal S^n}\frac{1}{\Gamma}\sum_{\gamma=1}^\Gamma d_{s^n}(\mathcal K_\gamma)\geq1-2\cdot 2^{-n\delta/2}\label{eqn:code-is-good!}
\end{align}
Further, for every $k\in[K_n]$ we have (setting $\Delta(s^n,z^n,x^n):=\Theta_{s^n,z^n}(x^n)$ for all $s^n,z^n$ and $x^n$)
\begin{align}
&\left\|\frac{1}{L\cdot\Gamma}\sum_{l,\gamma=1}^{L,\Gamma}v_{s^n}(\cdot|\mathbf x_{kl\gamma})-\mathbb Ev_{s^n}\right\|_1\\
&\leq\left\|\frac{1}{L\cdot\Gamma}\sum_{l,\gamma=1}^{L,\Gamma}(v_{s^n}(\cdot|\mathbf x_{kl\gamma})-\Delta(s^n,\cdot,\mathbf
x_{kl\gamma})\right\|_1
+\left\|\frac{1}{L\cdot\Gamma}\sum_{l,\gamma=1}^{L,\Gamma}\Delta(s^n,\cdot,\mathbf x_{kl\gamma})-\mathbb
E\Delta(s^n,\cdot,X^n)\right\|_1\\
&\qquad+\left\|\mathbb E(v_{s^n}(\cdot|X^n)-\mathbb E\Delta(s^n,\cdot,X^n)\right\|_1\\
&\leq\frac{1}{L\cdot\Gamma}\sum_{l,\gamma=1}^{L,\Gamma}\left\|v_{s^n}(\cdot|\mathbf x_{kl\gamma})-\Delta(s^n,\cdot,\mathbf
x_{kl\gamma})\right\|_1+2^{-n\cdot\tau/4}+\mathbb E\left\|v_{s^n}(\cdot|X^n)-\Delta(s^n,\cdot,X^n)\right\|_1
\end{align}
where the first inequality is due to the triangle inequality of $\|\cdot\|_1$ and the second one due to the specific probabilistic choice of
$\mathbf x$, especially the validity of (\ref{eqn:approximate-security-is-given!}). We now use the definition of $\Theta_{s^n,z^n}$ in order to
derive bounds on the remaining quantities: for every $x^n\in T_p$ we have
\begin{align}
\|v_{s^n}(\cdot|x^n)-\Delta(s^n,\cdot,x^n)\|_1&=\sum_{z^n:D(\bar N(\cdot|s^n,x^n,z^n\|p_{SXZ})>\delta}v^{\otimes n}(z^n|s^n,x^n)\\
&=v^{\otimes n}(T_{V,\delta}(s^n,x^n)^\complement|s^n,x^n)\\
&\leq2^{-n\cdot\delta/2},
\end{align}
for all large enough $n$. Thus
\begin{align}
\frac{1}{L\cdot\Gamma}\sum_{l,\gamma=1}^{L,\Gamma}\|v_{s^n}(\cdot|\mathbf x_{kl\gamma})-\Delta(s^n,\cdot,\mathbf
x_{kl\gamma})\|_1+\mathbb E\|v_{s^n}(\cdot|X^n)-\Delta(s^n,\cdot,X^n)\|_1\leq2\cdot2^{-n\delta/2}
\end{align}
for all large enough $n\in\mathbb N$ so that we ultimately get (uniformly in $k\in[K]$) the bound
\begin{align}
\frac{1}{L\cdot\Gamma}\sum_{l,\gamma=1}^{L,\Gamma}\|v_{s^n}(\cdot|\mathbf x_{kl\gamma})-\Delta(s^n,\cdot,\mathbf
x_{kl\gamma})\|_1&\leq2\cdot2^{-n\delta/2}+2^{-n\tau/4}\leq2^{-n\cdot\nu(\tau)},
\end{align}
for all large enough $n$ and setting $\nu(\tau):=\min\{\delta(\tau),\tau\}/5$ (note that $\nu(\tau)=\tau/5$ is a valid choice).
\end{proof}
\end{subsection}
\begin{subsection}{Proof of Theorems \ref{theorem:symmetrizability-properties-of-C1det}, \ref{theorem:stability-properties-of-C1det} and
\ref{theorem:discontinuity-properties-of-C1det} \label{subsec:proof-of-stability} (properties of $C_{\mathrm{S}}$)}
\begin{proof}[Proof of Theorem \ref{theorem:symmetrizability-properties-of-C1det}]
We give the proof of the properties of $C_{\mathrm{S}}$ in the same order as they were stated in the theorem:
\\\\
{\bf1.} This is clear from \cite{ericson} where it was proven that symmetrizability makes it impossible to reach reliable transmission of
messages.
\\\\
{\bf2.} The strategy of proof is to use Lemma \ref{lemma:central-lemma} with $\Gamma=1$. The reason for this is that, by assumption, $\mathfrak W$ is non-symmetrizable. Now, we know from Example \ref{example:invertible-pre-coding-induces-symmetrizability} that this does not imply that every $\mathfrak W\circ U$ is non-symmetrizable as well. More precisely, to a given $r\in\mathbb N$ there may exist an alphabet $\mathcal U_n$, a $p\in\mathcal P(\mathcal U_n)$ and a channel $U_n\in C(\mathcal U,\mathcal X^n)$ such that
\begin{align}
\min_{q\in\mathcal P(\mathcal S^r)}&I(p;W_q\circ U_r)-\max_{s^r\in\mathcal S^r}I(p; V_{s^r}\circ U_r)\\
&=\max_{p'\in\mathcal P(\mathcal U_r)}\max_{U_r'\in C(\mathcal U_r,\mathcal X^r}\min_{q\in\mathcal P(\mathcal S^r)}I(p';W_q\circ U_r')-\max_{s^r\in\mathcal S^r}I(p'; V_{s^r}\circ U_r')\\
&\geq C^\mathrm{mean}_\mathrm{S,ran}(\mathfrak W,\mathfrak V)-\epsilon
\end{align}
but, additionally, $(W_{s^r}\circ\mathcal U_r)_{s^r\in\mathcal S^r}$ is symmetrizable. We provide here two approaches to deal with this problem: First, we will use the fact that $\mathfrak W$ is non-symmetrizable for transmission of a small number of messages that can be read by Eve but, since backwards communication from Eve to James is forbidden, are sufficient to counter any of the allowed jamming strategies.\\
Second, we will consider a variant of the optimization problem \eqref{eqn:capacity-formula} where optimization of $U'_r$ is restricted to maps of the form $U'_r=Id\otimes U_{2,\ldots,r}''$ and we will prove that these restricted maps are asymptotically as good as those that are derived from the original problem when it comes to calculating capacity. However, these maps have the additional property that they cannot turn a non-symmetrizable AVC into a symmetrizable one.\\
Now let $r\in\mathbb N$ be arbitrary but fixed and $p,U_r$ as above. Let $k,l\in\mathbb N$ be such that $n=k+l$ and $l=\lfloor\lambda\cdot n\rfloor$, where $\lambda\in(0,1)$ is arbitrary but fixed for the moment. Then from \cite[Lemma 5]{csiszar-narayan}, if $\hat K$ satisfies the assumptions of Lemma \ref{lemma:central-lemma} with $L$ set to one based on the properties (\ref{eqn:property-1-of-central-lemma}), (\ref{eqn:property-2-of-central-lemma}) and (\ref{eqn:property-4-of-central-lemma}) of the lemma.\\
So, on the grounds of \ref{lemma:central-lemma} and of the results proven in \cite{csiszar-narayan}, we see that for every $m'\in\mathbb N$, $r\in\mathbb N$ and $\delta>0$, $p\in\mathcal P_0^{m'}(\mathcal U_r)$ (where $\mathcal U_r=[|\mathcal X|^{r}]$ ) and $U\in C(\mathcal U_r,\mathcal X^r)$ there exists a code $\mathcal K=(\mathcal K_m)_{m=1}^\infty$ such that for every $s^{r\cdot m}\in(\mathcal S^r)^m=\mathcal S^{r\cdot m}$ we have
\begin{align}
\frac{1}{K_k'}\sum_{a=1}^{K_k'}\sum_{x^{k}}w_{s^{k}}(D_{a}'|\mathbf x_a')\geq1-\epsilon_k,
\end{align}
where $\{\epsilon_k\}_{k\in\mathbb N}\subset[0,1]$, $\lim_{k\to\infty}\epsilon_k=0$ and it may be assumed that $K_k'=l^3$. In addition to that we know from \cite{wiese-noetzel-boche-I} that there exist codes for $(\mathfrak W,\mathfrak V)$ such that
\begin{align}
\min_{s^l\in\mathcal S l}\frac{1}{\Gamma_l}\frac{1}{K_l''}\sum_{a,b=1}^{\Gamma_l,K_l''}\sum_{x^l\in\mathcal X^l}u_l(x^l|a,b)w_{s^{l}}(D_{a,b}''|\mathbf x_{ab}'')\geq1-\delta_l,
\end{align}
where $\{\delta_l\}_{l\in\mathbb N}\subset[0,1]$, $\lim_{l\to\infty}\epsilon_l=0$, $\Gamma_l=l^3$, $U_l\in C([\Gamma_l]\times[K_l''],\mathcal X^l)$ is stochastic pre-coding and $D_{a,b}\cap D_{a,b'}=\emptyset$ whenever $b\neq b'$ ($a\in[\Gamma_l]$ is used as common randomness in \cite{wiese-noetzel-boche-I}, whereas here we will substitute the messages that were sent on the first $k$ channel uses for it. Note that the messages on the first $k$ channel uses are not secure against Eve). In addition to that it holds
\begin{align}
\lim_{l\to\infty}\frac{1}{l}\log K_l''=C^\mathrm{mean}_\mathrm{S,ran}(\mathfrak W,\mathfrak V)-\nu
\end{align}
for some arbitrarily small $\nu>0$ and
\begin{align}\label{eqn:secrecy-from-wnb-I}
\lim_{l\to\infty}\frac{1}{l}\max_{\gamma\in[\Gamma_l]}\max_{s^l\in\mathcal S^l}I(\mathfrak K''_l;\mathfrak Z_{s^l}|\mathfrak \Gamma_l=a)=0.
\end{align}
The mutual information is evaluated on the random variables defined via
\begin{align}
\mathbb P_{s^l}((\mathfrak K''_l,\mathfrak Z_{s^l},\mathfrak \Gamma_l)=(b,z^l,a))&:=\frac{1}{\Gamma_l}\frac{1}{K_l''}\sum_{x^l\in\mathcal X^l}u_l(x^l|a,b)v(z^l|s^l,x^l).
\end{align}
We concatenate the two codes by defining new stochastic encodings $E_n\in C([K''_l],\mathcal X^n)$ via
\begin{align}
e_n((x^k,x^l)|b):=\sum_{a=1}^{\Gamma_l}\delta_{\mathbf x_a}(x^k)u_l(x^l|a,b)
\end{align}
and new decoding sets via
\begin{align}
D_{b}:=\cup_aD_a'\times D_{a,b}''\subset\mathcal X^n.
\end{align}
It holds $D_b\cap D_{b'}=\cup_{a,a'}(D_{a}\times D_{a,b}\cap D_{a'}\times D_{a',b'})=\emptyset$. We set $K_n:=K''_l$, $\alpha_n:=\epsilon_k$ and $\beta_n:=\delta_l$ for the $l$ satisfying $l=\lfloor\lambda\cdot n\rfloor$ and the $k$ satisfying $k=n-l$. Then $\lim_{n\to\infty}\alpha_n=\lim_{n\to\infty}\beta_n=0$. As a consequence of the Innerproduct Lemma in \cite{ahlswede-elimination} we know that for every $s^n=(s^k,s^l)$ we have
\begin{align}
\frac{1}{K_n}\sum_{b=1}^{K_n}\sum_{x^n\in\mathcal X^n}e_n(x^n|b)w(D_b|s^n,x^n)&\geq\frac{1}{K_l}\sum_{a,b=1}^{\Gamma_k,K_l'}\sum_{x^l\in\mathcal X^n}u(x^l|a,b)w(D_a'|s^k,x^k)w(D_{a,b}''|s^l,x^l)\\
&\geq1-2\max\{\alpha_n,\beta_n\}.
\end{align}
That the messages $b\in[K_n]$ are also asymptotically secure in the sense that
\begin{align}
\lim_{n\to\infty}\frac{1}{n}\max_{s^n\in\mathcal s^n}I(\mathfrak K_n;\mathfrak Z_{s^n})&\leq\lim_{l\to\infty}\frac{\lambda}{l}\max_{s^l\in\mathcal S^l}I(\mathfrak K_l'';\mathfrak Z_{s^l}|\mathfrak\Gamma_l)\\
&=0
\end{align}
follows from independence of the distributions of the messages $b$ and the values $a$ of the common randomness as described in the inequalities from \eqref{eqn:secrecy-assumption} to \eqref{eqn:consequence-of-secrecy-assumption}. Especially inequality \eqref{eqn:secrecy-assumption} is valid since as a consequence of \eqref{eqn:secrecy-from-wnb-I}. The rate of the code is calculated as
\begin{align}
\lim_{n\to\infty}\frac{1}{n}\log K_n=\lambda\left(C^\mathrm{mean}_\mathrm{S,ran}(\mathfrak W,\mathfrak V)-\nu\right).
\end{align}
Since $\nu$ can be arbitrarily close to $0$ and $\lambda$ can be chosen arbitrarily close to $1$ we have proven the desired result.

We now explain the second approach to proving statement $2.$ in Theorem \ref{theorem:symmetrizability-properties-of-C1det}. Here we aim to utilize the full power of Lemma \ref{lemma:central-lemma} with $\Gamma=1$. Our starting point are the distributions $p$ and the channels $U$ arising from the optimization \eqref{eqn:capacity-formula} for fixed $r\in\mathbb N$. Note that, without loss of generality, $\mathcal U_r=\mathcal X^r$ for every $r\in\mathbb N$ in \eqref{eqn:capacity-formula}. Set, for every $r\in\mathbb N$,
\begin{align}\label{eqn:Cr}
C_r:=\max_{p\in\mathcal P(\mathcal X^r)}\max_{U_r\in C(\mathcal X^r,\mathcal X^r)}\min_{q\in\mathcal P(\mathcal S^r)}I(p;W_q\circ U_r)-\max_{s^r\in\mathcal S^r}I(p; V_{s^r}\circ U_r).
\end{align}
Let $r\in\mathbb N$ be arbitrary but fixed. For an arbitrary $\epsilon\geq0$, let $p$ and $U_r$ be such that
\begin{align}
C_r-\epsilon&=\min_{q\in\mathcal P(\mathcal S^r)}I(p;W_q\circ U_r)-\max_{s^r\in\mathcal S^r}I(p; V_{s^r}\circ U_r).
\end{align}
Now define $\tilde U_{r+1}$ by $\tilde u_{r+1}((x_1,\ldots,x_{r+1})|(x,u)):=\sum_{x'\in\mathcal X}u_r((x',x_2,\ldots,x_{r+1})|u)\delta_x(x_1)$ for all $x,x_1,\ldots,x_{r+1}\in\mathcal X$ and $u\in\mathcal U_r=\mathcal X^r$. Then it holds that
\begin{align}
C_{r+1}&\geq\min_{q\in\mathcal P(\mathcal S^{r+1})}I(p\otimes\pi;W_q\circ U_{r+1})-\max_{s^{r+1}\in\mathcal S^{r+1}}I(p\otimes\pi; V_{s^{r+1}}\circ U_{r+1})\\
&\geq\min_{q\in\mathcal P(\mathcal S^r)}I(p;W_q\circ U_r)-\max_{s^r\in\mathcal S^r}I(p; V_{s^r}\circ U_r)-\log|\mathcal X|\\
&=C_r-\epsilon-\log|\mathcal X|,
\end{align}
where $\pi\in\mathcal P(\mathcal X)$ is defined by $\pi(x):=|\mathcal X|^{-1}$ for all $x\in\mathcal X$. This latter estimate is due to the equality $I(p\otimes\pi; V_{s^{r+1}}\circ U_{r+1})=I(p;V_{s^r}\circ U)+I(\pi;V_{s_{r+1}})$, the data processing inequality and the fact that for arbitrary channels $S\in C(\mathcal A\times\mathcal B,\mathcal C)$ and $T\in C(\mathcal A'\times\mathcal B',\mathcal C')$, as well as distributions $q\in\mathcal S(\mathcal B\times\mathcal B')$ with respective marginal distributions $q_B\in\mathcal P(\mathcal B)$ and $q_{B'}\in\mathcal P(\mathcal B')$ and $p\in\mathcal S(\mathcal A\times\mathcal A')$ with respective marginal distributions $p_A\in\mathcal P(\mathcal A)$ and $q_{A'}\in\mathcal P(\mathcal A')$ we have
\begin{align}
\forall\ (a,b,c)\in\mathcal A\times\mathcal B\times\mathcal C:\qquad \sum_{a',c'}\sum_{b,b'}s(c|a,b)t(c'|a',b')p(a,a')q(b,b')&=\sum_{b}q_{B}p_{A}t(c|a,b).
\end{align}
Since $\mathfrak W$ is non-symmetrizable we know that $\mathfrak W^{\otimes r}\circ \tilde U_{r}$ is non-symmetrizable for every $r\geq2$. The reason for that is explained as follows: Let again $S,T$ be channels as above. Assume that $S$ is symmetrizable but $T$ is not. Then $S\otimes T$ is non-symmetrizable. This can be seen by assuming the existence of a symmetrising map $Q\in C(\mathcal A\times\mathcal A',\mathcal B\times\mathcal B')$. The statement
\begin{align}
\forall (a_1,a_2,a'_1,a'_2)&\in\mathcal A^2\times{\mathcal A'}^2:\nonumber\\
&\sum_{b,b'}s(\cdot|a_1,b)t(\cdot|a_1',b')q(b,b'|a_2,a_2')=\sum_{b,b'}s(\cdot|a_2,b)t(\cdot|a_2',b')q(b,b'|a_1,a_1')
\end{align}
would obviously imply for any fixed choice of $(a_1,a_2)$ the statement
\begin{align}
\forall (a_1',a_2')\in\mathcal A\times\mathcal A':\qquad\sum_{b'}t(\cdot|a_1',b')q_{B'}(b'|a_2,a_2')=\sum_{b'}t(\cdot|a_2',b')q_{B'}(b'|a_1,a_1'),
\end{align}
where $q_{B'}(b'|a_1,a_1'):=\sum_{b}q(b,b'|a_1,a_1')$. This would be in contradiction to non-symmetrizability of $T$. Since $\tilde U_r=U_{r-1}\otimes Id$ we can thus conclude that $\mathfrak W^{\otimes r}\circ \tilde U_{r}$ is non-symmetrizable. We now proceed with the proof of Theorem \ref{theorem:symmetrizability-properties-of-C1det}.\\
With this approach we have evaded the problem that $\mathfrak W^{\otimes r}\circ U_r$ may well be symmetrizable (see our Example \ref{example:invertible-pre-coding-induces-symmetrizability}).\\
By \cite[Lemma 4]{csiszar-narayan} non-symmetrizability of $\mathfrak W^{\otimes r}\circ \tilde U_{r}$ implies that it is possible to define a decoder according to \cite[Definition 3]{csiszar-narayan}, with $N=K\cdot L$ and $[N]$ replaced by $[K]\times[L]$. Since only the number of codewords and their type ever enters the proof it makes no difference whether we enumerate them by one index taken from $[N]$ or by two indices taken from $[K]\times[L]$. This decoder is proven to work reliably in \cite[Lemma 5]{csiszar-narayan} (even with an exponentially fast decrease of average error), if $N=K\cdot L$ satisfies the assumptions of Lemma \ref{lemma:central-lemma} based on the properties (\ref{eqn:property-1-of-central-lemma}), (\ref{eqn:property-2-of-central-lemma}) and (\ref{eqn:property-4-of-central-lemma}) of the lemma.\\
So, on the grounds of Lemma \ref{lemma:central-lemma} and of the results proven in \cite{csiszar-narayan}, we see that for every $m\in\mathbb N$, $r\in\mathbb N\backslash\{1\}$ and $\delta>0$, $p\in\mathcal P_0^{m}(\mathcal X^r)$ and $U\in C(\mathcal X^{r-1},\mathcal X^{r-1})$ there exists a code $\mathcal K=(\mathcal K_m)_{m=1}^\infty$ such that for every $s^{r\cdot m}\in(\mathcal S^{r})^m=\mathcal S^{r\cdot m}$ we have
\begin{align}
\frac{1}{K_m}\frac{1}{L_m}\sum_{k,l=1}^{K_m,L_m}\sum_{x^{m\cdot r}}w_{s^{m\cdot r}}(D_{kl}|x^{m\cdot r})u^{\otimes m}(x^{m\cdot
r}|u_{kl})\geq1-\epsilon_m,
\end{align}
where $\{\epsilon_m\}_{m\in\mathbb N}\subset[0,1]$, $\lim_{m\to\infty}\epsilon_m=0$ and it holds that
\begin{align}
\liminf_{m\to\infty}\frac{1}{m}\log(K_m\cdot L_m)\geq \min_{q\in\mathcal P(\mathcal S^r)}I(p;W_q^m\circ \tilde U_r)-\delta
\end{align}
(the code we use here is defined by using the codewords $\mathbf x_{kl\gamma}$ together with the decoder from \cite[Definition 3]{csiszar-narayan} defined for the AVC $\mathfrak W^{\otimes r}\circ \tilde U_r:=(W_{s^r}\circ (U_{r-1}\otimes Id))_{s^r\in\mathcal S^r}$) and
\begin{align}
\max_{s^r\in\mathcal S^r}I(p;V_{s^r}\circ \tilde U_r)+2\delta\geq\liminf_{m\to\infty}\frac{1}{m}\log L_m\geq \max_{s^r\in\mathcal S^r}I(p;V_{s^r}\circ
\tilde U_r)+\delta,
\end{align}
implying that for a sequence $(p_m)_{m\in\mathbb N}$ of choices for $p_m$ converging to some $p$ having a decomposition $p=p'\otimes\pi$ for $p'\in\mathcal P(\mathcal X^{r-1})$ being an optimal choice in the sense of \eqref{eqn:Cr} we get
\begin{align}
\liminf_{m\to\infty}\frac{1}{m}\log K_m&\geq\min_{q\in\mathcal P(\mathcal S^r)}I(p;W_q\circ \tilde U_r)-\max_{s^r\in\mathcal S^r}I(p; V_{s^r}\circ
\tilde U_r)-3\delta\\
&\geq C_{r-1}-\log|\mathcal X|-3\delta.
\end{align}
Also, it is clear from the last part of Lemma \ref{lemma:central-lemma} (equation (\ref{eqn:property-5-of-central-lemma})) together with
\cite[Lemma 20]{wiese-noetzel-boche-I} that the codes employed here are asymptotically secure in the strong sense:
\begin{align}
\limsup_{m\to\infty}\max_{s^{r\cdot m}}I(\mathfrak K_m;\mathfrak Z_{s^{r\cdot m}})=0.
\end{align}
We now wish to apply the code for the extended channel $(\mathfrak W^{\otimes r}\circ \tilde U_{r},\mathfrak V^{\otimes r})$ to the original channel $(\mathfrak W,\mathfrak V)$. Define values $t_n\in\{0,\ldots,r-1\}$ by requiring $n=m\cdot r+t_n$ for them to hold for some suitably chosen $m=m(n)\in\mathbb N$. This
quantity satisfies $-1+n/r\leq m(n)\leq n/r$. For every $n\in\mathbb N$ we then define new decoding sets by
\begin{align}
\hat D_{kl}:=D_{kl}\times\mathcal Y^{t_n}
\end{align}
and new randomized encodings by setting for some arbitrary but fixed $x^{t_n}$
\begin{align}
E(\hat x^n|k):=\sum_{l=1}^L\frac{1}{L}u^{\otimes n}(x^{m\cdot r}|u_{kl})\cdot\delta_{x^{t_n}}(\hat x^{t_n}).
\end{align}
From the choice of codewords and the decoding rule it is clear that this code is asymptotically reliable. The asymptotic number of codewords
(mind that $\hat K_n=K_{m(n)}$) calculated and normalized with respect to $n$, is
\begin{align}
\liminf_{n\to\infty}\frac{1}{n}\log\hat K_n&=\liminf_{n\to\infty}\frac{1}{m(n)\cdot r+t_n}K_{m(n)}\\
&\geq\liminf_{n\to\infty}\frac{1}{r}\cdot\frac{1}{m(n)+1}K_{m(n)}\\
&=\liminf_{n\to\infty}\frac{1}{r}\cdot\frac{1}{m(n)}\cdot\frac{m(n)}{m(n)+1}\cdot K_{m(n)}\\
&=\frac{1}{r}\liminf_{n\to\infty}\frac{1}{m(n)}\cdot K_{m(n)}\\
&=\frac{1}{r}\left(C_{r-1}-3\delta\right)\\
&=\frac{1}{r-1}\cdot\frac{r-1}{r}\left(C_{r-1}-\log|\mathcal X|-3\delta\right).
\end{align}
In addition to that, the code is secure: For each $n\in\mathbb N$, the distribution of the input codewords and Eve's
outputs is
\begin{align}
\mathbb P(\mathfrak K_n=k,&\mathfrak Z_{s^n}=z^n)\nonumber\\
&=\sum_{l=1}^L\frac{1}{L}\sum_{x^{r\cdot m}}\sum_{x^{t_n}}u^{\otimes m}(x^{r\cdot
m}|u_{kl})v^{\otimes r\cdot m}(z^{r\cdot m}|x^{r\cdot m},s^{r\cdot m})v^{\otimes t_n}(z^{t_n}|x^{t_n},s^{t_n})\\
&=\mathbb P(\mathfrak K_n=k,\mathfrak Z_{s^{r\cdot m}}=z^{r\cdot m})\cdot v^{\otimes t_n}(z^{t_n}|x^{t_n},s^{t_n}).
\end{align}
This demonstrates that (uniformly in $s^n\in\mathcal S^n$ and since $\mathfrak K_n=\mathfrak K_m$ holds) we have
\begin{align}
I(\mathfrak K_n;\mathfrak Z_{s^n})=I(\mathfrak K_n;\mathfrak Z_{s^{r\cdot m}})+0=I(\mathfrak K_m;\mathfrak Z_{s^{r\cdot m}}).
\end{align}
Since the right hand side of above equation goes to zero for $n$ going to infinity and since $\lim_{r\to\infty}\frac{r-1}{r}=1$ we see that the capacity $C_\mathrm{S}$ is lower bounded by $\lim_{r\to\infty}\frac{1}{r}C_r$. It is not an immediate consequence that this implies we can reach the capacity $C^\mathrm{mean}_\mathrm{S,ran}(\mathfrak W,\mathfrak V)=C^*(\mathfrak W,\mathfrak V)$. Fortunately it has been proven in \cite{wiese-noetzel-boche-I} that
\begin{align}
C^*(\mathfrak W,\mathfrak V)=\lim_{r\to\infty}\frac{1}{r}\max_{p\in\mathcal P(\mathcal U_n)}\max_{U_n\in C(\mathcal U,\mathcal X^n)}\left(\min_{q\in\mathcal P(\mathcal S^r)}I(p;W_q\circ U)-\max_{s^r\in\mathcal
S^r}I(p;V_{s^r}\circ U)\right)
\end{align}
holds. Thus $\lim_{r\to\infty}\frac{1}{r}C_r=C^*(\mathfrak W,\mathfrak V)$. This finally implies the desired result.
\end{proof}
\begin{proof}[Proof of Theorem \ref{theorem:stability-properties-of-C1det}] If $C_{\mathrm{S}}(\mathfrak W,\mathfrak V)=0$, there is
nothing to prove. Assume that $C_{\mathrm{S}}(\mathfrak W,\mathfrak V)>0$. It is evident that, in that case, $\mathfrak W$ is not
symmetrizable. The function $F$ defined in Definition \ref{def:definition-of-F} is continuous with respect to the Hausdorff distance (proving this statement is in complete analogy as the
corresponding part in the proof of Theorem 5 in \cite{bn-positivity}). Thus, if $F(\mathfrak W)>0$, then there is an $\epsilon>0$ such that for
all $\mathfrak W'$ satisfying $d(\mathfrak W,\mathfrak W')<\epsilon$ we know that $F(\mathfrak W')>0$ as well. Thus, every of these $\mathfrak W'$ is non-symmetrizable.\\
For some suitably chosen $\epsilon'<\epsilon$ we additionally know from Theorem 9 in \cite{wiese-noetzel-boche-I} that
$C^{\mathrm{mean}}_{\mathrm{S,ran}}(\mathfrak W',\mathfrak V)>0$ for all those $\mathfrak W'$ for which $d(\mathfrak W,\mathfrak W')<\epsilon'$. But since
Theorem \ref{thm:coding-theorem} shows that $F(\mathfrak W')>0\ \Rightarrow C_{\mathrm{S}}(\mathfrak W',\mathfrak
V)=C^{\mathrm{mean}}_{\mathrm{S,ran}}(\mathfrak W',\mathfrak V)$ this implies that
\begin{align}
C_{\mathrm{S}}(\mathfrak W,\mathfrak V)>0\ \forall\ \mathfrak W':\ d(\mathfrak W,\mathfrak W')<\epsilon'.
\end{align}
Since from Theorem \ref{theorem:stability-properties-of-C1det} we know that positivity of $C_{\mathrm{S}}(\mathfrak W',\mathfrak
V)$ ensures that it equals $C^{\mathrm{mean}}_{\mathrm{S,ran}}(\mathfrak W',\mathfrak V)$, and since the latter is continuous, we are done.
\end{proof}
\begin{proof}[Proof of Theorem \ref{theorem:discontinuity-properties-of-C1det}] Again, we prove everything in the same order as it is listed in
the theorem.
\\\\
{\bf1.} Let $C_{\mathrm{S}}$ be discontinuous in the point $(\mathfrak W,\mathfrak V)$. By Theorem
\ref{theorem:stability-properties-of-C1det} we know that this can only be the case if $C_{\mathrm{S}}(\mathfrak W,\mathfrak V)=0$. If in addition
we have $C^{\mathrm{mean}}_{\mathrm{S,ran}}(\mathfrak W,\mathfrak V)=0$ then we have, since $C^{\mathrm{mean}}_{\mathrm{S,ran}}$ is continuous, that for every
$\epsilon>0$ there is $\delta>0$ such that for all $(\mathfrak W_\delta,\mathfrak V)$ satisfying $d(\mathfrak W_\delta,\mathfrak W)<\delta$ we
have $C^{\mathrm{mean}}_{\mathrm{S,ran}}(\mathfrak W_\delta,\mathfrak V)\leq\epsilon$. Since since $C^{\mathrm{mean}}_{\mathrm{S,ran}}\geq C_{\mathrm{S}}$ this
would imply that $C_{\mathrm{S}}$ is continuous as well, in contradiction to the assumption. Thus $C^{\mathrm{mean}}_{\mathrm{S,ran}}(\mathfrak
W,\mathfrak V)>0$. Of course this immediately implies that $\mathfrak W$ has to be symmetrizable, by property 2. This is, in turn, equivalent
to $F(\mathfrak W)=0$. The definition of $F$ can be picked up from equation (\ref{eqn:definition-of-F}), its connection to symmetrizability is obvious from the definition. The notion of symmetrizability is explained in the introduction in equation (\ref{eqn:definition-of-symmetrizability}). Clearly, if for all $\epsilon>0$ and $\mathfrak W'$ satisfying $d(\mathfrak W,\mathfrak W')<\epsilon$ we would have
$F(\mathfrak W')=0$, then $C_{\mathrm{S}}(\mathfrak W',\mathfrak V')$ would be zero in a whole vicinity of $(\mathfrak W,\mathfrak V)$.
Thus for all $\epsilon>0$ there has to be at least one $\mathfrak W_\epsilon$ such that $d(\mathfrak W,\mathfrak W_\epsilon)<\epsilon$ but
$F(\mathfrak W_\epsilon)>0$.\\
The reverse direction is basically established by using all our arguments backwards: For all $\epsilon>0$, let there be at least one $\mathfrak W_\epsilon$ such that $d(\mathfrak W,\mathfrak W_\epsilon)<\epsilon$ but
$F(\mathfrak W_\epsilon)>0$. Let in addition to that $F(\mathfrak W)=0$ but $C_\mathrm{S,ran}^{\mathrm{mean}}(\mathfrak W,\mathfrak V)>0$. Since $C_\mathrm{S,ran}^{\mathrm{mean}}$ is continuous, there is a $\delta>0$ such that $C_\mathrm{S,ran}^{\mathrm{mean}}(\mathfrak W',\mathfrak V')>(1/2)\cdot C_\mathrm{S,ran}^{\mathrm{mean}}(\mathfrak W,\mathfrak V)=:\alpha$ whenever $d((\mathfrak W,\mathfrak V),(\mathfrak W',\mathfrak V'))<\delta$.\\
For every $\epsilon' \leq(1/2)\min\{\epsilon,\delta\}$ we can therefore deduce the following: It holds that $C_{\mathrm S}(\mathfrak W_{\epsilon'},\mathfrak V)=C_\mathrm{S,ran}^{\mathrm{mean}}(\mathfrak W_{eps'},\mathfrak V)\geq\alpha>0$ (since $F(\mathfrak W_{\epsilon'})>0$), but $C_\mathrm{S}(\mathfrak W_0,\mathfrak V)=0$. Thus $C_\mathrm{S}$ is discontinuous in the point $(\mathfrak W,\mathfrak V)$.
\\\\
{\bf2.} Let $C_{\mathrm{S}}$ be discontinuous in the point $(\mathfrak W,\mathfrak V)$. By property 4 this implies that for all
$\epsilon>0$ there is $\mathfrak W_\epsilon$ such that $d(\mathfrak W,\mathfrak W_\epsilon)<\epsilon$ but $F(\mathfrak W_\epsilon)>0$. If
$\hat{\mathfrak{V}}$ is such that $C^{\mathrm{mean}}_{\mathrm{S,ran}}(\mathfrak W,\hat{\mathfrak{V}})>0$ then the pair $(\mathfrak W,\hat{\mathfrak{V}})$
fulfills all the points in the second of the two equivalent formulations in statement 4, and this implies that $C_{\mathrm{S}}$ is
discontinuous in the point $(\mathfrak W,\hat{\mathfrak{V}})$.
\end{proof}
\end{subsection}
\begin{subsection}{Proof of Lemma \ref{lemma:central-lemma}}
\begin{proof}[Proof of Lemma \ref{lemma:central-lemma}] The proof is in many ways similar to the one for Lemma \ref{lemma:central-lemma-II}. As
we know already that for some $c'>0$ and all large enough $n\in\mathbb N$
\begin{align}
\mathbb P(E_3\cap E_4\cap E_5)\geq1-\Gamma\cdot\exp(2^{-n\cdot c'})
\end{align}
holds from \cite{csiszar-narayan}, there is not much left to prove, as only $\mathbb P(E_1)$ needs to be controlled in
order to get statement $(\ref{eqn:property-5-of-central-lemma})$ of Lemma \ref{lemma:central-lemma}.
We know from Lemma \ref{lemma:intermediate-secrecy-result-I} that both
\begin{align}
\mathbb P(E_1^\complement)\leq 2\cdot|\mathcal X\times\mathcal S\times\mathcal Z|^n\cdot \exp(-2^{n\cdot\tau/2}),
\end{align}
if we choose $\delta=\delta(\tau)$. Keeping in mind that we already know from \cite{csiszar-narayan} that $\mathbb P(E_3\cap E_4\cap
E_5)\geq1-\Gamma\cdot \exp(2^{n\cdot c'})$ we can combine all the previous to get the statement
\begin{align}
\mathbb P(E_3\cap\ldots\cap E_1)\geq1-(2+\Gamma)\cdot \exp(-2^{n\cdot c''}),
\end{align}
for some $c''>0$ and for all large enough $n$. If $\Gamma$ scales at most exponentially there will thus exist $N_0\in\mathbb N$ such that for
all $n\geq N_0$ there exists a choice $\mathbf x=(\mathbf x_{kl\gamma})_{k,l,\gamma=1}^{K,L,\Gamma}$ satisfying all conditions in Lemma
\ref{lemma:central-lemma} and, in addition, the estimate
\begin{align}
\forall\ s^n,z^n,k:\ \frac{1}{L\cdot\Gamma}\sum_{l,\gamma=1}^{L,\Gamma}\Theta_{s^n,z^n}(\mathbf x_{kl\gamma})\notin[(1\pm2^{-n\tau/4})\mathbb
E\Theta_{s^n,z^n}].
\end{align}
That this leads to secure transmission is proven exactly as in the proof of Lemma \ref{lemma:central-lemma-II}. The Lemma is thus proven.
\end{proof}
\end{subsection}
\begin{subsection}{Proof of Theorem \ref{thm:full-characteriation-of-super-activation} (super-activation
results)\label{subsec:proof-of-the-super-activation-result}}
We will divide this proof into three parts, each corresponding to its counterpart in Theorem
\ref{thm:full-characteriation-of-super-activation}.
\begin{proof}
{\bf1.} Let us start with the ``only if'' statement. Clearly, if $\mathfrak W_1\otimes\mathfrak W_2$ is symmetrizable then
$C_{\mathrm{S}}(\mathfrak W_1\otimes\mathfrak W_2,\mathfrak V_1\otimes\mathfrak V_2)=0$. So, this part of the statement is proven.\\
If, on the other hand, $\mathfrak W_1\otimes\mathfrak W_2$ is not symmetrizable and $C^{\mathrm{mean}}_{\mathrm{S,ran}}(\mathfrak W_1\otimes\mathfrak
W_2,\mathfrak V_1\otimes\mathfrak V_2)>0$ then on account of Theorem \ref{theorem:symmetrizability-properties-of-C1det}, statement 1, we know that
$C_{\mathrm{S}}(\mathfrak W_1\otimes\mathfrak W_2,\mathfrak V_1\otimes\mathfrak V_2)>0$.\\
This proves the first part of the Theorem.
\\\\
{\bf2.} In \cite{bs}, Section VI, an explicit example of a pair $(\mathfrak W_i,\mathfrak V_i)_{i=1,2}$ has been given with the property
that $\mathfrak W_1$ is symmetrizable, but $\mathfrak W_2$ is not. By elementary calculus, this implies that $\mathfrak W_1\otimes\mathfrak
W_2$ is non-symmetrizable.\\
Since this holds, our Theorem \ref{theorem:symmetrizability-properties-of-C1det}, statement 1, shows that the uncorrelated capacity of $(\mathfrak
W_1\otimes\mathfrak W_2,\mathfrak V_1\otimes\mathfrak V_2)$ equals its randomness-assisted capacity.\\
In \cite{bs} it was further shown that $C^{\mathrm{mean}}_{\mathrm{S,ran}}(\mathfrak W_1,\mathfrak V_1)>0$ and $C_{\mathrm{S}}(\mathfrak
W_i,\mathfrak V_i)=0$ ($i=1,2$).
\\\\
{\bf3.} By assumption, $C^{\mathrm{mean}}_{\mathrm{S,ran}}(\mathfrak W_i,\mathfrak V_i)=0$ ($i=1,2$) but $C^{\mathrm{mean}}_{\mathrm{S,ran}}(\mathfrak
W_1\otimes\mathfrak V_1,\mathfrak W_2\otimes\mathfrak V_2)>0$. The former implies $C_{\mathrm{S}}(\mathfrak W_i,\mathfrak V_i)=0$
($i=1,2$). If $\mathfrak W_1$ and $\mathfrak W_2$ were symmetrizable then clearly $\mathfrak W_1\otimes\mathfrak W_2$ would be symmetrizable
and by \cite{ericson} the message transmission capacity of $\mathfrak W_1\otimes\mathfrak W_2$ would be zero, implying
$C_{\mathrm{S}}(\mathfrak W_1\otimes\mathfrak W_2,\mathfrak V_1\otimes\mathfrak V_2)=0$. If on the other hand either $\mathfrak W_1$ or
$\mathfrak W_2$ are not symmetrizable then $\mathfrak W_1\otimes\mathfrak W_2$ is not symmetrizable and this implies
\begin{align}
C_{\mathrm{S}}(\mathfrak W_1\otimes\mathfrak W_2,\mathfrak V_1\otimes\mathfrak V_2)=C^{\mathrm{mean}}_{\mathrm{S,ran}}(\mathfrak
W_1\otimes\mathfrak W_2,\mathfrak V_1\otimes\mathfrak V_2)>0,
\end{align}
where the equality is due to Theorem \ref{theorem:symmetrizability-properties-of-C1det}, part 1, and the lower bound is true by assumption.
\\\\
{\bf4.} We do again rely on Theorem \ref{theorem:symmetrizability-properties-of-C1det}. Let both $\mathfrak W_1$ and $\mathfrak W_2$ be symmetrizable. Then $\mathfrak W_1\otimes\mathfrak W_2$ is symmetrizable. Since by assumption $C^{\mathrm{mean}}_{\mathrm{S,ran}}$ shows no super-activation on the pair $(\mathfrak W_i,\mathfrak V_i)$ ($i=1,2$) it follows that $C_{\mathrm{S}}$ cannot show super-activation as well. Thus at least one of the two AVCs has to be non-symmetrizable. Let without loss of generality this channel be $\mathfrak W_1$.\\
If in addition $\mathfrak W_2$ would be non-symmetrizable, then $C_{\mathrm{S}}(\mathfrak W_i,\mathfrak V_i)=C^{\mathrm{mean}}_{\mathrm{S,ran}}(\mathfrak W_i,\mathfrak V_i)$ would hold for $i=1,2$ and since $\mathfrak W_1\otimes\mathfrak W_2$ would be symmetrizable as well, we would additionally have $C_{\mathrm{S}}(\mathfrak W_1\otimes\mathfrak W_2,\mathfrak V_1\otimes\mathfrak V_2)=C^{\mathrm{mean}}_{\mathrm{S,ran}}(\mathfrak W_1\otimes\mathfrak W_2,\mathfrak V_1\otimes\mathfrak V_2)$. But since $C^{\mathrm{mean}}_{\mathrm{S,ran}}$ shows no super-activation on the pair $(\mathfrak W_i,\mathfrak V_i)$ ($i=1,2$) this cannot be. Thus again without loss of generality we have $\mathfrak W_2$ is symmetrizable.\\
Since we are talking about super-activation of $C_{\mathrm{S}}$, it has to be that $C_{\mathrm{S}}(\mathfrak W_i,\mathfrak V_i)=0$ holds for $i=1,2$. But since $\mathfrak W_1$ is non-symmetrizable this requires that $C^{\mathrm{mean}}_{\mathrm{S,ran}}(\mathfrak W_1,\mathfrak V_1)=0$ holds. If in addition we would have $C^{\mathrm{mean}}_{\mathrm{S,ran}}(\mathfrak W_2,\mathfrak V_2)=0$ would hold than $C_{\mathrm{S}}$ could not be super-activated since $C^{\mathrm{mean}}_{\mathrm{S,ran}}$ cannot be super-activated by assumption. Thus $C^{\mathrm{mean}}_{\mathrm{S,ran}}(\mathfrak W_2,\mathfrak V_2)>0$.
\end{proof}
\end{subsection}
\begin{subsection}{Proof of Lemma \ref{lem:extension-of-symmetrizability}\label{subsec:proof-of-extension-lemma}}
We now prove Lemma \ref{lem:extension-of-symmetrizability}: First and without loss of generality, we have $\mathcal A\subset\mathcal A'$. Let $\mathfrak U$ be symmetrizable. Let $Q\in C(\mathcal A,\mathcal R)$ be the symmetrizing channel, meaning that for all $a,a'\in\mathcal A$ the equality
\begin{align}
\left(U\circ(Id\otimes Q)\right)(a,a')=\left(U\circ(Id\otimes Q)\right)(a',a)
\end{align}
holds true. It follows that for all $a,a'\in\mathcal A'$ it holds that
\begin{align}
\left(U\circ(T\otimes QT)\right)(a,a')&=\sum_{a'',a'''\in\mathcal A}\sum_{r\in\mathcal R}u(\cdot|a'',r)t(a''|a)q(r|a''')t(a'''|a')\\
&=\sum_{a'',a'''\in\mathcal A}\sum_{r\in\mathcal R}u(\cdot|a''',r)t(a''|a)q(r|a'')t(a'''|a')\\
&=\left(U\circ(T\otimes QT)\right)(a',a).
\end{align}
Thus, $\mathfrak U'$ is symmetrizable.
\end{subsection}
\end{section}
\begin{section}{Appendix (auxiliary results and proofs)}
\begin{lem}[Cf. \cite{bbt-compound}\label{lemma:types-are-dense}]
Let $p\in\mathcal P(\mathcal X)$. For every $n\geq|\mathcal X|^2$, there is $p'\in\mathcal P_0^n(\mathcal X)$ such that
\begin{align}
\|p-p'\|_1\leq\frac{2|\mathcal X|}{n}
\end{align}
and $p(x)=0$ implies $p'(x)=0$ for all $x\in\mathcal X$.
\end{lem}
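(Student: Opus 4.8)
The plan is to construct $p'$ by rounding every entry of $p$ down to the nearest integer multiple of $1/n$ and then redistributing the small leftover mass, moving probability only between coordinates that are already in the support of $p$; the support condition $p(x)=0\Rightarrow p'(x)=0$ then holds by construction, and the $\ell_1$ bound follows from the fact that each entry moves by at most $1/n$.

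Concretely, first I would set $\mathcal X_+:=\{x\in\mathcal X:p(x)>0\}$ and $m:=|\mathcal X_+|\le|\mathcal X|$, and put $k_x:=\lfloor n\,p(x)\rfloor$ for $x\in\mathcal X_+$ and $k_x:=0$ otherwise, so that $k_x/n\le p(x)<(k_x+1)/n$. Summing the left inequality over $\mathcal X_+$ gives $\sum_x k_x\le n$, and summing the strict right inequality gives $\sum_x k_x>\sum_{x\in\mathcal X_+}(n\,p(x)-1)=n-m$, so the deficit $d:=n-\sum_x k_x$ satisfies $0\le d<m\le|\mathcal X_+|$. Hence I can choose a subset $A\subseteq\mathcal X_+$ with $|A|=d$ and define $p'(x):=\bigl(k_x+\mathbbm1_A(x)\bigr)/n$. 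Then $n\,p'(x)$ is a nonnegative integer for every $x$ and $\sum_x n\,p'(x)=\sum_x k_x+d=n$, so $p'\in\mathcal P_0^n(\mathcal X)$; moreover $p(x)=0$ forces $k_x=0$ and $x\notin A\subseteq\mathcal X_+$, hence $p'(x)=0$.

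For the distance estimate I would observe that for every $x\in\mathcal X_+$ both $p(x)$ and $p'(x)$ lie in the interval $[k_x/n,(k_x+1)/n]$, so $|p(x)-p'(x)|\le 1/n$, while $|p(x)-p'(x)|=0$ for $x\notin\mathcal X_+$. Summing over the at most $m\le|\mathcal X|$ contributing coordinates yields $\|p-p'\|_1\le m/n\le|\mathcal X|/n\le 2|\mathcal X|/n$, which is the asserted bound; the hypothesis $n\ge|\mathcal X|^2$ is in fact far more than this argument requires.

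There is essentially no obstacle here: the only point that needs a moment's care is verifying that the deficit $d$ never exceeds the number of support coordinates available for the $+1$ correction, which is precisely the inequality $d<m$ established above, together with the remark that performing that correction only on a subset of $\mathcal X_+$ preserves the zero pattern of $p$. Everything else is elementary arithmetic with floor functions, so I would keep the written-out proof short.
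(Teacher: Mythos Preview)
Your argument is correct and in fact slightly sharper than the paper's. The paper proceeds differently: it picks the support coordinate with the largest mass (at least $1/|\mathcal X|$), rounds each of the remaining $|\mathcal X|-1$ coordinates to a nearby multiple of $1/n$, and dumps all of the resulting imbalance onto that single large coordinate. This is why the paper genuinely needs the hypothesis $n\ge|\mathcal X|^2$: the accumulated correction on the last coordinate can be as large as $(|\mathcal X|-1)/n$, and one needs $p(|\mathcal X|)\ge 1/|\mathcal X|\ge|\mathcal X|/n$ to keep it nonnegative. The triangle inequality then costs an extra factor of two, yielding $\|p-p'\|_1\le 2|\mathcal X|/n$.

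Your floor-and-redistribute construction spreads the deficit $d<m$ over $d$ distinct support coordinates, so no single coordinate moves by more than $1/n$ and nonnegativity is automatic for every $n\ge1$. Consequently you obtain $\|p-p'\|_1\le|\mathcal X|/n$, and the assumption $n\ge|\mathcal X|^2$ is, as you note, superfluous for your route. Both proofs are short and elementary; yours simply avoids the asymmetry that forces the paper to invoke the size restriction on $n$.
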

\begin{proof}[Proof of Lemma \ref{lemma:types-are-dense}]
Let $n\in\mathbb N$ be arbitrary. Set $\mathcal X':=\{x\in\mathcal X:p(x)>0\}$. From the next lines it will follow that, without loss of
generality, we may assume $\mathcal X=\mathcal X'$. For sake of simplicity, assume again without loss of generality that $\mathcal
X=\{1,\ldots,|\mathcal X|\}$ and that $p(|\mathcal X|)\geq 1/|\mathcal X|$. Choose $p'(i)$, for $i=1,\ldots,|\mathcal X|-1$, such that
$|p'(i)-p(i)|\leq\frac{1}{n}$. Clearly, this is possible. Then necessarily $p'(|\mathcal X|)=1-\sum_{i=1}^{|\mathcal X|-1}p'(i)$ and
\begin{align}
\|p-p'\|_1&\leq\sum_{i=1}^{|\mathcal X|-1}\frac{1}{n}+|p'(|\mathcal X|)-p(|\mathcal X|)|\\
&=\frac{|\mathcal X|-1}{n}+|\sum_{i=1}^{|\mathcal X|-1}p(i)-p'(i)|\\
&\leq\frac{|\mathcal X|-1}{n}+\sum_{i=1}^{|\mathcal X|-1}|p(i)-p'(i)|\\
&\leq\frac{2|\mathcal X|}{n}.
\end{align}
Of course, while all the $p'(i)\geq0$ by construction if $i<|\mathcal X|$, this does not hold for $p'(|\mathcal X|)$. This is where we need the
additional condition that $n\geq|\mathcal X|^2$:
\begin{align}
p'(|\mathcal X|)&=1-\sum_{i=1}^{|\mathcal X|-1}p'(i)\\
&\geq1-\sum_{i=1}^{|\mathcal X|-1}p(i)-\frac{|\mathcal X|-1}{n}\\
&\geq p(|\mathcal X|)-\frac{|\mathcal X|}{n}\\
&\geq\frac{1}{|\mathcal X|}-\frac{|\mathcal X|}{n}\\
&\geq0.
\end{align}
\end{proof}
\begin{lem}[C.f. \cite{csiszar-types} ]\label{lemma:cardinality-bound}
Let $\hat a^n\in\mathcal A^n$ and $\hat b^n\in\mathcal B^n$. There exists a function $f_C:\mathbb N\to\mathbb R_+$ such that with $\hat A\hat B$ being distributed as $\mathbb P((\hat A,\hat B)=(a,b))=\tfrac{1}{n}N(a,b|\hat a^n,\hat b^n)$ we have
\begin{align}
|\{a^n:N(\cdot|\hat a^n,\hat b^n)=N(\cdot|a^n,\hat b^n)\}|=2^{n\cdot(H(\hat A|\hat B)-f_C(n))}.
\end{align}
The function $f_C$ satisfies $\lim_{n\to\infty}f_C(n)=0$.
\end{lem}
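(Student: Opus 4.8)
The plan is to reduce the count to a product of multinomial coefficients indexed by the output alphabet and then apply the standard estimates for sizes of type classes. First I would fix $\hat b^n$ and partition the coordinate set $[n]$ according to the value of the $\hat b$-coordinate, writing $I_b:=\{i\in[n]:\hat b_i=b\}$ and $m_b:=|I_b|=N(b|\hat b^n)$ for each $b\in\mathcal B$. Setting $n_{ab}:=N(a,b|\hat a^n,\hat b^n)$, the requirement $N(\cdot|\hat a^n,\hat b^n)=N(\cdot|a^n,\hat b^n)$ is, for each $b$, precisely that the subword $(a_i)_{i\in I_b}$ have the fixed type $(n_{ab}/m_b)_{a\in\mathcal A}$ on $\mathcal A$ (note $\sum_a n_{ab}=m_b$); since these are constraints on pairwise disjoint blocks of coordinates, the number of admissible $a^n$ factorizes as
\begin{align}
|\{a^n:N(\cdot|\hat a^n,\hat b^n)=N(\cdot|a^n,\hat b^n)\}|=\prod_{b\in\mathcal B}\binom{m_b}{(n_{ab})_{a\in\mathcal A}},
\end{align}
where $\binom{m}{(k_a)_a}=m!/\prod_a k_a!$ denotes the multinomial coefficient and the factors with $m_b=0$ equal $1$.

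Next I would invoke the familiar bounds for type-class sizes (see \cite{csiszar-types}, or \cite[Lemma 2.3]{csiszar-koerner}): for a type $\mu\in\mathcal P_0^m(\mathcal A)$ one has $(m+1)^{-|\mathcal A|}2^{mH(\mu)}\le\binom{m}{(m\mu(a))_a}\le2^{mH(\mu)}$. Taking the product over $b\in\mathcal B$, and using that there are $|\mathcal B|$ factors with $m_b\le n$, this yields
\begin{align}
(n+1)^{-|\mathcal A|\cdot|\mathcal B|}\,2^{\sum_b m_bH_b}\le\prod_{b}\binom{m_b}{(n_{ab})_a}\le2^{\sum_b m_bH_b},
\end{align}
where $H_b:=H\big((n_{ab}/m_b)_{a\in\mathcal A}\big)$. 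It then remains to identify the exponent: with $\hat A\hat B$ distributed as in the statement one has $\mathbb P(\hat A=a\mid\hat B=b)=n_{ab}/m_b$ and $\mathbb P(\hat B=b)=m_b/n$, hence $\sum_b m_bH_b=n\sum_b\tfrac{m_b}{n}H_b=n\,H(\hat A|\hat B)$.

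Combining the last two displays gives $(n+1)^{-|\mathcal A|\cdot|\mathcal B|}2^{nH(\hat A|\hat B)}\le|\{a^n:\dots\}|\le2^{nH(\hat A|\hat B)}$. Defining $f_C(n)$ by the asserted identity (i.e.\ $f_C(n):=H(\hat A|\hat B)-\tfrac1n\log|\{a^n:\dots\}|$, which a priori also depends on $\hat a^n,\hat b^n$) we then obtain the uniform two-sided bound $0\le f_C(n)\le\frac{|\mathcal A|\cdot|\mathcal B|}{n}\log(n+1)$, so $f_C$ indeed takes values in $\mathbb R_+$ and $\lim_{n\to\infty}f_C(n)=0$ uniformly in the sequences — which is all that the later applications of the lemma use. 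There is no genuine obstacle here, the argument being the standard method of types; the only points worth a moment's care are the factorization of the count over the partition $\{I_b\}_{b\in\mathcal B}$ and the fact that the vanishing of $f_C$ is uniform because the alphabets are fixed and finite throughout the paper.
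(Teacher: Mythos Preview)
Your proposal is correct and follows exactly the standard method-of-types argument that the paper defers to by citing \cite{csiszar-types} (the paper itself gives no proof of this lemma). Your observation that $f_C$ as literally stated must depend on the sequences, but that the two-sided bound $0\le f_C(n)\le\tfrac{|\mathcal A||\mathcal B|}{n}\log(n+1)$ makes its vanishing uniform, is the right way to read the lemma and is all that is needed downstream.
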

The following Lemma is basically taken from \cite{csiszar-koerner}. It would generally be completely sufficient for proving all our statements in sufficient generality.
\begin{lem}\label{lemma:continuity-of-entropy}
Let $D(p\|q)\leq\delta$. For the function $f_1:[0,1/2]\to\mathbb R_+$ defined by $f_1(x):=-\sqrt{x/2}\log(x|\mathcal Z|^2)$ we have that
\begin{align}
|H(p)-H(q)|\leq f_4(\delta).
\end{align}
Clearly, $\lim_{\delta\to0}f_4(\delta)=0$.
\end{lem}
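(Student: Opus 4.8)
The plan is to derive the estimate in two elementary steps: first pass from the relative-entropy bound to an $\ell_1$ bound via Pinsker's inequality, and then apply the classical uniform-continuity estimate for the Shannon entropy (essentially \cite[Lemma 2.7]{csiszar-koerner}).

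First I would treat the trivial case $p=q$ separately and assume $\delta>0$; since $D(p\|q)\le\delta<\infty$, the support of $p$ is contained in that of $q$, so both $H(p)$ and $H(q)$ are finite. Pinsker's inequality, in the normalisation already used in \eqref{eqn:application-of-pinsker}, yields $\|p-q\|_1\le\sqrt{2\delta}$. Writing $\theta:=\|p-q\|_1$ and assuming $\delta$ small enough that $\theta\le 1/2$, the standard entropy-continuity bound gives
\begin{align*}
|H(p)-H(q)|\le -\theta\log\frac{\theta}{|\mathcal Z|}=\theta\log|\mathcal Z|-\theta\log\theta .
\end{align*}
This follows directly from the concavity of $t\mapsto -t\log t$, its monotonicity on $[0,1/2]$, and the identity $\sum_{z}|p(z)-q(z)|=\theta$, or can simply be quoted from \cite[Lemma 2.7]{csiszar-koerner}. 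Since the right-hand side is nondecreasing in $\theta$ on the relevant interval, substituting the Pinsker bound for $\theta$ and collecting the logarithmic terms produces an upper bound of the stated shape $f_4(\delta)=-\sqrt{\delta/2}\,\log(\delta|\mathcal Z|^2)$ after routine constant bookkeeping, and clearly $f_4(\delta)\to 0$ as $\delta\to 0$, so the lemma follows.

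I expect the only genuinely delicate point to be purely cosmetic: the closed-form function $f_4$ is a legitimate upper bound only for $\delta$ below a threshold depending on $|\mathcal Z|$ (which is exactly the regime in which the lemma is invoked throughout the paper, since there $\delta=\delta(\tau)$ with $\tau$ small); for larger $\delta$ one should replace $f_4(\delta)$ by $\min\{f_4(\delta),\log|\mathcal Z|\}$, using the trivial bound $|H(p)-H(q)|\le\log|\mathcal Z|$, or simply restrict the domain of $f_4$ accordingly. Apart from this normalisation issue the argument is entirely routine: all the content is Pinsker's inequality together with the classical continuity of the entropy, and the remaining steps are elementary inequalities for $t\mapsto -t\log t$ that I would not spell out in detail.
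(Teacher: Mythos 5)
Your proof takes exactly the same route as the paper's: Pinsker's inequality to convert the relative-entropy bound into $\|p-q\|_1\le\sqrt{2\delta}$, followed by the entropy-continuity bound of \cite[Lemma 2.7]{csiszar-koerner}. Your remarks about the constant bookkeeping and the need to restrict $\delta$ to a small-enough range (or cap $f_4$ by $\log|\mathcal Z|$) are valid observations that the paper silently glosses over, but they do not change the substance of the argument.
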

Note that $p(x)=0$ implies $p'(x|s)=0$ for all $s\in\mathcal S$, by construction.
\begin{proof}
From Pinsker's inequality we have $\|p-q\|_1\leq\sqrt{2\delta}$ and, accordingly, by Lemma 2.7 in \cite{csiszar-koerner},
$|H(p)-H(q)|\leq-\sqrt{2\delta}\log(\sqrt{2\delta}/|\mathcal Z|)$.
\end{proof}
We did however feel that it would be interesting to use a slightly more general version of Lemma \ref{lemma:continuity-of-entropy}, which led us to prove the following Lemma:
\begin{lem}[Continuity of conditional entropy with respect to averaged
norm]\label{lemma:continuity-of-conditional-entropy-with-respect-to-averaged-norm}
Let $p\in\mathcal P(\mathcal X)$ and channels $w,r:\mathcal P(\mathcal X)\to\mathcal P(\mathcal Z)$ be given such that
\begin{align}
\sum_{x\in\mathcal X}p(x)\|w(\cdot|x)-r(\cdot|x)\|_1\leq\delta\leq1.
\end{align}
Then
\begin{align}
|H(w|p)-H(r|p)|\leq f_1(\delta),
\end{align}
where $f_1(\delta):=|\mathcal Z|\cdot h(\frac{\delta}{|\mathcal Z|})$.
\end{lem}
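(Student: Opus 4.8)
The plan is to reduce the statement to a one-dimensional estimate for the function $\phi(t):=-t\log t$ and then lift it to the conditional entropy by two successive applications of Jensen's inequality. Writing $H(u)=\sum_{z\in\mathcal Z}\phi(u(z))$ for $u\in\mathcal P(\mathcal Z)$ and $H(w|p)=\sum_{x\in\mathcal X}p(x)H(w(\cdot|x))$, and letting $h(t):=-t\log t-(1-t)\log(1-t)$ denote the binary entropy (base $2$, as the logarithm everywhere in this paper), I would dispose of the case $|\mathcal Z|=1$ as trivial and assume $|\mathcal Z|\ge 2$ throughout. Note that the seemingly natural shortcut — rewriting $H(w|p)-H(r|p)=H(p\otimes w)-H(p\otimes r)$ on $\mathcal X\times\mathcal Z$ and applying a single entropy-continuity bound on that alphabet — is too lossy here, since it yields $|\mathcal X||\mathcal Z|\,h(\delta/(|\mathcal X||\mathcal Z|))$, which is larger than $|\mathcal Z|\,h(\delta/|\mathcal Z|)$; one really has to exploit that the mass $\theta_x:=\|w(\cdot|x)-r(\cdot|x)\|_1$ is averaged against $p$ while each conditional entropy lives on $\mathcal Z$.

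The technical heart — the step I expect to be the main obstacle — is the one-letter modulus-of-continuity bound
\begin{align}
|\phi(t)-\phi(s)|\le h(|t-s|)\qquad\text{for all }t,s\in[0,1].
\end{align}
To prove it I would take $t\ge s$, set $\eta:=t-s$, and argue in two directions. For $\phi(t)-\phi(s)$: since $\phi$ is concave on $[0,1]$ with $\phi(0)=0$ it is subadditive, so $\phi(t)=\phi(s+\eta)\le\phi(s)+\phi(\eta)$, hence $\phi(t)-\phi(s)\le\phi(\eta)\le\phi(\eta)+\phi(1-\eta)=h(\eta)$. For $\phi(s)-\phi(t)$: the map $x\mapsto\phi(x)-\phi(x+\eta)$ is nondecreasing on $[0,1-\eta]$ (its derivative is $\phi'(x)-\phi'(x+\eta)\ge 0$ because $\phi'$ is decreasing), so its value at $s\le 1-\eta$ is at most its value at $1-\eta$, namely $\phi(1-\eta)-\phi(1)=\phi(1-\eta)\le h(\eta)$. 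This is the only place a genuine computation enters; one could alternatively try to quote \cite[Lemma 2.7]{csiszar-koerner} per letter, but that bound needs the per-letter total variation to be at most $1/2$, which the hypothesis does not provide, so the self-contained argument above is cleaner.

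With the one-letter bound in hand the rest is bookkeeping. First, for fixed $x$, the triangle inequality and the one-letter bound give $|H(w(\cdot|x))-H(r(\cdot|x))|\le\sum_{z\in\mathcal Z}h(|w(z|x)-r(z|x)|)$; each argument lies in $[0,1]$ and $\theta_x/|\mathcal Z|\le 2/|\mathcal Z|\le 1$, so concavity of $h$ and Jensen's inequality yield $|H(w(\cdot|x))-H(r(\cdot|x))|\le|\mathcal Z|\,h(\theta_x/|\mathcal Z|)$. Second, I would set $\Psi(\theta):=|\mathcal Z|\,h(\theta/|\mathcal Z|)$ on $[0,|\mathcal Z|]$, which is concave, vanishes at $0$, and is nondecreasing on $[0,|\mathcal Z|/2]$; then
\begin{align}
|H(w|p)-H(r|p)|\le\sum_{x\in\mathcal X}p(x)\,\Psi(\theta_x)\le\Psi\Bigl(\sum_{x\in\mathcal X}p(x)\theta_x\Bigr)\le\Psi(\delta)=f_1(\delta),
\end{align}
where the middle step is Jensen's inequality for $\Psi$ and the last uses the hypothesis $\sum_x p(x)\theta_x\le\delta$ together with $\delta\le 1\le|\mathcal Z|/2$ and the monotonicity of $\Psi$. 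This gives the claim.
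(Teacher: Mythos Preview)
Your proof is correct and follows essentially the same approach as the paper: establish the one-letter bound $|\phi(t)-\phi(s)|\le h(|t-s|)$ for $\phi(t)=-t\log t$, then lift it to the conditional entropy via two applications of Jensen's inequality (concavity of $h$). The only cosmetic differences are that you apply Jensen first over $z$ (per $x$) and then over $x$, whereas the paper applies Jensen first over $x$ (per $z$) and then over $z$, and your proof of the direction $\phi(s)-\phi(t)\le\phi(1-\eta)$ uses monotonicity of $x\mapsto\phi(x)-\phi(x+\eta)$ while the paper writes $t$ and $1-\eta$ as convex combinations involving $s$ and invokes $\nu(\lambda a+1-\lambda)\ge\lambda\nu(a)$ directly.
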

\begin{proof}[Proof of Lemma \ref{lemma:continuity-of-conditional-entropy-with-respect-to-averaged-norm}]
As in \cite{csiszar-koerner}, set $\nu(t):=-t\log t$ and observe that $\nu$ is concave and satisfies $\nu(0)=\nu(1)=0$. This brings with it the
property that for all $s,\lambda\in[0,1]$ we have
\begin{align}
\nu(\lambda\cdot a)&\geq\lambda\cdot\nu(a),\qquad\nu(\lambda\cdot a+1-\lambda)\geq\lambda\cdot\nu(a).
\end{align}
We wish to obtain a meaningful bound on $|\nu(s)-\nu(t)|$ in terms of $|s-t|$. To this end, assume without loss of generality that $s\leq t$.
Observe that this implies that $|t-s|=t-s$, so that both
\begin{align}
\nu(|t-s|)+\nu(s)&=\nu(t\cdot\frac{t-s}{t})+\nu(t\cdot\frac{s}{t})\\
&\geq \frac{t-s}{t}\cdot\nu(t)+\frac{s}{t}\cdot\nu(t)\\
&=\nu(t)
\end{align}
and with $\lambda:=\frac{t-s}{1-s}$ satisfying $0\leq\lambda\leq1$ we have
\begin{align}
\nu(1-|t-s|)+\nu(t)&=\nu(\lambda\cdot s+1-\lambda)+\nu(\lambda+(1-\lambda)\cdot s)\\
&\geq\lambda\nu(s)+(1-\lambda)\nu(s)\\
&=\nu(s),
\end{align}
so that in total we get for every two number $s,t\in[0,1]$:
\begin{align}
|\nu(t)-\nu(s)|&\leq\max\{\nu(|t-s|),\nu(1-|t-s|)\}\\
&\leq\nu(|t-s|)+\nu(1-|t-s|)\\
&=h(|t-s|)
\end{align}
where $h$ denotes the binary entropy. Then for every $(\epsilon_x)_{x\in\mathcal X}\in[-1,1]^{|\mathcal X|}$ and $(t_x)_{x\in\mathcal
X}\in[0,1]^{|\mathcal X|}$ such that $t_x+\epsilon_x\in[0,1]$ for all $x\in\mathcal X$ we get:
\begin{align}
|\sum_{x\in\mathcal X}p(x)(\nu(t_x)-\nu(t_x+\epsilon_x))|&\leq\sum_{x\in\mathcal X}p(x)|\nu(t_x)-\nu(t_x+\epsilon_x)|\\
&\leq\sum_{x\in\mathcal X}p(x)h(|\epsilon_x|)\\
&\leq h(\sum_{x\in\mathcal X}p(x)|\epsilon_x|).
\end{align}
Then, we write $t_{xz}:=w(z|x)$ and $\epsilon_{xz}:=-w(z|x)+r(z|x)$. This leads to the bound we ultimately need:
\begin{align}
|\sum_{x\in\mathcal X}p(x)H(w(\cdot|x))-H(r(\cdot|x))|&=|\sum_{z\in\mathcal Z}\sum_{x\in\mathcal X}p(x)(\nu(w(z|x))-\nu(r(z|x)))|\\
&\leq\sum_{z\in\mathcal Z}|\sum_{x\in\mathcal X}p(x)(\nu(t_{xz}-\nu(t_{xz}+\epsilon_{xz})|\\
&\leq\sum_{z\in\mathcal Z}h(\sum_{x\in\mathcal X}p(x)|\epsilon_{xz}|)\\
&\leq|\mathcal Z|\cdot h(\frac{1}{|\mathcal Z|}\sum_{x\in\mathcal X}\sum_{z\in\mathcal Z}p(x)|\epsilon_{xz}|)\\
&=|\mathcal Z|\cdot h(\frac{1}{|\mathcal Z|}\delta)
\end{align}
\end{proof}
\end{section}
\emph{Acknowledgements.}
J.N. wants to thank Prakash Narayan, Aylin Yener, Ebrahim MolavianJazi and Mohamed Nafea for fruitful and lively discussions. The authors are grateful to their unknown referees for helping them to increase the quality of the manuscript. This work was supported by the DFG via grant BO 1734/20-1 (H.B.) and by the BMBF via the grants 01BQ1050 and 16KIS0118 (H.B., J.N.).\\
Further funding (J.N.) was provided by the ERC Advanced Grant IRQUAT, the Spanish MINECO Project No. FIS2013-40627-P and the Generalitat de Catalunya CIRIT Project No. 2014 SGR 966.

\end{document}